\newtheorem{theorem}{Theorem}[section]
\newtheorem{proposition}[theorem]{Proposition}
\newtheorem{lemma}[theorem]{Lemma}
\theoremstyle{definition}
\newtheorem{definition}[theorem]{Definition}
\newtheorem{example}[theorem]{Example}
\newcommand{\arxiv}[1]{{#1}}
\newcommand{\conference}[1]{\ignore{#1}}
\newcommand{\yes}{\textsf{yes} }
\newcommand{\no}{\textsf{no} }
\newcommand{\id}{\textnormal{id}}
\DeclareRobustCommand{\rchi}{{\mathpalette\irchi\relax}} % inline \chi
\newcommand{\irchi}[2]{\raisebox{\depth}{$#1\chi$}} % inner command, used by \rchi
\newcommand{\caln}{\mathcal{N}}
\newcommand{\fold}[1]{\overline{#1}}
\newcommand{\sort}{\textnormal{sort}}
\newcommand{\sig}{\sigma}
\newcommand{\typ}{\tau}
\newcommand{\pcsp}{\textnormal{PCSP}}
\newcommand{\ex}[1]{\mathbb{E}_{#1}}
\newcommand{\Ex}[1]{\mathop{\mathbb{E}}_{#1}}
\newcommand{\bfa}{\mathbf{A}}
\newcommand{\bfb}{\mathbf{B}}
\newcommand{\ab}{(\bfa,\bfb)}
\newcommand{\abp}{(\bfa',\bfb')}
\newcommand{\aaa}{(\bfa,\bfa)}
\newcommand{\aap}{(\bfa',\bfa')}
\newcommand{\abt}{(\bfa,\bfb, \IO)}
\newcommand{\abtp}{(\bfa',\bfb', \IO')}
\newcommand{\abcs}{(\bfa,\bfb,c,s)}
\newcommand{\abcsp}{(\bfa',\bfb',c',s')}
\newcommand{\abzz}{(\bfa,\bfb,0,0)}
\newcommand{\abzzp}{(\bfa',\bfb',0,0)}
\newcommand{\ba}{\mathbf{a}}
\newcommand{\ar}{\textnormal{ar}}
\newcommand{\supp}{\mathrm{Supp}}
\newcommand{\mnn}[1]{\mathscr{#1}}
\newcommand{\feas}{\textnormal{feas}}
\newcommand{\qq}{\mathbb{Q}}
\newcommand{\rr}{\mathbb{R}}
\newcommand{\lin}{\textnormal{3LIN2}}
\newcommand{\klin}[1]{#1\textnormal{LIN2}}
\newcommand{\glc}{\textnormal{GLC}}
\newcommand{\lc}{\textnormal{LC}}
\newcommand{\PCSP}{\textnormal{PCSP}}
\newcommand{\tuple}[1]{\mathbf{#1}}
\newcommand{\cols}{\mathrm{cols}}
\newcommand{\rows}{\mathrm{rows}}
\newcommand{\col}[2]{\mathrm{col}_{#1}(#2)}
\newcommand{\row}[2]{\mathrm{row}_{#1}(#2)}
\newcommand{\Mat}[2]{\mathrm{Mat}(#1,#2)}
\newcommand{\proj}[2]{\mathrm{proj}^{#1}_{#2}}
\newcommand{\pol}{\mathrm{Pol}}
\newcommand{\FinSet}{\mathrm{FinSet}}
\newcommand{\ii}{\mathrm{in}}
\newcommand{\oo}{\mathrm{out}}
\newcommand{\OmegaI}{\Omega^{\mathrm{in}}}
\newcommand{\OmegaO}{\Omega^{\mathrm{out}}}
\newcommand{\OmegaIp}{{\Omega'}^{\mathrm{in}}}
\newcommand{\OmegaOp}{{\Omega'}^{\mathrm{out}}}
\newcommand{\weight}{w}
\newcommand{\qqnonneg}{\qq^{+}_0}
\newcommand{\qqpos}{\qq^{+}}
\newcommand{\qqinf}{\overline{\qq}}
\newcommand{\mc}{\mathrm{MC}}
\newcommand{\IO}{\kappa}
\newcommand{\IOI}{\theta^{\mathrm{in}}}
\newcommand{\IOO}{\theta^{\mathrm{out}}}
\newcommand{\shiftI}{\delta^{\mathrm{in}}}
\newcommand{\shiftO}{\delta^{\mathrm{out}}}
\newcommand{\shiftIP}{\delta^{\mathrm{in}+}}
\newcommand{\shiftOP}{\delta^{\mathrm{out}+}}
\newcommand{\shiftIN}{\delta^{\mathrm{in}-}}
\newcommand{\shiftON}{\delta^{\mathrm{out}-}}
\newcommand{\scale}{\gamma}
\newcommand{\polfeas}{\mathrm{PolFeas}}
\newcommand{\multipol}{\mathrm{Plu}}
\newcommand{\iopol}{\IO\mbox{-}\mathrm{Pol}}
\newcommand{\vmnn}[1]{\mathbb{#1}}
\newcommand{\vmc}{\mathrm{VMC}}
\newcommand{\incl}{\mathrm{incl}}
\newcommand{\GM}[1]{\mathrm{CM}[#1]}
\newcommand{\seltup}[1]{#1^{*}}
\DeclareMathOperator{\symdif}{\Delta}
\newcommand{\oddim}[1]{#1^2}
\newcommand{\ignore}[1]{}
\begin{document}

\author{Libor Barto\\
Charles University
\and
Silvia Butti\\
University of Oxford
\and
Alexandr Kazda\\
Charles University
\and
Caterina Viola\\
University of Catania
\and
Stanislav \v{Z}ivn\'y\\
University of Oxford
}

\title{The Rise of Plurimorphisms:\\ Algebraic Approach to Approximation\thanks{An extended abstract of this work appeared in the Proceedings of LICS 2024~\cite{Barto24:lics}. Libor Barto and Caterina Viola were funded by the European Union (ERC, CoCoSym, 771005). Libor Barto was also funded by (ERC, POCOCOP, 101071674). Caterina Viola was also funded by ICSC – Centro Nazionale di Ricerca in High-Performance Computing, Big Data and Quantum Computing, co-founded by European Union - NextGenerationEU. Views and opinions expressed are however those of the authors only and do not necessarily reflect those of the European Union or the European Research Council Executive Agency. Neither the European Union nor the granting authority can be held responsible for them. 
This work was supported by UKRI EP/X024431/1. For the purpose of Open Access, the authors have applied a CC BY public copyright licence to any Author Accepted Manuscript version arising from this submission. All data is provided in full in the results section of this paper.}}
\date{}
\maketitle

\begin{abstract}
Following the success of the so-called algebraic approach to the study of decision constraint satisfaction problems (CSPs), exact optimization of valued CSPs, and most recently promise CSPs, we propose an algebraic framework for valued promise CSPs. 

To every valued promise CSP we associate an algebraic object, its so-called valued minion. Our main result shows that the existence of a homomorphism between the associated valued minions implies a polynomial-time reduction between the original CSPs. We also show that this general reduction theorem includes important inapproximability results, for instance, the inapproximability of almost solvable systems of linear equations beyond the random assignment threshold. 

\end{abstract}

\section{Introduction}

What mathematical structure captures efficient computation? Answering this
question is the holy grail of theoretical computer science. 
Constraint Satisfaction Problems, or CSPs for short, provide an excellent
framework
to attempt this ambitious research endeavour. On the one hand,
CSPs are general enough to include many fundamental problems of interest and
allow for general patterns to occur, which rarely happens when studying concrete
problems in isolation. On the other hand,  CSPs are structured enough so that
interesting and nontrivial results can be established. Indeed, while CSPs do not capture\footnote{Up to polynomial-time Turing reductions, CSPs on infinite domains \emph{do} capture all computational problems~\cite{Bodirsky08:icalp}, cf. also~\cite{Gillibert22:sicomp}.} all computational problems, both algorithmic and hardness techniques developed 
in the context of constraint satisfaction
are often used beyond the realm of CSPs.

Putting the area of constraint solving aside, there are two main strands of
research on the computational complexity of CSPs. The first strand studies decision CSPs on
finite~\cite{Feder98:monotone}
and infinite~\cite{bodirsky2021complexity} domains, exact solvability of optimization CSPs (known as valued
CSPs~\cite{DBLP:journals/siamcomp/CohenCCJZ13}), and most recently qualitative approximation of decision CSPs (known as
promise CSPs, or PCSPs for short~\cite{AGH17,BG21,BartoBKO21}). The highlights of this strand include, firstly, complexity classifications of 
CSPs, e.g.,
dichotomies for robust solvability of CSPs~\cite{Barto16:sicomp},
valued CSPs~\cite{tz16:jacm-complexity,KozikO15,Kolmogorov17:sicomp},
infinite-domain CSPs~\cite{Bodirsky10:jacm,Bodirsky15:jacm,BMM18}, promise CSPs~\cite{BG21,Ficak19:icalp},
and in particular a dichotomy for all finite-domain CSPs~\cite{Bulatov17:focs,Zhuk20:jacm}, which 
gave a positive answer to the long-standing Feder-Vardi conjecture~\cite{Feder98:monotone}.
Secondly, characterizations of the power of various algorithms,
e.g.,~\cite{Idziak10:siam,Berman10:few,Barto14:jacm,ktz15:sicomp,tz17:sicomp,cz23sicomp:clap,Bodirsky23:jacm}.

The second strand studies quantitative 
approximation of CSPs. The
highlights include, e.g., the PCP
theorem~\cite{Arora98:jacm-probabilistic,Arora98:jacm-proof,Dinur07:jacm},
H{\aa}stad's optimal inapproximability
results~\cite{DBLP:journals/jacm/Hastad01}, Raghavendra's result that a 
semidefinite programming relaxation is optimal for all CSPs~\cite{Raghavendra08:everycsp} under Khot's Unique Games
Conjecture~\cite{Khot02stoc}, inapproximability of certain valued CSPs (under the UGC)~\cite{KhotR08}, 
optimal inapproximability of certain MaxCSPs~\cite{Chan15:jacm},
or the recent line of work on inapproximability of perfectly satisfiable  MaxCSPs~\cite{Bhangale22:stoc,Bhangale23:stocII,Bhangale23:stocIII}.

While the two strands use different mathematical tools (algebraic vs. analytical), there are some common
features, e.g., dictatorship testing 
plays an important role 
in both PCSPs and approximability. Our
paper confirms that this is not a coincidence.

With the general goal to better 
understand what makes computational problems easy
or hard, we aim to provide uniform descriptions of algorithms, 
tractability boundaries, and reductions. For
CSPs studied in the first strand described above, all of these can be described uniformly by means of \emph{polymorphisms}, which can be,
informally, thought of as multivariate 
symmetries of solutions spaces of CSPs
(although the precise definitions and conditions depend on the type of
considered CSPs). 
Interestingly, it was observed a posteriori in~\cite{Brown-CohenRaghavendra15} that Raghavendra's
result from~\cite{Raghavendra08:everycsp}, which falls in the second strand, can be phrased in terms of (a certain
type of) polymorphisms, although it remained unclear whether and how polymorphisms determine complexity without the Unique Games Conjecture.
The notion of polymorphisms coming from~\cite{Brown-CohenRaghavendra15} is close to ours, cf. the discussion after~\cref{prop:alt-pol}.

\smallskip
In the present paper, we introduce and initiate the study of the very general framework of valued PCSPs.
It includes, as special cases, (non-valued) PCSPs (and thus also CSPs),
valued CSPs, approximation of CSPs (both constant factor and gap variants),
Gap Label Cover, and 
Unique Games.
The only previous works on valued PCSPs are the algorithmic results in~\cite{ViolaZ21,BartoButti2022} 
and the unpublished manuscript of Kazda developing an algebraic theory for constant factor approximation of valued PCSPs~\cite{Kazda21}, cf.~\cref{app:kazda}.

As our main result, we define a notion of polymorphisms for valued PCSPs and show that it leads to polynomial-time reductions. Thus, we take the first step in
providing a uniform description of reductions 
among the very large class of computational problems captured by valued PCSPs.

In order to help the reader and to  explain clearly the differences between the previous work on PCSPs and our more general setting of valued PCSPs, we recap in~\cref{sec:CrispPCSP} the basics of the algebraic theory for non-valued PCSPs. This should be useful in particular as we  work in the multi-sorted setting\footnote{Different variables can have different domains.} (cf. the discussion at the end of~\cref{subsec:PCSP}) and with slightly more general notions than is common in the literature.
In \cref{sec:ValuedPCSP} we define valued PCSPs and \emph{plurimorphisms}, the new notion of multivariate symmetry. 
Then, in~\cref{sec:valued-homos}, we prove our first main result, namely that a homomorphism between sets of plurimorphisms of two valued PCSPs implies a polynomial-time reduction between these valued PCSPs.  
The core of this reduction theorem is that  every valued PCSP is polynomial-time
equivalent to a valued version of the Minor Condition problem that played a
key role in the algebraic approach to non-valued PCSPs~\cite{BartoBKO21}. This
allows us to circumvent routes via definability as was done in the original
algebraic approach to decision non-promise CSPs and valued
CSPs~\cite{Bulatov05:classifying,DBLP:journals/siamcomp/CohenCCJZ13}. 
Finally, in~\cref{sec:examples} we give  examples of valued homomorphisms, most notably our second main result, which is a valued homomorphism that captures, e.g., H{\aa}stad's result on inapproximability of almost-satisfiable systems of linear equations~\cite{DBLP:journals/jacm/Hastad01}.

\section{Promise CSP} \label{sec:CrispPCSP}

In this section we review the basics of the theory of \emph{crisp} (non-valued) Promise CSPs, in a way that mimics our theory-building in the more general valued setting in the next section. The definitions and theorems 
essentially follow parts of~\cite{BartoBKO21} with some adjustments.%
\footnote{%
The adjustment is mostly in that the arity of a relation or a function can be any finite set $N$. It is more standard in the CSP literature to only use $n$-ary relations and functions for a non-negative integer $n$. We do not see any advantages for the latter (at least in our context) and a lot of disadvantages, such as the need to often choose enumerations, awkward expressions, unnecessary notions, unnecessary abusing notation, void calculations, etc.}

\subsection{Preliminaries}

For two sets $A$ and $Z$, the set $A^Z$ is the set of all functions from $Z$ to $A$. Sometimes it is more natural to regard elements $f \in A^Z$ as tuples of elements of $A$ indexed by elements of $Z$ (we also say a $Z$-tuple of elements of $A$). In such a case we use boldface and write, e.g., $\tuple{a} \in A^Z$. However, there are situations when both viewpoints (as a function or as a tuple) are used within one formula or a proof.

A $Z$-ary \emph{relation on $A$} is a subset $\phi$ of $A^Z$. For $\tuple{a} \in
A^Z$ and a $Z$-ary relation $\phi$, we usually write $\phi(\tuple{a})$ instead
of $\tuple{a} \in \phi$. 
In order to succinctly write down a $Z$-tuple, one can fix a linear order on $Z$ and write a tuple as a sequence of length $|Z|$, e.g., $\tuple{a} = (\tuple{a}(z_1), \tuple{a}(z_2), \dots, \tuple{a}(z_n))$, where $z_1, \dots, z_n$ is the enumeration of $Z$ in increasing order. 

Functions are composed from right to left: If $f: A \to B$ and $g: B \to C$ then
the composed function $A \to C$ is denoted by $g \circ f$ or just $gf$.
Note that for a $Z$-tuple $\tuple{a} \in A^Z$ and a function $f: A \to B$, the $Z$-tuple $f \circ \tuple{a} \in B^Z$ is the tuple obtained by applying $f$ to $\tuple{a}$ component-wise.

The class of finite sets is denoted by $\FinSet$. We denote by $[n]$ the set
$\{1,2,\ldots,n\}$.

\subsection{Relational structures}

For a set $\typ$, a $\tau$-sorted set $A$ is a collection of sets, one set $A_{t}$ for each sort $t \in \typ$, and a $\tau$-sorted function between two $\tau$-sorted sets $A$ and $B$ is a collection of functions $A_{t} \to B_{t}$, $t \in \tau$. We define these notions formally as follows.

\begin{definition}[Multi-sorted setting]
    Let $\typ$ be a set (of \emph{sorts symbols}). 
    A \emph{$\typ$-sorted} set is a set $A$ together with a mapping $\sort:A
    \to \tau$. For $t \in \tau$, the $t$-sort of $A$ is $A_{t} = \{a \in A \mid \sort(a) = t\}$.

    For two $\typ$-sorted sets $A,B$, a \emph{$\tau$-sorted function} from $A$
    to $B$ is a function $f: A \to B$ that preserves sorts, i.e., $\sort(f(a)) =
    \sort(a)$ for every $a \in A$. The set of $\tau$-sorted mappings from $A$ to
    $B$ is denoted by $B^A$, as above. Note that this set is not $\tau$-sorted.

    For a $\tau$-sorted set $A$ and a set $N$, their product is the $\tau$-sorted set $A \times N$ with $\sort(a,n) = a$ for every $a \in A$, $n \in N$. This time, by $A^N$ we denote the $\tau$-sorted set of those mappings $f: N \to A$ such that $f(N) \subseteq A_{t}$  for some $t$, with $\sort(f) = t$. 
\end{definition}

For two $\tau$-sorted sets $A$ and $Z$, we regard the elements of $A^Z$ also as $Z$-tuples and subsets of $A^Z$ as $Z$-ary relations. Similarly as before, $Z$-tuples can be presented as sequences of length $|Z|$ by fixing a linear order on $Z$.

\begin{definition}[Multi-sorted signature]
    A \emph{multi-sorted signature} $\Sigma$ is a triple $\Sigma=(\sig,\typ,\ar)$ where 
    $\sig$ is a set of \textit{relational symbols}, 
    $\typ$ is a set of \textit{sort symbols}, and $\ar$ assigns to each symbol $\phi \in \sig$ a finite $\typ$-sorted set $\ar(\phi)$, called  the \emph{arity} of $\phi$. 

    Such a signature is \emph{finite} if $\sig$ is finite.  
\end{definition}

Symbols $\Sigma, \sig, \typ, \ar$ are reserved for the objects above and we often keep the notation implicit. A signature is implicitly \textbf{multi-sorted and finite}. We will also implicitly assume that $\tau$ is finite.

\begin{definition}[Relational structure]
  Let $\Sigma$ be a signature. A \emph{structure in signature $\Sigma$}, or \emph{$\Sigma$-structure}, $\bfa$ consists of a $\typ$-sorted set $A$ called the \emph{domain} and an $\ar(\phi)$-ary relation $\phi^\bfa$ on $A$ (i.e., $\phi^{\bfa} \subseteq A^{\ar(\phi)}$) called the \emph{interpretation} of $\phi$ in $\bfa$ for each $\phi \in \sig$. Such a structure $\bfa$ is said to be \emph{finite} if $A$ is finite. 
\end{definition}

We shall use the same letter, but different fonts, to refer to a structure $\bfa$ (bold) and its domain $A$ (uppercase). A structure is implicitly \textbf{finite}. In order to avoid some edge cases, we also implicitly assume that every \textbf{relation} $\phi^{\bfa}$ in a structure \textbf{is nonenmpty}.

\subsection{Promise CSP}
\label{subsec:PCSP}

The Promise CSP over a pair of structures $(\bfa,\bfb)$ can be defined as  the problem of deciding whether a conjunctive formula is true in $\bfa$ or not even true in $\bfb$.%
\footnote{This is the decision version. The \emph{search version} is: given a conjunctive formula which is promised to be satisfiable in $\bfa$, find a satisfying assignment in $\bfb$. We only consider the decision version but results can be easily adjusted to the search one.} This problem only makes sense if each conjunctive formula true in $\bfa$ is also true in $\bfb$. Formal definitions are as follows. 

\begin{definition}[Conjunctive formula]
    Let $\Sigma$ be a signature and $X$ a $\typ$-sorted set. A \emph{conjunctive formula} over $X$ in the signature $\Sigma$ (or \emph{conjunctive $\Sigma$-formula}) is a formal expression $\Phi$ of the form 
    $$
    \Phi = \bigwedge_{i \in I} \phi_i(\tuple{x}_i), 
    $$
    where $I$ is a finite nonempty set, and $\phi_i \in \sig$, $\tuple{x}_i \in X^{\ar(\phi_i)}$ for all $i \in I$. The conjuncts are called \emph{constraints}.

    Given additionally a $\Sigma$-structure $\bfa$, the \emph{interpretation of $\Phi$ in $\bfa$}, or \emph{the $X$-ary relation defined in $\bfa$ by $\Phi$}, is the $X$-ary relation on $A$ defined by 
    $$
    \Phi^{\bfa}(h) \mbox{ iff } \bigwedge_{i \in I} \phi_i^{\bfa}(h\tuple{x}_i).
    $$ 
\end{definition}

We allow empty formulas ($I = \emptyset$) and interpret them $\Phi^{\bfa} = A^X$.

\begin{definition}[PCSP]
    A pair of relational structures $(\bfa,\bfb)$ over the same signature $\Sigma$ is a \emph{promise template} if
    $\Phi^{\bfa} \neq \emptyset$ implies $\Phi^{\bfb} \neq \emptyset$ for every conjunctive formula $\Phi$ in the signature $\Sigma$. 

Given a promise template $(\bfa,\bfb)$, the Promise Constraint Satisfaction
  Problem over $(\bfa,\bfb)$, denoted by $\PCSP(\bfa,\bfb)$, is the following
  problem.
\begin{enumerate}
    \item[\textsf{Input}] a finite $\typ$-sorted set $X$ and conjunctive $\Sigma$-formula $\Phi$ over $X$.
    \item[\textsf{Output}] \yes if $\Phi^{\bfa} \neq \emptyset$; $\no$ if
      $\Phi^{\bfb} = \emptyset$.\footnote{The promise is that we are in one of the two cases, i.e., not in the case that $\Phi^{\bfa}=\emptyset$ and $\Phi^{\bfb} \neq \emptyset$.}
\end{enumerate}
\end{definition}

In this context, $X$ is regarded as a set of variables and the $\tau$-sorted functions $h: X \to A$ as assignments of values in $A$ to variables. The fact that $\phi_i^{\bfa}(h\tuple{x}_i)$ means that the constraint $\phi_i(\tuple{x}_i)$ is satisfied in $\bfa$ by the assignment $h$. Thus elements of $\Phi^{\bfa}$ (or $\Phi^{\bfb}$) can be thought of as solutions of $\Phi$ in $\bfa$ (or $\bfb$).

The standard Constraint Satisfaction Problem over $\bfa$~\cite{Feder98:monotone} is $\PCSP(\bfa,\bfa)$, where typically only single-sorted signatures are considered. Here is a concrete example of a problem that falls into this framework.

\begin{example}[3LIN2] \label{ex:lin}
   Given a system of linear equations over the two-element field $\mathbb{Z}_2$ with exactly 3 variables in each equation, the task is to decide whether it has a solution.
   This problem can be phrased as $\PCSP(\bfa,\bfa)$, where $A = \{0,1\}$, the signature consists of two $[3]$-ary symbols $\phi_0, \phi_1$, and their interpretation is $\phi^{\bfa}_i(a_1,a_2,a_3)$ iff $a_1+a_2+a_3 = i \pmod{2}$. 

   We denote this PCSP as well as the template by $\lin$. We will also use this convention for other PCSPs. 
   Templates (and PCSPs) $\klin{k}$ for a positive $k$ are defined similarly.
\end{example}

An example of a ``truly'' promise problem is the following version of the approximate graph coloring problem.

\begin{example}[3- versus 5- graph coloring] \label{ex:AGC} 
Given a graph, the task is to accept if it is 3-colorable and reject if it is not 5-colorable. This is $\PCSP(\mathbf{K}_3,\mathbf{K}_5)$, where $\mathbf{K}_k$ denotes a $k$-clique, that is, a structure with a $k$-element domain and one binary relational symbol interpreted as the disequality relation on the domain.
\end{example}

The last example is a CSP, but requires two sorts instead of just one. It is a version of the Label Cover problem.

\begin{example}[$\lc_{D,E}$ -- Label Cover] \label{ex:label_cover}

Fix finite disjoint sets $D,E$. Given a bipartite (multi-)graph with vertex set $U \cup V$  and a constraint $\pi_{uv}:D \to E$ for each edge $\{u,v\}$ in the graph, the task is to decide whether all the constraints can be satisfied, i.e., whether there exist functions $h_D: U \to D$ and $h_E: V \to E$ such that $\pi_{uv}(h_D(u))=h_E(v)$ for every edge $\{u,v\}$. 

This problem is $\PCSP(\bfa,\bfa)$, where the sort symbols are $D$ and $E$, $A = D \cup E$ (with $\sort(d)=D$ for $d \in D$ and $\sort(e)=E$ for $e \in E$), the signature consists of 
all functions $\pi: D \to E$ of arity $[2]$ with $\sort(1) = D$, $\sort(2)=E$, interpreted as $\pi^{\bfa}(d,e)$ iff $\pi(d)=e$. 
We typically omit the superscript in $\pi^{\bfa}$, which should not cause a confusion because of the different number of arguments.
\end{example}

The multi-sorted setting is primarily introduced to include problems such as the Label Cover. Note however that single-sorted PCSPs have natural formulations as multi-sorted ones. For instance, $\lin$ from \cref{ex:lin} can be introduced using a 3-sorted signature, with the 3-sorted domain $\phi_0^{\bfa} \cup \phi_1^{\bfa} \cup \{0,1\}$ (where $\phi_i^{\bfa}$ is as in the example) and six binary symbols interpreted as the graphs of the projection mappings $\phi_i^{\bfa} \to \{0,1\}$. In fact, this transformation from single-sorted to multi-sorted is essentially the reduction from a PCSP to MC discussed at the end of this section. 

\subsection{Polymorphisms}

An $N$-ary polymorphism of $(\bfa,\bfb)$ is an $N$-ary function from $A$ to $B$ that \emph{preserves} every relation,   
that is, if we apply it component-wise to an $N$-tuple of tuples from $\phi^{\bfa}$, then we get a tuple from $\phi^{\bfb}$ for every $\phi$ in the common signature of $\bfa$ and $\bfb$. 
We phrase this property in terms of matrices. But first, let us discuss the terminology in the multi-sorted setting. 
Let $A$, $B$, $Z$ be $\tau$-sorted sets and $N$ be a set.

An \emph{$N$-ary function $f$ from $A$ to $B$} is a $\tau$-sorted function $A^N \to B$, i.e., an element of $B^{A^N}$. It can be regarded as a collection of functions $f_t: A_t^N \to B_t$, $t \in \tau$. 
When $|N|=1$ an $N$-ary function is 
called unary.

For $n \in N$, the \emph{$N$-ary projection to the $n$-th coordinate} is
denoted by $\proj{N}{n}$, i.e., $\proj{N}{n}: A^N \to A$ is defined by $\proj{N}{n}(\tuple{a}) = \tuple{a}(n)$. The set $A$ will be clear from the context. 
For $z \in Z$, the $Z$-ary projection to the $z$-th coordinate $\proj{Z}{z}: A^Z \to A$ is defined by the same formula. Note that its image is contained in $A_{\sort(z)}$.

An element $M \in A^{Z \times N}$ can be regarded as a matrix whose rows are
indexed by elements $z \in Z$, columns are indexed by elements $n \in N$, and
the $(z,n)$ entry is $M(z,n) \in A_{\sort(z)}$.  The $N$-tuple of columns is
denoted by $\cols(M) \in (A^Z)^N$ and, for $n \in N$, the $n$-th column is
denoted by $\col{n}{M} \in A^Z$. The $Z$-tuple of rows is denoted by $\rows(M) \in (A^N)^Z$ and the $z$-th row by $\row{z}{M} \in (A_{\sort(z)})^N \subseteq A^N$. 

\begin{definition}[Polymorphism]
  Let $\ab$ be a pair of $\Sigma$-structures and $N$ a finite set.
  An  \emph{$N$-ary relation-matrix pair} for $\bfa$ is a pair $(\phi, M)$,
  where $\phi \in \sigma$ and $M \in A^{\ar(\phi) \times N}$ is a matrix whose
  each column is in $\phi^{\bfa}$. We denote by $\Mat{\bfa}{N}$ the set of all $N$-ary relation-matrix pairs for $\bfa$.
  $$
  \Mat{\bfa}{N} = \{(\phi, M) \mid \phi \in \sigma, M \in A^{\ar(\phi) \times
  N}, \ \forall n \in N \  \col{n}{M} \in \phi^{\bfa}\}.
  $$
  An $N$-ary function $f$ from $A$ to $B$ is a \emph{polymorphism} of $\ab$ if 
  $$
  \forall (\phi,M) \in \Mat{\bfa}{N} \ \ \phi^{\bfb}(f \circ \rows(M)). 
  $$
  We denote the set of $N$-ary polymorphisms of $\ab$ by $\pol^{(N)} \ab$ and the collection of these sets%
  \footnote{It may seem that $\pol \ab$ is a monstrous object: for each finite set $N$ we have a set of $N$-ary functions from $A$ to $B$. However, note that $N$-ary polymorphisms fully determine $N'$-ary polymorphisms whenever $|N|=|N'|$. }
  by 
  $$
  \pol \ab = (\pol^{(N)} \ab)_{N \in \FinSet}.
  $$
\end{definition}

An $N$-ary polymorphism of $\ab$ gives us a way to combine $N$ tuples from a relation $\phi^{\bfa}$ to get a single tuple from $\phi^{\bfb}$. This extends to any conjunctive formula: if $\Phi$ is a conjunctive formula over $X$ and $M \in A^{X \times N}$ has all the columns in $\Phi^{\bfa}$, then $\Phi^{\bfb}(f \rows (M))$. In other words, polymorphisms give us a way to combine $\bfa$-solutions to get a $\bfb$-solution.
This is formalized in the following proposition.

\begin{proposition}[Polymorphisms combine solutions] \label{prop:combining_solutions}
Let $\ab$ be a pair of $\Sigma$-structures, $N$ a finite set, and $f: A^N \to B$. The following are equivalent.
\begin{enumerate}[label=\textnormal{(\roman*)}, ref=(\roman*)]
    \item \label{item:1:SolutionsToSolutions} $f$ is a polymorphism of $\ab$.
    \item \label{item:2:SolutionsToSolutions} For every finite set $X$, every conjunctive formula $\Phi$ over $X$, and every $M \in A^{X \times N}$ whose every column is in $\Phi^{\bfa}$, we have $\Phi^{\bfb}(f \rows(M))$.
\end{enumerate}    
\end{proposition}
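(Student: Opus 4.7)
The plan is to prove the two implications separately; both are essentially bookkeeping, with the nontrivial content residing in the indexing of matrices.

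For (ii) $\Rightarrow$ (i), I would take an arbitrary $(\phi,M) \in \Mat{\bfa}{N}$ and apply (ii) to the single-constraint conjunctive formula $\Phi = \phi(\id_{\ar(\phi)})$ over the variable set $X = \ar(\phi)$. Since $\Phi^{\bfa} = \phi^{\bfa}$, the hypothesis that each column of $M$ lies in $\Phi^{\bfa}$ is exactly the definition of $\Mat{\bfa}{N}$, so (ii) yields $\Phi^{\bfb}(f\rows(M))$, i.e.\ $\phi^{\bfb}(f\rows(M))$. This is the polymorphism condition.

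For (i) $\Rightarrow$ (ii), let $\Phi = \bigwedge_{i \in I} \phi_i(\tuple{x}_i)$ be a conjunctive formula over $X$, and let $M \in A^{X \times N}$ have every column in $\Phi^{\bfa}$. For each $i \in I$, define a matrix $M_i \in A^{\ar(\phi_i) \times N}$ by
$$
M_i(z,n) = M(\tuple{x}_i(z), n) \quad\text{for } z \in \ar(\phi_i), \ n \in N,
$$
so that the $n$-th column of $M_i$ is $\col{n}{M} \circ \tuple{x}_i$. The assumption that each $\col{n}{M} \in \Phi^{\bfa}$ precisely means that $\phi_i^{\bfa}(\col{n}{M} \circ \tuple{x}_i)$ for every $i \in I$ and $n \in N$, i.e., $(\phi_i, M_i) \in \Mat{\bfa}{N}$. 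Applying (i) to each of these relation-matrix pairs gives $\phi_i^{\bfb}(f \rows(M_i))$.

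To conclude, I would check the routine identity $f \rows(M_i) = (f \rows(M)) \circ \tuple{x}_i$: for $z \in \ar(\phi_i)$, the row $\row{z}{M_i}$ equals $\row{\tuple{x}_i(z)}{M}$, and therefore
$$
(f \rows(M_i))(z) = f(\row{z}{M_i}) = f(\row{\tuple{x}_i(z)}{M}) = (f \rows(M))(\tuple{x}_i(z)).
$$
Hence $\phi_i^{\bfb}((f\rows(M)) \circ \tuple{x}_i)$ for every $i \in I$, which by definition of $\Phi^{\bfb}$ yields $\Phi^{\bfb}(f\rows(M))$. I don't anticipate any real obstacle here; the only minor point requiring care is making sure the row/column indexing of the restricted matrix $M_i$ is handled correctly in the multi-sorted setting, which is why I would be explicit about the identity above.
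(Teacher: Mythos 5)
Your proposal is correct and follows essentially the same route as the paper: for (i) $\Rightarrow$ (ii) you define the restricted matrices $M_i$ in exactly the same way and observe the same column and row identities; for (ii) $\Rightarrow$ (i) you instantiate the hypothesis on single-constraint formulas with variable set $X = \ar(\phi)$, which is precisely what the paper's one-line remark about "considering matrices with $X = \ar(\phi_i)$" amounts to. The only difference is that you spell out the converse direction in more detail than the paper does.
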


\begin{proof}
   Assume \labelcref{item:1:SolutionsToSolutions}  and consider $M$ and $\Phi = \bigwedge_{i \in I} \phi_i(\tuple{x}_i)$ as in \labelcref{item:2:SolutionsToSolutions}. We need to show 
   that $\phi_i^{\bfb}(f \rows(M) \tuple{x}_i)$ for every $i \in I$. Consider the matrix $M_i \in A^{\ar(\phi_i) \times N}$ defined by $M_i(z,n) = M(\tuple{x}_i(z),n)$ for every $z \in \ar(\phi_i)$ and $n \in N$. Note that $\rows(M_i) = \rows(M) \tuple{x}_i$ and $\col{n}{M_i} = \col{n}{M} \tuple{x}_i$. For every $n \in N$, the $n$-th column of $M$ is in $\Phi^{\bfa}$, in particular $\phi_i^{\bfa}(\col{n}{M}\tuple{x}_i)$, therefore the $n$-th column of $M_i$ is in $\phi_i^{\bfa}$. As $f$ is a polymorphism, we obtain that $f\rows(M_i)=f \rows(M) \tuple{x}_i$ is in $\phi^{\bfb}$, as required.

   The opposite implication follows by considering  matrices with $X = \ar(\phi_i)$ for each $i\in I$. 
\end{proof}

The collection of polymorphisms of a pair $\ab$ is closed under taking minors in the sense of the following definition. We call such collections function minions.%
\footnote{Conversely, ``almost'' every function minion on finite sets is a
minion of polymorphisms. The caveat is that we would need to allow infinite
signatures and ignore functions of arity $\emptyset$.
} 

\begin{definition}[Function minion]
Let $A,B$ be $\tau$-sorted sets and $N$, $N'$ finite sets. 
For  $f: A^N \to B$ and $\pi:N \to N'$, the \emph{minor of $f$ given by $\pi$}, denoted by $f^{(\pi)}$, is the $N'$-ary function $f^{(\pi)}: A^{N'} \to B$ defined by
$$
f^{(\pi)}(\ba) = f(\ba \circ \pi) \quad \textnormal{ for every } \ba \in A^{N'}. 
$$ 
A collection $\mnn M = (\mnn M^{(N)})_{N \in \FinSet}$, where $\mnn M^{(N)}$ is a set of $N$-ary functions from $A$ to $B$, is a \emph{function minion on $(A,B)$} if $f^{(\pi)} \in \mnn M^{(N')}$ for every $N,N' \in \FinSet$, $f \in \mnn M^{(N)}$, and $\pi: N \to N'$. 
\end{definition}

\begin{example}
    If $f: A^{[3]} \to A$ and $\pi: [3] \to [2]$ is defined by
    $\pi(1)=\pi(3)=2$, $\pi(2)=1$, then $f^{(\pi)}(a_1,a_2) = f(a_2,a_1,a_2)$.
    Informally, a minor of $f$ is a function that can be obtained from $f$ by merging and permuting variables (and introducing dummy ones).
\end{example}

\begin{example} \label{ex:proj_minion}
    The collection given by $\mnn{P}^{(N)} = \{\proj{N}{n} \mid n \in N\}$ is an easy and important example of a function minion on $(A,A)$. Note that $(\proj{N}{n})^{(\pi)} = \proj{N'}{\pi(n)}$ for every $\pi: N \to N'$.
\end{example}

A fundamental role (though not always explicit) in the CSP theory, as well as for various variants of CSPs, is played by a specific conjunctive formula $\Phi$ on the set of variables $A^N$ for some finite set $N$. Note that assignments from the set of variables to $A$ (to $B$) are exactly the $N$-ary functions from $A$ to $A$ (to $B$). For a fixed pair $\ab$, the formula $\Phi$ is created by placing all the possible constraints with the restriction that $\Phi^{\bfa}(\proj{N}{n})$ for every $n \in N$. Then $\Phi^{\bfb}$ is exactly the set of $N$-ary polymorphisms of $\ab$. 

\begin{proposition}[Canonical formula] 
\label{prop:canonical_formula} 
For every pair $\ab$ of finite $\Sigma$-structures 
and $N$ a finite set, the $\Sigma$-formula\footnote{In the terminology used in e.g.~\cite{bodirsky2021complexity}, this is the canonical conjunctive query of the $N$-th power of $\bfa$. In~\cite{Jeavons97:closure}, this construction
  was called the indicator problem.}
    $$
    \Phi = \bigwedge_{(\phi,M) \in \Mat{\bfa}{N}} \phi(\rows(M))
    $$ 
    over the set of variables $A^N$ satisfies
  \begin{itemize}
      \item $\Phi^{\bfa}(\proj{N}{n})$ for every $n \in N$, and
      \item $\Phi^{\bfb} = \pol^{(N)} \ab$. 
\end{itemize}
\end{proposition}

\begin{proof}
    By definition of $\Phi^{\bfa}$, we have $\Phi^{\bfa}(\proj{N}{n})$ iff $\phi^{\bfa}(\proj{N}{n} \ \rows(M))$ for each $N$-ary relation-matrix pair $(\phi,M)$ for $\bfa$. Since $\proj{N}{n} \ \rows(M) = \col{n}{M} \in \phi^{\bfa}$, the first item holds.

    For the second item, we have for each $f: A^N \to B$ that $\Phi^{\bfb}(f)$ iff $\phi^{\bfb}(f \ \rows(M))$ for each $(\phi,M) \in \Mat{\bfa}{N}$, which happens, by definition, iff $f$ is a polymorphism of $\ab$ .
\end{proof}

Note that, as claimed above, $\Phi$ is created by placing all possible constraints so that the first item is satisfied. Indeed, any constraint $\phi(\tuple{x})$ over $A^N$ is equal to $\phi(\rows(M))$ for some $\phi \in \sig$ and $M \in A^{\ar(\phi) \times N}$. The fact that $\phi^{\bfa}(\proj{N}{n})$ is exactly saying that $\col{n}{M}$ is in $\phi^{\bfa}$, so satisfying  $\phi^{\bfa}(\proj{N}{n})$ for each $n\in N$, is equivalent to $M \in \Mat{\bfa}{N}$. 

Applying the canonical formula for a singleton set $N$ gives us a characterization of templates. \cref{item:3templates} in the proposition below is in fact often used as a definition of a template.

\begin{proposition}[Characterization of templates]  
    Let $\ab$ be a pair of finite $\Sigma$-structures. The following are equivalent.
    \begin{enumerate}[label=\textnormal{(\roman*)}, ref=(\roman*)]
        \item\label{item:1:CharOfTemplates} $\ab$ is a promise template.
        \item\label{item:2:CharOfTemplates} For each conjunctive $\Sigma$-formula $\Phi$ over the set of variables $A$, if 
        $\Phi^{\bfa}(\id_A)$, then $\Phi^{\bfb} \neq \emptyset$.        
        \item\label{item:3:CharOfTemplates} There exists a unary polymorphism of $\ab$. \label{item:3templates}
    \end{enumerate}
\end{proposition}

\begin{proof}
  The implication from \labelcref{item:3:CharOfTemplates} to
  \labelcref{item:1:CharOfTemplates} follows from~\cref{prop:combining_solutions} for a singleton set $N$,
  the implication from \labelcref{item:1:CharOfTemplates} to \labelcref{item:2:CharOfTemplates} is trivial, and the implication from \labelcref{item:2:CharOfTemplates} to \labelcref{item:3:CharOfTemplates} follows from~\cref{prop:canonical_formula} applied to a singleton set $N$.
\end{proof}

Starting from the canonical formula, the theory can now go in two directions. The
original approach for CSPs from~\cite{Jeavons97:closure,Bulatov05:classifying,BOP18,bodirsky2021complexity}
can be formulated with slight imprecision as follows.
If $\pol \aaa \subseteq \pol \aap$, then each relation in $\bfa'$ 
can be defined
by existentially quantifying the canonical formula (for $\aaa$) for a suitable $N$, which then
implies $\pcsp \aap \leq \pcsp \aaa$, where $\leq$ denotes the polynomial-time reducibility. 
A more detailed explanation is given in \Cref{app:definability}.

This direction can continue by replacing
definability with more expressive constructions and thus allowing us to replace
the inclusion $\pol \aaa \subseteq \pol \aap$ by weaker requirements, which in
turn gives us more reductions. 
One step in this process replaced  the inclusion
by the existence of so-called minion homomorphisms~\cite{BOP18} and this was generalized to PCSPs in~\cite{BartoBKO21} based on~\cite{pippenger2002galois,BG21}. 

The second direction that the theory can take, also based on the canonical formula, avoids the definability considerations. Instead, it proves the reduction theorem based on minion homomorphisms in a more direct way. This approach, discovered in~\cite{BartoBKO21}, is the one we follow in this work.

\subsection{Minion homomorphisms and reductions}

A minion homomorphism between function minions is a mapping of $N$-ary functions in the first minion to $N$-ary functions in the second minion that preserves taking minors. 
This concept does not depend on concrete functions in the minion, it only depends on the mappings $f \mapsto f^{(\pi)}$.
We therefore first introduce an abstraction of function minions that carries exactly this information. 

\begin{definition}[Minion]
An (abstract) \textit{minion} $\mnn M$ consists of a collection of sets $(\mnn M^{(N)})_{N \in \FinSet}$, together with a \textit{minor map} ${\mnn M}^{(\pi)}:  {\mnn M}^{(N)}\to  {\mnn M}^{(N')}$ for every function $\pi:N\to N'$, which satisfies that ${\mnn M}^{(\id_{N})}= \id_{{\mnn M}^{(N)}}$ for all finite sets $N$ and ${\mnn M}^{(\pi)} \circ {\mnn M}^{(\pi')} = \mnn M^{(\pi \circ \pi')}$ whenever such a composition is well-defined. When the minion is clear, 
we write $f^{(\pi)}$ for ${\mnn M}^{(\pi)}(f)$.

A minion $\mnn M$ is \emph{nontrivial} if $\mnn M^{(N)}$ is nonempty for every (equivalently some) 
nonempty $N$, and it is \textit{locally finite} if $\mnn M^{(N)}$ is finite for every $N$. 
\end{definition}

The most natural choice of morphisms between minions is minion homomorphisms defined as follows.\footnote{In the language of category theory, a minion is simply a functor from the category of finite sets to the category of sets (a minion corresponds to the functor $X \mapsto \mnn M^{(X)}$, $\pi \mapsto \mnn M^{(\pi)}$) and minion homomorphisms are natural transformations. Note that the projection minion from~\cref{ex:proj_minion} is naturally equivalent to the inclusion functor.} 

\begin{definition}[Minion homomorphism] Let $\mnn M $ and $\mnn M'$ be minions.
    A \textit{minion homomorphism} from $\mnn M $ to  $\mnn M'$ is a collection of functions $(\xi^{(N)}: \mnn M^{(N)} \to \mnn M'^{(N)})_{N \in \FinSet}$ that preserves taking minors, that is, $\xi^{(N')}(\mnn M^{(\pi)}(f))={\mnn M'}^{(\pi)}(\xi^{(N)}(f))$ for every $N, N' \in \FinSet$, $f \in \mnn{M}^{(N)}$, and $\pi:N \to N'$.
\end{definition}

The reduction theorem discussed above is the following.

\begin{theorem}[Reductions via minion homomorphism] \label{thm:crisp-reduction}
    Let $\ab$, $\abp$ be promise templates. If there is a minion homomorphism from $\pol\ab$ to $\pol\abp$, then $\PCSP\abp \leq \PCSP\ab$.\footnote{In fact, Theorem~1 even holds with a log-space reduction, but that will not concern us.}
\end{theorem}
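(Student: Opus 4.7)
The plan is to exhibit an explicit polynomial-time reduction $\Phi' \mapsto \Phi$ from $\PCSP\abp$ to $\PCSP\ab$ whose correctness is powered by the canonical formula (\cref{prop:canonical_formula}) for $\ab$ together with the minor-preservation property of $\xi$. The structural idea is that $\Phi$ is built from one ``canonical-formula gadget'' per constraint of $\Phi'$, glued along carefully chosen variable identifications: the canonical formula supplies the completeness direction via projections, and $\xi$ supplies the soundness direction by transporting a $\bfb$-solution of $\Phi$ into a $\bfb'$-solution of $\Phi'$.

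Construction. Write $\Phi' = \bigwedge_{i \in I} \phi'_i(\bx'_i)$ with $\bx'_i \colon R_i \to X'$, $R_i := \ar(\phi'_i)$, and $S_i := (\phi'_i)^{\bfa'} \subseteq A'^{R_i}$. For each $i \in I$, let $\Psi_i$ be the canonical $\Sigma$-formula of $\ab$ of arity $S_i$: a formula over the variable set $A^{S_i}$ whose $\bfa$-solutions contain all projections $\proj{S_i}{s}$ ($s \in S_i$) and whose $\bfb$-solutions are exactly $\pol^{(S_i)}\ab$. Each $\Psi_i$ has constantly many variables and constraints because $|A|$, $|A'|$, $|R_i|$ and $|\sigma|$ are bounded by constants of the fixed templates, so the combined instance has size $O(|I|)$. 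The gluing identifies the variable $\tuple{a} \in A^{S_j}$ of $\Psi_j$ with the variable $\tuple{a} \circ \pi \in A^{S_i}$ of $\Psi_i$ whenever $\bx'_i(r) = \bx'_j(r')$ and $\pi \colon S_i \to S_j$ is a map satisfying $\pi(s)(r') = s(r)$ for all $s \in S_i$.

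Correctness. For completeness, given $h' \in \Phi'^{\bfa'}$, assign to every variable $\tuple{a} \in A^{S_i}$ the value $\tuple{a}(h' \circ \bx'_i) \in A$; this is well-defined because $h' \circ \bx'_i \in S_i$, it equals $\proj{S_i}{h' \circ \bx'_i}(\tuple{a})$, and by \cref{prop:canonical_formula} each $\Psi_i$ is thereby satisfied in $\bfa$. Consistency across identifications is immediate since the identifications are of the form $\tuple{a} \sim \tuple{a} \circ \pi$ with $\pi(s)(r') = s(r)$. For soundness, given $h \in \Phi^{\bfb}$, the restriction to $\Psi_i$ is, by \cref{prop:canonical_formula}, an element $f_i \in \pol^{(S_i)}\ab$; the minion homomorphism yields $g_i := \xi^{(S_i)}(f_i) \in \pol^{(S_i)}\abp$. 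Form the evaluation matrix $E_i \in A'^{R_i \times S_i}$ with $E_i(r, s) := s(r)$; every column of $E_i$ lies in $S_i = (\phi'_i)^{\bfa'}$, hence $g_i \circ \rows(E_i) \in (\phi'_i)^{\bfb'}$ by the polymorphism property. Define $h' \colon X' \to B'$ by $h'(x') := g_i(\row{r}{E_i})$ for any $(i, r)$ with $\bx'_i(r) = x'$; then $h'$ satisfies every constraint of $\Phi'$ in $\bfb'$, giving $h' \in \Phi'^{\bfb'}$.

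Main obstacle. The delicate point is well-definedness of $h'$: when $\bx'_i(r) = \bx'_j(r')$ one must have $g_i(\row{r}{E_i}) = g_j(\row{r'}{E_j})$. This is precisely what the identifications deliver: the identification $\tuple{a} \sim \tuple{a} \circ \pi$ forces the restricted polymorphisms to satisfy $f_j = f_i^{(\pi)}$, and minor-preservation of $\xi$ promotes this to $g_j = g_i^{(\pi)}$; since $\row{r}{E_i} = \row{r'}{E_j} \circ \pi$, this yields $g_j(\row{r'}{E_j}) = g_i(\row{r}{E_i})$ as required. The technical hurdle is designing these identifications cleanly so that (i) the needed minor maps $\pi$ exist (which can be arranged by enlarging $S_i$ with suitable auxiliary tuples when projections to shared coordinates are not surjective), (ii) they are coherent along chains of shared variables of $\Phi'$, and (iii) the multi-sorted structure is handled correctly when $\tau \ne \tau'$. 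Once the identifications are in place, everything else is a direct application of \cref{prop:canonical_formula} and the minion homomorphism axiom.
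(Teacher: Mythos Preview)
Your overall strategy is the paper's in disguise: build one canonical-formula gadget per constraint of $\Phi'$, glue, and use $\xi$ for soundness. The paper executes this by passing through the Minor Condition problem (\cref{def:MC}), which amounts to adding a second family of gadgets, one per \emph{original variable} $x \in X'$ with arity $A'_{\sort(x)}$, and gluing each constraint gadget $\Psi_i$ to these variable gadgets via the canonical projection $\pi_{i,r}\colon S_i \to A'_{\sort(\bx'_i(r))},\ s \mapsto s(r)$. Those maps exist, are unique, and make both completeness and soundness go through mechanically.

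Your version omits the variable gadgets and glues constraint gadgets to one another directly via a map $\pi\colon S_i \to S_j$ satisfying $\pi(s)(r')=s(r)$. This is where the argument breaks. For completeness you assign to the variable $\tuple{a}\in A^{S_i}$ the value $\tuple{a}(h'\circ\bx'_i)$; after identifying $(\tuple{a},j)$ with $(\tuple{a}\circ\pi,i)$ this is well-defined only if $\pi(h'\circ\bx'_i)=h'\circ\bx'_j$ as elements of $S_j$, i.e.\ on \emph{every} coordinate of $R_j$, whereas your condition on $\pi$ pins down only the coordinate $r'$. Concretely, take $A'=\{0,1\}$, a binary $\phi'$ with $(\phi')^{\bfa'}=\{(0,0),(0,1),(1,0)\}$, and $\Phi'=\phi'(x,y)\wedge\phi'(y,z)$; here any admissible $\pi$ has $\pi((0,0))\in\{(0,0),(0,1)\}$, but the witness $h'=(x,y,z)\mapsto(0,0,0)$ forces $\pi((0,0))=(0,0)$ while $h'=(0,0,1)$ forces $\pi((0,0))=(0,1)$. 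No single $\pi$ works for all $h'$, so your glued instance can fail to have an $\bfa$-solution even when $\Phi'$ does. The parenthetical fix of ``enlarging $S_i$'' does not help: $S_i=(\phi'_i)^{\bfa'}$ is determined by the template, and changing the arity set destroys the identification $\Psi_i^{\bfb}=\pol^{(S_i)}\ab$ from \cref{prop:canonical_formula} on which your soundness argument relies. The clean repair is exactly the paper's: add a canonical-formula gadget for each $x\in X'$ of arity $A'_{\sort(x)}$ and glue via the projections $s\mapsto s(r)$; this is precisely the composite reduction $\PCSP\abp \to \mc(\pol\abp,k) \to \mc(\pol\ab,k) \to \PCSP\ab$.
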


This reduction theorem explains hardness for CSPs: if $\pcsp(\bfa,\bfa)$ cannot be solved in polynomial time by the algorithms in~\cite{Bulatov17:focs,Zhuk20:jacm}, then 
$\pol(\bfa,\bfa)$ has a minion homomorphism to the projection minion from \cref{ex:proj_minion} (with $|A|  \geq 2$), which has a minion homomorphism to every nontrivial minion. 

The modern proof of \cref{thm:crisp-reduction} is to show that each $\pcsp \ab$ is equivalent to a certain computational problem parameterized by the (abstract) minion of polymorphisms, called the \textit{minor condition problem}, and that a minion homomorphism (trivially) gives a reduction between such problems. 

\begin{definition}[Minor Condition Problem] \label{def:MC}
Given a nontrivial minion $\mnn M$ and an integer $k$, 
the Minor Condition Problem for $\mnn M$ and $k$, denoted by $\mc(\mnn M,k)$ is the following problem:

\begin{enumerate}
    \item[\textsf{Input}] 
    \begin{enumerate}[label=\textnormal{\arabic*.}]
     \item disjoint sets $U$ and $V$ (the sets of \emph{variables}),
     \item a set $D_x$ with $|D_x| \leq k$ for every $x \in U \cup V$ (the \emph{domain} of $x$),
     \item a set of formal expressions of the form $\pi(u) = v$, where $u \in U$, $v \in V$, and $\pi: D_u \to D_v$ (the \emph{minor conditions}). 
    \end{enumerate}
    \item[\textsf{Output}]
    \begin{itemize}
        \item[\yes] if there exists a function $h$ from $U \cup V$ with $h(x) \in D_x$ (for each $x \in U \cup V$) such that, for each minor condition $\pi(u) = v$, we have $\pi(h(u)) = h(v)$.
        \item[\no] if there does not exist a function $h$ from $U \cup V$ with $h(x) \in \mnn M^{(D_x)}$  such that, for each minor condition $\pi(u) = v$, we have $\mnn M^{(\pi)}(h(u)) = h(v)$.
    \end{itemize}
\end{enumerate}
\end{definition}

The name for the minor condition problem comes from the requirement $\mnn M^{(\pi)}(h(u)) = h(v)$: the element of $\mnn M$  assigned to $v$ must be the minor of the element assigned to $u$ given by $\pi$. Note also that $\pi(h(u)) = h(v)$ is equivalent to $\mnn P^{(\pi)}(h(u)) = h(v)$ for the projection minion $\mnn P$ from~\cref{ex:proj_minion}. 

Since $\mnn M$ is nontrivial, an instance cannot simultaneously be a $\yes$ and $\no$ instance. Indeed, if $h$ witnesses that an instance is a $\yes$ instance, then $x \mapsto f^{(\gamma_{h(x)})}$, where $f \in \mnn M^{([1])}$ and $\gamma_{h(x)}$ is the mapping $[1] \to D_x$  with $1 \mapsto h(x)$, witnesses that the instance is not a $\no$ instance.

Notice also that an instance of MC is very similar to an instance of LC from~\cref{ex:label_cover}. In fact, $\mc(\mnn M,k)$ can be phrased as a PCSP over a
certain multi-sorted template.

The reduction between two PCSPs in \cref{thm:crisp-reduction} based on a minion homomorphism is a composition
of three reductions: from PCSP to MC, from MC (over one minion) to MC (over another one), and from MC to PCSP.  
Overall, we have the following reductions (depicted as arrows) for templates $\ab$, $\abp$, their polymorphism minions $\mnn M$, $\mnn M'$, and a sufficiently large $k$:
$$\begin{tikzcd}
  \pcsp \abp \arrow[d, leftrightarrow]  & \pcsp \ab \arrow[d, leftrightarrow] \\
  \mc(\mnn M', k) \ar[r, ""] & \mc(\mnn M, k) 
\end{tikzcd}$$

We give proofs for each of these three reductions below, starting with the simplest.
\cref{thm:crisp-reduction} is an immediate consequence of these three reductions.

\begin{proposition}[Between MCs]
  Let $\mnn M$ and $\mnn M'$ be nontrivial minions such that there is a  minion homomorphism $\xi:\mnn M \to \mnn M'$. Then $\mc (\mnn M', k) \leq \mc (\mnn M, k)$ for any positive integer $k$. 
\end{proposition}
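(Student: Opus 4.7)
The plan is to use the \emph{identity reduction}: given an instance of $\mc(\mnn M', k)$ consisting of variable sets $U, V$, domains $D_x$, and minor conditions $\pi(u)=v$, we output the same tuple $(U, V, (D_x), \{\pi(u)=v\})$ as an instance of $\mc(\mnn M, k)$. This clearly runs in polynomial time (indeed, it does nothing), so the content of the proposition is entirely in the correctness of this reduction, i.e., that \yes instances map to \yes instances and \no instances map to \no instances. Note that nontriviality of $\mnn M$ and $\mnn M'$ is needed only to ensure that the two problems are well-defined (no instance is simultaneously \yes and \no).

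The \yes case is immediate from the definitions: a witness $h$ for the $\mc(\mnn M', k)$ instance is a function with $h(x) \in D_x$ satisfying $\pi(h(u)) = h(v)$ for every minor condition, and these requirements depend only on the combinatorial data $(U, V, (D_x), \{\pi(u)=v\})$ and not on the minion. Hence the same $h$ witnesses that the image is a \yes instance of $\mc(\mnn M, k)$.

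The \no case is the only place where the minion homomorphism $\xi : \mnn M \to \mnn M'$ is used, and it is handled by contraposition. Suppose the image is not a \no instance of $\mc(\mnn M, k)$, so that there exists $h$ with $h(x) \in \mnn M^{(D_x)}$ satisfying $\mnn M^{(\pi)}(h(u)) = h(v)$ for every minor condition $\pi(u)=v$. Define $h' : U \cup V \to \bigcup_x \mnn M'^{(D_x)}$ by $h'(x) = \xi^{(D_x)}(h(x))$. Then $h'(x) \in \mnn M'^{(D_x)}$, and the minor-preservation property of $\xi$ gives
\[
\mnn M'^{(\pi)}(h'(u)) = \mnn M'^{(\pi)}(\xi^{(D_u)}(h(u))) = \xi^{(D_v)}(\mnn M^{(\pi)}(h(u))) = \xi^{(D_v)}(h(v)) = h'(v),
\]
so $h'$ witnesses that the original instance is not a \no instance of $\mc(\mnn M', k)$. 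This completes the verification.

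There is no real obstacle here: the main point is simply to match the definition of minion homomorphism against the definition of a \no witness. The proof is essentially a rephrasing of the fact that the condition ``$h$ satisfies the minor conditions in the minion'' is preserved under composition with a minion homomorphism.
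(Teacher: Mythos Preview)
Your proof is correct and essentially identical to the paper's: both use the identity (trivial) reduction, note that \yes instances are preserved trivially, and for the \no case push a witness $h$ through $\xi$ via $h'(x)=\xi^{(D_x)}(h(x))$ and verify the minor conditions using the minor-preservation property of $\xi$. Your write-up is slightly more explicit about the role of nontriviality and the superscripts on $\xi$, but the argument is the same.
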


\begin{proof}
   We claim that the trivial reduction, i.e., not changing the instance, works.
   Clearly, $\yes$ instances are preserved. On the other hand, if $h$ witnesses
   that the instance is not a $\no$ instance of $\mc (\mnn M,k)$, then $h'$
   defined by $h'(x) = \xi(h(x))$ witnesses the instance is not a $\no$ instance
   of $\mc(\mnn M',k)$ since for every minor condition $\pi(u)=v$ we have 
   $$
   {\mnn M'}^{(\pi)}(h'(u)) = {\mnn M'}^{(\pi)}(\xi(h(u))) = \xi(\mnn M^{(\pi)}(h(u)))
   = \xi(h(v)) = h'(v).
   $$
\end{proof}

The following object is useful for the reduction from PCSP to MC, both in the crisp and valued settings. 

\begin{definition}[Canonical matrix] \label{def:canonicalMatric} 
Let $\psi \subseteq A^Z$ be a relation. The \emph{canonical matrix} for $\psi$ is the matrix 
$\GM{\psi} \in A^{Z \times \psi}$
defined by
$$
\GM{\psi}(z,\tuple{a}) = \tuple{a}(z)
\mbox{\quad for every }
z \in Z, \ \tuple{a} \in \psi.
$$
\end{definition}

First we note that, by construction, the canonical matrix has the following properties.

\begin{lemma} \label{lem:canonicalMatrix}
    Let $\psi \subseteq A^Z$ be a relation and $\GM{\psi}$ its canonical matrix. Then $\row{z}{\GM{\psi}}$ is the restriction of $\proj{\ar(\psi)}{z}$ to $\psi$ and $\col{\tuple{a}}{\GM{\psi}} = \tuple{a}$.  
\end{lemma}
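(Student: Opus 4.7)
The plan is to unfold the definitions directly; there is no real obstacle here, as the statement is essentially a bookkeeping verification that the canonical matrix of~\cref{def:canonicalMatric} does what its indexing suggests.

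First I would recall that, since $\psi \subseteq A^Z$ is a $Z$-ary relation, we have $\ar(\psi) = Z$, and the canonical matrix $\GM{\psi}$ is an element of $A^{Z \times \psi}$ whose $(z,\tuple a)$-entry is $\GM{\psi}(z,\tuple a) = \tuple a(z)$. Then I would compute the column indexed by $\tuple a \in \psi$ directly: by the definition of $\col{\tuple a}{\GM{\psi}}$, it is the $Z$-tuple $z \mapsto \GM{\psi}(z,\tuple a) = \tuple a(z)$, which is literally $\tuple a$ viewed as a function $Z \to A$. This establishes the second claim.

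For the first claim, I would unfold $\row{z}{\GM{\psi}}$ similarly as the $\psi$-tuple $\tuple a \mapsto \GM{\psi}(z,\tuple a) = \tuple a(z)$. On the other hand, by definition of the projection $\proj{\ar(\psi)}{z} = \proj{Z}{z} : A^Z \to A$, the map $\proj{Z}{z}$ sends $\tuple a \in A^Z$ to $\tuple a(z)$. Restricting this projection to the subset $\psi \subseteq A^Z$ therefore gives exactly the function $\tuple a \mapsto \tuple a(z)$ on $\psi$, which coincides with $\row{z}{\GM{\psi}}$.

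Both claims thus follow immediately by matching the defining formulas. The only mildly subtle point is remembering the sort bookkeeping: $\row{z}{\GM{\psi}}$ lands in $(A_{\sort(z)})^\psi$, which is consistent with the fact that the projection $\proj{Z}{z}$ also takes values in $A_{\sort(z)}$, so the ``restriction'' is well typed as a $\tau$-sorted object.
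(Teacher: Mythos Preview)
Your proposal is correct and matches the paper's approach: the paper simply notes that the lemma holds ``by construction'' without giving a written-out proof, and your unfolding of the definitions of $\col{\tuple a}{\cdot}$, $\row{z}{\cdot}$, and $\proj{Z}{z}$ is exactly the routine verification that phrase stands in for.
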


\begin{proposition}[From PCSP to MC]\label{prop:PCSPleqMC}
    Let $\ab$ be a promise template and $\mnn M = \pol \ab$. If $k$ is a sufficiently large integer, then
    $\PCSP \ab \leq \mc( \mnn M, k)$.
\end{proposition}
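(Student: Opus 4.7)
The plan is to give an explicit polynomial-time reduction that mirrors how the original PCSP formula assembles local constraints. Given an instance $\Phi = \bigwedge_{i \in I} \phi_i(\tuple{x}_i)$ of $\pcsp \ab$ over a $\tau$-sorted variable set $X$, I would construct an $\mc(\mnn M, k)$ instance by setting $U = I$ (one ``constraint'' variable per conjunct), $V = X$ (one ``variable'' variable per original variable), $D_i = \phi_i^{\bfa}$ for $i \in I$, and $D_x = A_{\sort(x)}$ for $x \in X$. For each $i \in I$ and each coordinate $z \in \ar(\phi_i)$ I would add the minor condition $\pi_{i,z}(i) = \tuple{x}_i(z)$, where $\pi_{i,z} : \phi_i^{\bfa} \to A_{\sort(z)}$ is the domain-codomain restriction of $\proj{\ar(\phi_i)}{z}$. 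Choosing $k = \max\bigl(\max_{\phi \in \sigma}|\phi^{\bfa}|,\ \max_{t \in \tau}|A_t|\bigr)$ (which depends only on the fixed template) ensures $|D_x| \le k$ for every $x \in U \cup V$, and the construction is clearly polynomial time.

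For completeness I would take an $\mathbf{A}$-solution $h: X \to A$ to $\Phi$ and define $h^\ast(x) = h(x)$ for $x \in V$ and $h^\ast(i) = h \circ \tuple{x}_i \in \phi_i^{\bfa} = D_i$ for $i \in U$. The minor conditions hold because $\pi_{i,z}(h^\ast(i)) = (h \tuple{x}_i)(z) = h(\tuple{x}_i(z)) = h^\ast(\tuple{x}_i(z))$.

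The main step is soundness. Suppose $g$ is a function with $g(x) \in \mnn M^{(D_x)}$ satisfying $\mnn M^{(\pi_{i,z})}(g(i)) = g(\tuple{x}_i(z))$, i.e. $g(i)^{(\pi_{i,z})} = g(\tuple{x}_i(z))$. Define $h: X \to B$ by $h(x) = g(x)(\id_{A_{\sort(x)}})$. The key trick is to use the canonical matrix $\GM{\phi_i^{\bfa}}$: by~\cref{lem:canonicalMatrix}, $\row{z}{\GM{\phi_i^{\bfa}}}$ is exactly $\pi_{i,z}$ (viewed as a $\phi_i^{\bfa}$-tuple of elements of $A_{\sort(z)}$), so
\begin{align*}
\bigl(g(i)\,\rows(\GM{\phi_i^{\bfa}})\bigr)(z)
= g(i)(\pi_{i,z})
= g(i)^{(\pi_{i,z})}(\id_{A_{\sort(z)}})
= g(\tuple{x}_i(z))(\id)
= h(\tuple{x}_i(z)).
\end{align*}
Hence $g(i)\,\rows(\GM{\phi_i^{\bfa}}) = h \circ \tuple{x}_i$. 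Since every column of $\GM{\phi_i^{\bfa}}$ lies in $\phi_i^{\bfa}$ and $g(i)$ is a polymorphism of $\ab$, this tuple lies in $\phi_i^{\bfb}$. So $\Phi^{\bfb}(h)$, witnessing that the original instance is not a $\no$ instance.

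The main conceptual obstacle—really the only non-routine part—is identifying the correct way to translate a polymorphism-valued assignment $g$ back into a genuine $\mathbf{B}$-assignment. Here the canonical matrix plus the calculation $g(i)^{(\pi_{i,z})}(\id) = g(i)(\pi_{i,z})$ is what links the abstract minor condition to the concrete row/column structure needed to invoke the polymorphism property; once that bridge is in place, both completeness and soundness follow directly from the definitions.
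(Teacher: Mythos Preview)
Your proposal is correct and follows essentially the same approach as the paper: the same construction ($U=I$, $V=X$, $D_i=\phi_i^{\bfa}$, $D_x=A_{\sort(x)}$, projection-restriction minor conditions), the same completeness witness, and the same soundness argument via the canonical matrix $\GM{\phi_i^{\bfa}}$. The only cosmetic difference is that the paper writes the soundness assignment as $h(x)=(h'(x))(\incl_x)$ using the inclusion $\incl_x:A_{\sort(x)}\to A$, whereas you write $h(x)=g(x)(\id_{A_{\sort(x)}})$; these are the same map, and your key identity $g(i)^{(\pi_{i,z})}(\id)=g(i)(\pi_{i,z})$ matches the paper's $f^{(\pi_{i,z})}(\incl)=f(\incl\circ\pi_{i,z})=f(\row{z}{\GM{\phi_i^{\bfa}}})$.
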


\begin{proof}
From a conjunctive formula  $\Phi = \bigwedge_{i \in I} \phi_i(\tuple{x}_i)$ over $X$, 
where $\phi_i \in \sig$ and $\tuple{x}_i \in X^{\ar(\phi_i)}$, we create an
  instance of $\mc(\mnn M, k)$ as follows.
  \begin{enumerate}[label=\textnormal{\arabic*.}]
      \item $U = I$, $V = X$.
      \item $D_i = \phi_i^{\bfa}$, $D_x = A_{\sort(x)}$ for each $i\in I$, $x \in X$.
      \item For each $i \in I$  and $z \in \ar(\phi_i)$, we introduce the constraint $\pi_{i,z}(i) = \tuple{x}_i(z)$, where $\pi_{i,z}$ is the restriction of $\proj{\ar(\phi_i)}{z}: A^{\ar(\phi_i)} \to A$ to the set $\phi_i^{\bfa}$ with its codomain restricted to $A_{\sort(z)}$ so that $\pi_{i,z}: \phi_i^{\bfa} \to A_{\sort(z)}$. 
  \end{enumerate}    
Note that this is a valid instance as long as $k \geq |\phi^{\bfa}|$ and $k \geq |A_{\sort(x)}|$ for every  $\phi\in\sig$ and $x \in X$. We claim that for any such $k$, the above transformation, which is clearly polynomial-time, is a correct reduction.

In order to show completeness, assume that $h \in \Phi^{\bfa}$. We claim that
  the following function $h'$ is a witnessing assignment to the $\mc$ instance.
$$
h'(x) = h(x) \mbox{ for $x \in V = X$}
\mbox{\quad\quad and \quad\quad}
h'(i) = h \tuple{x}_i \mbox{ for $i \in U = I$}.
$$
Clearly $h'(x) \in D_x$ and, since $\phi_i(\tuple{x}_i)$ is satisfied in $\bfa$ by the assignment $h$, we also have $h'(i) \in D_{i}$.
For a minor condition $\pi_{i,z}(i)=\tuple{x}_i(z)$ where $z \in \ar(\phi_i)$, we have
$$
\pi_{i,z}(h'(i)) =
\proj{\ar(\phi_i)}{z}(h'(i)) = (h'(i))(z) = h \tuple{x}_i (z) =
h'(\tuple{x}_i(z)),
$$
which finishes the proof of completeness. 

For soundness of the reduction, assume that $h'$ is an assignment witnessing
  that the $\mc$ is not a $\no$ instance. We claim that the assignment $h: X \to
  B$ defined by $h(x) = (h'(x))(\incl_x)$, where $\incl_x: A_{\sort(x)} \to A$
  denotes the inclusion, witnesses that the original $\PCSP$ instance is not a
  $\no$ instance, i.e., that $h\tuple{x}_i \in \phi_i^{\bfb}$ for every constraint $\phi_i(\tuple{x}_i)$ in $\Phi$. 

Since $f := h'(i)$ is a polymorphism of $\ab$, we in particular have $f \ \rows(\GM{\phi_i^{\bfa}}) \in \phi_i^{\bfb}$ (where $\GM{\phi_i^{\bfa}}$ is the canonical matrix from~\cref{def:canonicalMatric}). We show that this tuple is equal to $h\tuple{x}_i$ by comparing coordinates. Let $z \in \ar(\phi_i)$. From the minor condition $\pi_{i,z}(i)=\tuple{x}_i(z)$ we obtain $f^{\pi_{i,z}} = h'(\tuple{x}_i(z))$. Applying this fact, \cref{lem:canonicalMatrix}, and definitions, we obtain
\arxiv{
\begin{align*}
h\tuple{x}_i(z) &=
h(\tuple{x}_i(z)) =
(h'(\tuple{x}_i(z)))(\incl_{\tuple{x}_i(z)}) =
f^{\pi_{i,z}}(\incl_{\tuple{x}_i(z)}) = 
f(\incl_{\tuple{x}_i(z)} \pi_{i,z}) \\ &=
f(\row{z}{\GM{\phi_i^{\bfa}}})=
(f \ \rows( \GM{\phi_i^{\bfa}})) (z)
\end{align*}
}
\conference{
\begin{align*}
h\tuple{x}_i(z) &=
h(\tuple{x}_i(z)) =
(h'(\tuple{x}_i(z)))(\incl_{\tuple{x}_i(z)}) \\ & = 
f^{\pi_{i,z}}(\incl_{\tuple{x}_i(z)}) = 
f(\incl_{\tuple{x}_i(z)} \pi_{i,z}) \\ &=
f(\row{z}{\GM{\phi_i^{\bfa}}})=
(f \ \rows( \GM{\phi_i^{\bfa}})) (z)
\end{align*}
}which finishes the proof.
\end{proof}

\begin{proposition}[From MC to PCSP] \label{prop:MCleqPCSP}
    Let $\ab$ be a promise template and $\mnn M = \pol \ab$. For any positive integer $k$, 
    $\mc( \mnn M, k) \leq \PCSP \ab$.
\end{proposition}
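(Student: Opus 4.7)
The plan is to construct a polynomial-time reduction from $\mc(\mnn M, k)$ to $\PCSP \ab$ by gluing together copies of the canonical formula from Proposition \ref{prop:canonical_formula}, one per variable of the $\mc$ instance, and identifying variables along the minor conditions. This mirrors the reverse reduction in Proposition \ref{prop:PCSPleqMC} in spirit, but uses the canonical formula as the ``gadget'' that turns $\mnn M^{(D_x)}$ into actual $\bfa$/$\bfb$-solutions.

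Concretely, given an instance of $\mc(\mnn M, k)$ with variable sets $U,V$, domains $(D_x)_{x \in U \cup V}$, and minor conditions $\pi(u) = v$, I will, for each $x \in U \cup V$, apply Proposition \ref{prop:canonical_formula} with $N = D_x$ to obtain a conjunctive $\Sigma$-formula $\Phi_x$ over the $\tau$-sorted variable set $A^{D_x}$. I then form $\Phi$ as the conjunction of all the $\Phi_x$ over the disjoint union $X = \bigsqcup_{x} A^{D_x}$, and quotient $X$ by the equivalence relation that identifies, for every minor condition $\pi(u) = v$ and every $\ba \in A^{D_v}$, the variable $(\ba, v)$ with $(\ba \circ \pi, u)$. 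Sort-compatibility of each identification is automatic since $\ba$ and $\ba \circ \pi$ land in the same sort of $A$. The total size of $\Phi$ is polynomial in the input since $\Sigma$, $|A|$, and $k$ are fixed.

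For completeness, given a witness $h$ for the $\mc$ instance with $h(x) \in D_x$ and $\pi(h(u)) = h(v)$ for each minor condition, I will define the assignment $h'$ on the quotiented variable set into $A$ by $h'(\ba, x) = \ba(h(x))$. This respects identifications because
\[
h'(\ba, v) = \ba(h(v)) = \ba(\pi(h(u))) = (\ba \circ \pi)(h(u)) = h'(\ba \circ \pi, u).
\]
Furthermore, the restriction of $h'$ to $A^{D_x}$ equals $\proj{D_x}{h(x)}$, which satisfies $\Phi_x^{\bfa}$ by the first bullet of Proposition \ref{prop:canonical_formula}; summing over $x$ gives $\Phi^{\bfa}(h')$.

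For soundness, given an assignment $h'$ with $\Phi^{\bfb}(h')$, set $f_x := h'|_{A^{D_x}}$. The second bullet of Proposition \ref{prop:canonical_formula} yields $f_x \in \pol^{(D_x)}\ab = \mnn M^{(D_x)}$. The identifications translate into $f_v(\ba) = f_u(\ba \circ \pi) = f_u^{(\pi)}(\ba)$ for all $\ba \in A^{D_v}$, i.e., $f_v = \mnn M^{(\pi)}(f_u)$ for every minor condition $\pi(u) = v$. Hence $x \mapsto f_x$ witnesses that the instance is not a $\no$ instance. The argument is essentially routine bookkeeping; the only mild point to watch is sort-compatibility of the identifications, which is automatic, so I do not foresee a genuine obstacle.
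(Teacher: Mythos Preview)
Your proposal is correct and follows essentially the same approach as the paper: build the instance by taking, for each variable $x$, the canonical formula over $A^{D_x}$, form the disjoint union, and then identify $(\ba\pi,u)$ with $(\ba,v)$ for every minor condition; completeness uses the projection assignment $\proj{D_x}{h(x)}$ and soundness reads off polymorphisms and minor identities from the canonical formula and the identifications. The paper organizes the argument into two explicit steps with labelled properties (C), (S), (M), but the content is identical to what you wrote.
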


\begin{proof}
    We transform an instance of $\mc(\mnn M,k)$ (with notation as
    in~\cref{def:MC}) to an instance of $\PCSP \ab$ in two steps.

    In the first step, for every $x \in X := U \cup V$ we take a canonical formula $\Phi_x$ from~\cref{prop:canonical_formula} for the arity set $N = D_x$ (the set of variables is thus $A^{D_x}$). We make the variable sets disjoint and create $\Phi$ as a conjunction of all the $\Phi_x$. Specifically, let the variable set of $\Phi$ be $Y = \{(\tuple{a},x) \mid x \in U \cup V, \tuple{a} \in A^{D_x}\}$ and the variable set of $\Phi_x$ (after renaming variables) be $\{(\tuple{a},x) \mid \tuple{a} \in A^{D_x}\}$. Note that assignments $f:Y \to A$ correspond exactly to collections $(f_x: A^{D_x} \to A)_{x \in X}$ (specifically, $f_x(\tuple a) = f(x,\tuple{a})$) and, similarly, assignments $Y \to B$ correspond to collections $(f_x: A^{D_x} \to B)_{x \in X}$. We do not formally distinguish these objects in this proof.  From the properties of the canonical instances $\Phi_x$ we in particular have the following.
    \begin{enumerate}
        \item[(C)]\label{item:C} $\Phi^{\bfa}((f_x)_{x \in X})$ whenever each $f_x$ is a $D_x$-ary projection on $A$. 
        \item[(S)]\label{item:S} If $\Phi^{\bfb}((f_x)_{x \in X})$, then each $f_x$ is a $D_x$-ary polymorphism of $\ab$. 
    \end{enumerate}

    In the second step, we create from $\Phi$ the resulting instance $\Psi$ of $\PCSP \ab$ by identifying, for each minor condition $\pi(u)=v$ and each $\tuple{a} \in A^{D_v}$, the variables $(\tuple{a}\pi,u)$ and $(\tuple{a},v)$. Now assignments for $\Psi$ from the new set of variables to $A$ correspond to those assignments $(f_x: A^{D_x} \to A)_{x \in X}$ for $\Phi$ that have equal values on all pairs of identified variables, i.e., to those that satisfy $f_u(\tuple{a}\pi) = f_v(\tuple{a})$; in other words, 
    \begin{itemize}
        \item[(M)]  $f_u^{(\pi)} = f_v$ for each minor condition $\pi(u)=v$.
    \end{itemize}
    Of course, an analogous observation remains valid for assignments to $B$. 

    The transformation is clearly polynomial-time 
    and it is a reduction: If $h: X \to A$ is a witness that the original $\mc$
    instance is a $\yes$ instance, then $(\proj{D_x}{h(x)})_{x \in X}$ corresponds to a satisfying assignment for $\Psi$. Indeed, (M) is satisfied because $(\proj{D_u}{h(u)})^{(\pi)} = \proj{D_u}{\pi(h(u))} = \proj{D_u}{h(v)}$ and the constraints are satisfied because of (C).
    On the other hand, if $(f_x)_{x \in X}$ corresponds to a satisfying assignment of the $\pcsp$, then $f_u^{(\pi)} = f_v$ by (M) and each $f_x$ is a polymorphism by (S). 
\end{proof}

\section{Valued Promise CSP} \label{sec:ValuedPCSP}

The generalization of PCSP to the valued setting is obtained by replacing relations by valued relations, that is, mappings $A^Z \to \qq \cup \{-\infty\}$, and suitably adjusting the concepts.  The crisp PCSPs can be modelled as Valued PCSPs with $\{-\infty,0\}$-valued relations. 

This section covers the basics up to a valued and improved version of canonical formulas. A generalization of minion homomorphisms and the main reduction theorem are given in \cref{sec:valued-homos} and examples of valued homomorphisms are shown in \cref{sec:examples}.

\subsection{Preliminaries}

We denote by $\qqpos$ ($\qqnonneg$) the set of positive (nonnegative) rational numbers and by $\qqinf$ the set of rational numbers together with an additional symbol $-\infty$. We naturally extend the operations and order, leaving $0 \cdot -\infty$ undefined. 

We will work with probability distributions on finite sets with rational probabilities, so we can formally regard a probability distribution on $N$ as a function $\mu: N \to \qqnonneg$ such that $\sum_{n \in N} \mu(n)=1$. We denote by $\Delta N$ the set of probability distributions on $N$. The \textit{support} of a probability distribution $\mu \in \Delta N$ is the set $\supp(\mu)=\{n\in N \mid \mu(n)>0\}$.

If $f: N \to N'$ and $\mu \in \Delta N$, we define $f(\mu) \in \Delta N'$ in the natural way $(f(\mu))(n') = \sum_{n; f(n)=n'} \mu(n)$, that is, $n'$ can be sampled according to $f(\mu)$ by sampling $n$ according to $\mu$ and computing $n'=f(n)$. 
We also use the notation $F(\mu)$ when $\mu \in \Delta N$ and $F$ is a probability distribution on a set of mappings $N \to N'$, i.e., to sample $F(\mu)$ we independently sample $n \sim \mu$, $f \sim F$ and compute $f(n)$. 

Given $\mu \in \Delta N$ and a function $f: N \to \qq$, we denote by $\ex{n \sim \mu} f(n)$ the expected value of $f(n)$ when $n$ is sampled according to $\mu$, i.e., $\ex{n \sim \mu} f(n) = \sum_{n \in N} \mu(n)f(n)$.

A basic tool for some of the proofs is Farkas' lemma~\cite{Schrijver86}. The following formulation will be convenient for us. In the statement, juxtaposition denotes the standard matrix multiplication, ${}^T$ is used for the transposition, and $\tuple{x}\geq0$ means that all the components are  nonnegative.

\begin{theorem}[Farkas' lemma] \label{thm:Farkas}
    Let $I,J$ be finite sets, $F \in \qq^{I \times J}$, and $\tuple{q} \in \qq^{I}$. The following are equivalent.
    \begin{enumerate}[label=\textnormal{(\roman*)}] 
    \item $\exists \tuple{y}  \in (\qqnonneg)^{J} \ F\tuple{y} \leq \tuple{q}$. 
    \item $\forall \tuple{x} \in (\qqnonneg)^{I} \ (F^T\tuple{x} \geq 0 \implies  \tuple{q}^T\tuple{x} \geq 0).$
    \end{enumerate}
\end{theorem}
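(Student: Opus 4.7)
The plan is to verify the two directions separately. The forward direction (i) $\Rightarrow$ (ii) is a one-line algebraic manipulation: given $\tuple{y} \in (\qqnonneg)^J$ with $F\tuple{y} \leq \tuple{q}$ and $\tuple{x} \in (\qqnonneg)^I$ with $F^T\tuple{x} \geq 0$, left-multiplying the first inequality by $\tuple{x}^T$ preserves it (since $\tuple{x}$ is nonnegative), and the resulting left-hand side $\tuple{x}^T F \tuple{y} = (F^T\tuple{x})^T \tuple{y}$ is a sum of products of nonnegative numbers, hence nonnegative, yielding $\tuple{q}^T \tuple{x} \geq 0$.

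For the substantive direction (ii) $\Rightarrow$ (i), I would argue by contrapositive. Assuming no $\tuple{y} \in (\qqnonneg)^J$ satisfies $F\tuple{y} \leq \tuple{q}$, the goal is to produce a certificate $\tuple{x} \in (\qqnonneg)^I$ with $F^T\tuple{x} \geq 0$ and $\tuple{q}^T\tuple{x} < 0$. Geometrically, consider the cone
$$
C = \bigl\{ F\tuple{y} + \tuple{z} : \tuple{y} \in (\qqnonneg)^J,\ \tuple{z} \in (\qqnonneg)^I \bigr\} \subseteq \qq^I,
$$
which is finitely generated and therefore closed (as a subset of $\rr^I$). The hypothesis is precisely that $\tuple{q} \notin C$, so a separating hyperplane provides a vector $\tuple{x}$ with $\tuple{x}^T\tuple{c} \geq 0$ for all $\tuple{c} \in C$ and $\tuple{x}^T \tuple{q} < 0$. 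Substituting the generators of the nonnegative orthant into $\tuple{c}$ forces $\tuple{x} \geq 0$, while substituting $\tuple{c} = F\tuple{y}$ for arbitrary $\tuple{y} \geq 0$ forces $F^T \tuple{x} \geq 0$, completing the argument.

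The main technical obstacle is the insistence on rationality: the separating hyperplane theorem natively produces a real $\tuple{x}$, whereas the statement requires $\tuple{x} \in \qq^I$. I would circumvent this either by invoking the elementary fact that a nonempty rational polyhedron contains rational points (so the mere existence of a real separator implies a rational one for the $\qq$-defined system $F^T\tuple{x} \geq 0,\ \tuple{x} \geq 0,\ \tuple{q}^T\tuple{x} \leq -1$), or by reproving the result from scratch via Fourier--Motzkin elimination, a purely algebraic variable-by-variable elimination that stays inside $\qq$ throughout and simultaneously delivers both closedness of $C$ and a rational separating vector. Since this is a classical result I would ultimately just cite Schrijver and note that either route yields the rational version as stated.
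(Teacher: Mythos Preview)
Your proposal is correct, and in fact you have done more than the paper: the paper gives no proof at all but simply states the result with a citation to Schrijver~\cite{Schrijver86}, treating it as a standard tool. Your sketch of both directions is sound, including your careful handling of the rationality issue via rational points of rational polyhedra or Fourier--Motzkin elimination; either route is standard and yields exactly the rational version stated.
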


\subsection{Valued Structures and the Valued PCSP}

\begin{definition}[Valued relational structure]
  Let $\tau$ be a set (of sorts), and let $Z$ and $A$ be $\tau$-sorted sets. A \emph{$Z$-ary valued relation on $A$}, or a \emph{$Z$-ary payoff function on $A$}, 
  is a function $\phi: A^Z \to \overline{\qq}$. The \textit{feasibility set} of
  $\phi$, denoted by $\feas(\phi)$, is the pre-image of $\qq$ under $\phi$. 
  
  Let $\Sigma$ be a signature. A \emph{valued $\Sigma$-structure} $\bfa$ consists of a $\typ$-sorted set $A$ called the \emph{domain} and an $\ar(\phi)$-ary valued relation $\phi^\bfa$ on $A$, the \emph{interpretation of $\phi$ in $\bfa$}, for every $\phi \in \sig$. Such a structure $\bfa$ is said to be finite if $A$ is finite.   For a rational number $c$ we write $\bfa \leq c$ if $\phi^{\bfa}(\tuple{a}) \leq c$ for every $\phi \in \sig$ and $\tuple{a} \in A^{\ar(\phi)}$.

  For a valued $\Sigma$-structure $\bfa$, the \emph{feasibility structure},
  denoted by $\feas(\bfa)$, is the (non-valued) $\Sigma$-structure obtained by replacing each $\phi^{\bfa}$ by $\feas(\phi^{\bfa})$. 
\end{definition}

Similarly as in the crisp case, we implicitly assume that in a valued structure every $\feas(\phi^{\bfa})$ is nonempty.

\begin{definition}[Payoff formula]
    Let $\Sigma$ be a signature and $X$ a finite $\typ$-sorted set. A \emph{payoff formula} over $X$ in the signature $\Sigma$, or a \emph{payoff $\Sigma$-formula}, is a formal expression of the form 
    $$\Phi = \sum_{i \in I} w_i \phi_i(\tuple{x}_i), $$ 
    where $I$ is a finite nonempty set, and $w_i \in \qqnonneg$  (\emph{weights}), $\phi_i \in \sig$, $\tuple{x}_i \in X^{\ar(\phi_i)}$ for all $i \in I$.
    The \emph{weight of $\Phi$} is $\weight(\Phi) = \sum_{i \in I} w_i$.

    Given additionally a valued $\Sigma$-structure $\bfa$, the \emph{interpretation of $\Phi$ in $\bfa$}, or \emph{the $X$-ary valued relation defined in $\bfa$ by $\Phi$}, is the $X$-ary valued relation on $A$ defined by 
    $$
    \Phi^{\bfa}(h) = \sum_{i \in I} w_i \phi_i^{\bfa}(h\tuple{x}_i),
    $$ 
    where  summands $0 \cdot -\infty$ are evaluated as $-\infty$ (but we keep $0 \cdot -\infty$ undefined in different contexts).
\end{definition}

We allow empty formulas $\Phi$, and define $\weight(
\Phi)=0$ and $\Phi^{\bfa}(h) = 0$.

Note that the convention that $0 \cdot -\infty = -\infty$ ensures that $\feas(\Phi^{\bfa})$ is equal to the interpretation of $\bigwedge_{i \in I} \phi_i(\tuple{x}_i)$ in $\feas(\bfa)$ and, for $h \in \feas(\Phi^{\bfa})$, the sum defining $\Phi^{\bfa}(h)$ does not contain any infinities.

\begin{definition}[Valued PCSP]
A \textit{valued promise template} is a quadruple $\abcs$ where
 \begin{itemize}
     \item $\bfa$, $\bfb$ are valued relational structures in the same signature $\Sigma$, and
     \item $c,s \in \qq$ are the \textit{completeness} and \textit{soundness} parameters respectively
 \end{itemize}
    such that 
    $\exists h \ \Phi^{\bfa}(h) \geq c \,\weight(\Phi)$ implies $\exists h \ \Phi^{\bfb}(h) \geq s \,\weight(\Phi)$ for every payoff $\Sigma$-formula $\Phi$.

Given a valued promise template $\abcs$, the \emph{Promise Constraint
  Satisfaction Problem over $\abcs$}, denoted by $\PCSP\abcs$, is the following problem. 
\begin{enumerate}
    \item[\textsf{Input}] a finite $\tau$-sorted set $X$ and a payoff $\Sigma$-formula $\Phi$ over $X$.
    \item[\textsf{Output}] \yes if $\exists h \ \Phi^{\bfa}(h) \geq c\,
      \weight(\Phi)$; $\no$ if $\forall h \ \Phi^{\bfb}(h) < s
      \,\weight(\Phi)$.\footnote{The promise is that we are in one of the two cases, i.e., not in the case that $\forall h \ \Phi^{\bfa}(h)<c\,\weight(\Phi)$ and $\exists h \ \Phi^{\bfb}(h)\geq s w(\Phi)$.}
\end{enumerate}
\end{definition}

Let us start the discussion about this generalization of PCSPs by giving examples of problems included in this framework. 

First observe that Valued PCSPs indeed generalize crisp PCSPs: for a crisp template $\abp$ we define a valued promise template $(\bfa,\bfb,0,0)$ by setting $\phi^{\bfa}(\tuple{a})=0$ if $\tuple{a} \in \phi^{\bfa'}$ and $\phi^{\bfa}(\tuple{a})=-\infty$ otherwise for all $\phi \in \sig$, $\ba \in A^{\ar(\phi)}$, and similarly for $\bfb'$. Clearly, $\pcsp \abp$ is equivalent to $\pcsp (\bfa,\bfb,0,0)$. 

Another natural valued promise template associated to a crisp template $\abp$ is $\abcs$, where $\phi^{\bfa}(\tuple{a})=1$ if $\tuple{a} \in \phi^{\bfa'}$, $\phi^{\bfa}(\tuple{a})=0$ otherwise, and $c \geq s$ are the completeness and soundness parameters. PCSPs over such templates include e.g.  approximation problems for MaxCSPs, such as the following concrete problems.

\begin{example}[$\lin(c,s)$] \label{ex:valued-lin}
    Given a weighted system of linear equations over $\mathbb{Z}_2$ with exactly 3 variables in each equation, accept if there exists an assignment that satisfies a $c$-fraction of the equations (taking weights into account), and reject if there is no assignment that satisfies an $s$-fraction of the equations.

    This problem is $\PCSP(\bfa,\bfa,c,s)$ where $A=\{0,1\}$ and the signature consists (as in~\cref{ex:lin}) of two $[3]$-ary symbols $\phi_0,\phi_1$ interpreted as $\phi_i^{\bfa}(a_1,a_2,a_3)=1$ if $a_1+a_2+a_3 = i \pmod 2$ and $\phi_i^{\bfa}(a_1,a_2,a_3)=0$ otherwise.
    
    We denote this PCSP as well as the template by $\lin(c,s)$.
    Note that $\lin(1,1)$ is another formulation of $\lin$. 
\end{example}

The following maximization version of~\cref{ex:AGC}, first introduced in~\cite{NakajimaZ23},
 nicely combines the promise and valued frameworks. 

\begin{example}[Maximum 3- versus 5-coloring of graphs] \label{ex:maxAGC}
Given an edge-weighted graph $G$, the task is to accept if $G$ admits a 3-coloring with a $c$-fraction of non-monochromatic edges, and reject if $G$ does not admit a 5-coloring with an $s$-fraction of non-monochromatic edges. This is $\PCSP(\mathbf{K}_3,\mathbf{K}_5,c,s)$, where $\mathbf{K}_k$ is the $k$-clique, interpreted here as having payoff 1 on the edges and 0 on non-edges. 
\end{example}

Another example that fits in our framework is a variant of Example~\ref{ex:maxAGC} concerned with a 3- vs 5- coloring of a large induced subgraph of a given graph~\cite{Hecht23:approx}.

A gap version of~\cref{ex:label_cover}, the Gap Label Cover problem, is a starting point for many NP-hardness results in approximation.

\begin{example}[$\glc_{D,E}(c,s)$: Gap Label Cover] \label{ex:gap_label_cover}
Fix disjoint finite sets $D,E$ and rationals $1 \geq c \geq s > 0$. Given a weighted
bipartite graph with vertex set $U \cup V$ and a constraint $\pi_{uv}: D \to E$ for each edge $\{u,v\}$, accept if a $c$-fraction (taking weights into account) of the constraints can be satisfied, and reject if not even an $s$-fraction of the constraints can be satisfied. 

This problem is $\pcsp(\bfa,\bfa,c,s)$, where the sort symbols are $D$ and $E$, $A  = D \cup E$, the signature consists of all functions $\pi: D \to E$ of arity [2] ($\sort(1)=D$, $\sort(2)=E$), interpreted as $\pi^{\bfa}(d,e)=1$ if $\pi(d)=e$ and $\pi^{\bfa}(d,e)=0$ otherwise.

A consequence of the PCP theorem~\cite{Arora98:jacm-proof,Dinur07:jacm} and the Parallel Repetition theorem~\cite{Raz98} is that for every $\epsilon>0$ there exist $D,E$ such that $\glc_{D,E}(1,\epsilon)$ is NP-hard. 
\end{example}

Problems with $\bfa \leq c$ are said to have \emph{perfect completeness}.
By giving up perfect completeness in the Gap Label Cover and restricting the functions $\pi: D \to E$ to be bijections, we obtain the well-known Unique Games problem, a starting point of many conditional NP-hardness results. 

\begin{example}[Unique Games] \label{ex:unique_games}
We fix disjoint sets $D$ and $E$ such that $|D|=|E|$ and $\epsilon>0$, and define $\bfa$ as in \cref{ex:gap_label_cover} but only using bijective $\pi:D \to E$.  

The Unique Games Conjecture of Khot~\cite{Khot02stoc} states that for every $\epsilon >0$ there exist $D$ and $E$ such that $\pcsp(\bfa,\bfa,1-\epsilon,\epsilon)$ is NP-hard.
\end{example}

Nice examples where infinite and nonzero finite payoffs both appear are the vertex cover and independent set problems in graphs. While they are in some sense complementary,\footnote{A set of vertices is independent iff its complement is a vertex cover.} it is known that these two problems differ significantly with respect to approximability: vertex cover admits a $2$-approximation whereas there is no constant factor approximation for independent set. The following examples show the optimization versions of these problems.

\begin{example}[Independent Set]
\label{ex:ind_set}
An \textit{independent set} in a graph $G$ is a subset $S$ of the vertices of $G$ such that every edge of the graph is incident to \textit{at most} one vertex in $S$. 
In the Independent Set problem with parameter $1 \geq c > 0$, the task is, given a vertex-weighted graph $G$, to accept if $G$ has an independent set of fractional size at least $c$, and reject otherwise.

Independent Set fits in our framework as $\pcsp(\bfa,\bfa,c,c)$ where $1 \geq c > 0$ (the lower bound on the weight of the independent set), $A=\{0,1\}$, and the signature consists of a unary relational symbol $\phi$ interpreted as $\phi^\bfa(a)=a$ (enforcing that the fractional size of the independent set is at least $c$) and a binary relational symbol $\psi$ interpreted as $\psi^\bfa(1,1)=-\infty$ and $\psi^\bfa(a_1,a_2)=0$ for all other values of $a_1,a_2$ (enforcing that if the subset of the vertices that are assigned $1$ yields a finite payoff, then it is an independent set).
\end{example}

\begin{example}[Vertex Cover] \label{ex:vertex_cover}
A \textit{vertex cover} of a graph $G$ is a subset $S$ of the vertices of $G$ such that every edge of the graph is incident to \textit{at least} one vertex in $S$. In the Vertex Cover problem with parameter $c$, the task is to accept if a vertex-weighted graph $G$ has a vertex cover of fractional size at most $c$, and reject otherwise.

Vertex cover is a minimization problem. However, it can be phrased in our framework as $\pcsp(\bfa,\bfa,-c,-c)$,
where the domain and signature are as in \cref{ex:ind_set} but the symbols are interpreted as $\phi^\bfa(a)=-a$,   
$\psi^\bfa(0,0)=-\infty$, and $\psi^\bfa(a_1,a_2)=0$ otherwise.
\end{example}

\medskip
We now discuss several possible variations and modifications of the definition of valued PCSPs, ordered by the significance of the difference they would cause. 

First, we have decided for the maximization version of the definition. The corresponding minimization problem can be obtained by multiplying all payoff functions as well as $c$ and $s$ by $-1$ (cf. \cref{ex:vertex_cover}), so results for our version can be easily transferred to the minimization version and vice versa.

Second, note that by shifting (and/or scaling) the payoff functions in $\bfa$ and $\bfb$ and modifying $c$ and $s$ in the same way, we get an equivalent problem. It would therefore be possible to fix $c,s$, e.g., to $c=s=0$ and define a template just as a pair $\ab$. Our choice here was inspired by a more natural formulation of problems such as $\lin(c,s)$.

Third, a natural version of the definition is to require $\Phi$ to be
\emph{normalized}, that is, $w(\Phi)=1$. An instance can then be regarded as a
probability distribution on constraints; $\Phi^{\bfa}(h)$ can be interpreted as
the expected value of $\phi^{\bfa}(h\tuple{x}_i)$ when constraint
$\phi(\tuple{x}_i)$ is selected according to this distribution. Note that an
equivalent normalized instance can be obtained by dividing all the weights by $w(\Phi)$, unless $w(\Phi)=0$, i.e., all the weights are zero. Therefore this alternative formulation differs from our formulation only very slightly.

Fourth, a more substantial change would be to require that all the weights be equal, say to 1. We regard the presented version as slightly more natural. Note however that it is often the case that positive (algorithmic) results work even for the weighted version and negative (hardness) results already for the non-weighted one, by emulating weights via repeated constraints. 

Fifth, the most substantial change would be to not fix $c,s$ in advance and
rather make them part of the instance. An important intermediate choice is to fix
$s/c$: a template would be a triple $(\bfa,\bfb,\IO)$, an instance would include
$c$ (not $s$), and $\yes$ and $\no$ would be defined in the same way as in the
definition with $s=\IO c$. Such a framework includes constant factor
approximation problems for MaxCSP; for $\IO=1$ and $\bfa=\bfb$ this framework
essentially coincides with general-valued
CSPs~\cite{DBLP:journals/ai/CohenCJK06,Kolmogorov17:sicomp}. In fact, the
algebraic framework discovered for $\IO=1$ and general $\bfa,\bfb$ by
Kazda~\cite{Kazda21} was among the starting points for this work. We give basics of this framework in~\cref{app:kazda}.

\subsection{Polymorphisms} \label{subsec:valued_polymorphisms}

A natural generalization of an $N$-ary operation from $A$ to $B$ to the valued world consists of a probability distribution on $N$ and a probability distribution on a set $\mathcal{F}$ of (normal) $N$-ary operations from $A$ to $B$. In our situation $\mathcal{F}$ will be the set of all $N$-ary polymorphisms of the pair of feasibility structures corresponding to a pair $\ab$ of valued structures. We therefore denote
$$
\polfeas \ab = \pol(\feas(\bfa),\feas(\bfb))
$$
and introduce the following concept. 

\begin{definition}[Weighting] Let $\mnn M$ be a minion and $N$ a finite set. An \emph{$N$-ary weighting of $\mnn M$} is a pair 
$$
\Omega=(\OmegaI,\OmegaO) \quad \textnormal{ where } \quad \OmegaI \in \Delta N, \ \OmegaO \in \Delta \mnn{M}^{(N)}.
$$
\end{definition}

Relation-matrix pairs for valued structures are introduced in an analogous fashion as in the crisp case. Given such a pair and a weighting $\Omega$ of $\polfeas\ab$ we have two naturally associated rationals: the expected payoff in $\bfa$ of the $n$-th column, when $n$ is selected according to $\OmegaI$; and the expected payoff in $\bfb$ of the tuple obtained by applying $f$ to the rows of the matrix, when $f$ is selected according to $\OmegaO$.  

\begin{definition}[Relation-matrix pairs, input and output payoffs]
Let $\ab$ be a pair of valued $\Sigma$-structures, $\mnn M = \polfeas \ab$,  and $N$ a finite set. We define
$$
  \Mat{\bfa}{N} = \{(\phi, M) \mid \phi \in \sigma, M \in A^{\ar(\phi) \times N}, \ \forall n \in N \  \col{n}{M} \in \feas(\phi^{\bfa})\}.
$$
For an $N$-ary weighting $\Omega$ of $\mnn M$ and $(\phi,M) \in \Mat{\bfa}{N}$ we define
\arxiv{
$$
\OmegaI[\phi,M] = \Ex{n \sim \OmegaI} \,\phi^\bfa(\col{n}{M}) \quad\textnormal{and}\quad 
\OmegaO[\phi,M] = \Ex{f\sim \OmegaO} \,\phi^\bfb(f \,\rows (M)).
$$
}\conference{
$$
\OmegaI[\phi,M] = \! \! \Ex{n \sim \OmegaI} \! \! \phi^\bfa(\col{n}{M}) \quad \! \textnormal{and} \quad \!  
\OmegaO[\phi,M] = \! \! \Ex{f\sim \OmegaO} \! \! \phi^\bfb(f \,\rows (M)).
$$
}For an $N$-ary weighting $\Omega$ of $\mnn M$ and functions $\alpha:N \to \qq$, $\beta: \mnn M^{(N)} \to \qq$ we define 
$$ 
\OmegaI[\alpha] =  \Ex{n \sim \OmegaI} \, \alpha(n) 
\quad\textnormal{and}\quad 
\OmegaO[\beta] = \Ex{f\sim \OmegaO} \, \beta(f).
$$
\end{definition}

For a weighting $\Omega$, each relation-matrix pair thus gives us a point $(\OmegaI[\phi,M],\OmegaO[\phi,M])$ in the plane $\qq^2$. We call $\Omega$ a $\IO$-polymorphism if all these points lie on or above the line with slope $\IO$ going through $(c,s)$.

\begin{definition}[Polymorphisms]
   Let $\ab$ be a pair of valued $\Sigma$-structures, $\mnn M = \polfeas \ab$, and $c,s \in \qq$.  
   \begin{itemize}
   \item Let  $\IO \in \qqnonneg$. An $N$-ary weighting $\Omega$ of $\mnn M$ is
   a \emph{$\IO$-polymorphism} of $\abcs$ if 
   $$
   \forall (\phi,M) \in \Mat{\bfa}{N} \ \ \OmegaO[\phi,M] - s \geq \IO (\OmegaI[\phi,M]-c).
   $$
   \item An $N$-ary weighting $\Omega$ of $\mnn M$ is
   a \emph{polymorphism} of $\abcs$ if it is a $\IO$-polymorphism for some $\IO \in \qqnonneg$. 
    \item A finite family $(\Omega_j)_{j \in J}$ of weightings of $\mnn M$ of arities $\caln = (N_j)_{j \in J}$  is an $\caln$-ary \emph{plurimorphism}  of $\abcs$ if there exists $\IO \in \qqnonneg$ such that every $\Omega_j$ is a $\IO$-polymorphism. 
   \end{itemize}
   
   We will denote by $\iopol^{(N)} \abcs$, $\pol^{(N)} \abcs$, and $\multipol^{(\caln)} \abcs$ the sets of all $N$-ary $\IO$-polymorphism, $N$-ary polymorphisms, and $\caln$-ary plurimorphisms, respectively, 
   and by $\iopol \abcs$, $\pol \abcs$, and $\multipol \abcs$  the collections of the corresponding morphisms indexed by their arities.
\end{definition}

We remark that one could also define ``$\infty$-polymorphisms'' corresponding to vertical lines: the inequality would be $0 \geq \OmegaI[\phi,M]-c$, so having such a polymorphism is equivalent to $\bfa \leq c$. This would remove some (weak) assumptions in the following, but would cause some other inconveniences.

For a polymorphism $\Omega$, all points in $\qq^2$ determined by relation-matrix pairs lie above or on a line going through $(c,s)$ with a nonnegative slope $\IO$. In particular, these points avoid the region $R=\{(x,y) \mid x \geq c, y < s\}$ and so does any convex combination of these points (since half-planes are convex). It is easy to see that, conversely, if the convex hull of these points avoids $R$, then $\Omega$ is a polymorphism. This is phrased more generally for plurimorphisms in item \labelcref{alt-pol:iii} of the following proposition, in the language of probabilities. It is also geometrically clear that it is enough to require that the convex hulls of two points 
avoid $R$, leading to item \labelcref{alt-pol:ii}.

\begin{proposition}[Alternative definitions of plurimorphisms] \label{prop:alt-pol} 
   Let $\ab$ be a pair of valued $\Sigma$-structures, $\mnn M = \polfeas \ab$, $c,s \in \qq$, and  $(\Omega_j)_{j \in J}$ a finite family of weightings of $\mnn M$ of arities $(N_j)_{j \in J}$.  The following are equivalent. 
    \begin{enumerate}[label=\textnormal{(\roman*)}]
        \item $(\Omega_j)_{j \in J}$ is a plurimorphism of $\abcs$.
        \item \label{alt-pol:ii} Each pair $(\Omega_{j},\Omega_{j'})$ with $j,j' \in J$ is a plurimorphism of $\abcs$.
        \item \label{alt-pol:iii} For every probability distribution
          $$
          \mu \in \Delta \{(j,\phi,M) \mid j \in J, (\phi,M) \in \Mat{\bfa}{N_j}\} 
          $$
          we have that
          $$
          \Ex{ (j,\phi,M) \sim \mu} \ \OmegaI_j[\phi,M]  \geq c 
          \implies
          \Ex{ (j,\phi,M) \sim \mu} \ \OmegaO_j[\phi,M] \geq s.           
          $$
    \end{enumerate}
\end{proposition}

\begin{proof}
 We first state the items in the proposition in a geometric way using the following  notation.
 \begin{itemize}
     \item $R = \{(x,y) \in \qq^2 \mid x \geq c, y < s\}$.
     \item $H_{\IO} = \{(x,y) \in \qq^2 \mid y-s \geq \IO(x-c)\}$ for $\IO \in \qqnonneg$.
     \arxiv{
     \item $P[j,\phi,M] = (\OmegaI_j[\phi,M],\OmegaO_j[\phi,M]) \in \qq^2$ for $j \in J$, $(\phi,M) \in \Mat{\bfa}{N_j}$.
     }
     \conference{
     \item $P[j,\phi,M] = (\OmegaI_j[\phi,M],\OmegaO_j[\phi,M]) \in \qq^2$ for $j \in J$,\\ $(\phi,M) \in \Mat{\bfa}{N_j}$.
     }
 \end{itemize}
The geometric translation is as follows. We also add an additional item \labelcref{it:alt-4} that is equivalent to \labelcref{it:alt-3} with $\mu$ required to have at most two-element support.
   \begin{enumerate}[label=\textnormal{(\roman*)}] 
   \item \label{it:alt-1} 
     There exists $\IO \in \qqnonneg$ such that all the points $P[j,\phi,M]$ lie in $H_{\IO}$. 
   \item \label{it:alt-2} 
    For every $j,j' \in J$ there exists $\IO \in \qqnonneg$ such that all the points $P[j,\phi,M]$ and $P[j',\phi',M']$ lie in $H_{\kappa}$.
   \item \label{it:alt-3}  The convex hull of all the points $P[j,\phi,M]$ does not intersect $R$.
   \item \label{it:alt-4} 
   For all pairs of points $P[j,\phi,M]$, $P[j',\phi',M']$, their  convex hull (which is a line segment) does not intersect $R$.  
   \end{enumerate}

 The implications from \labelcref{it:alt-1} to \labelcref{it:alt-2} and from \labelcref{it:alt-3} to \labelcref{it:alt-4} are trivial. Note that for every $\IO \in \qqnonneg$, the halfplane $H_{\IO}$ is convex and disjoint from $R$. It follows that \labelcref{it:alt-1} implies \labelcref{it:alt-3} and \labelcref{it:alt-2} implies \labelcref{it:alt-4}. It is therefore enough to verify that \labelcref{it:alt-4} implies \labelcref{it:alt-1}.

 Assume \labelcref{it:alt-4}, in particular no point $P[j,\phi,M]=(P_x,P_y)$ lies in $R$, i.e., $P_x \geq c$ implies $P_y \geq s$. Assume also that there is some point with $P_y < s$ (thus $P_x < c$) as otherwise $\kappa=0$ proves \labelcref{it:alt-1}.
Let  $\IO \in \qqnonneg$ be the minimum value such that all the points $P[j,\phi,M]$ that lie in $H_{\infty} := \{(x,y) \in \qq^2 \mid x < c\}$ also lie in $H_{\IO} \cap H_{\infty}$, which exists since we have a finite number of points at least one of which is in $H_{\infty}$, and the sets $H_{\IO} \cap H_{\infty}$ cover $H_{\infty}$ and  get inclusion-wise larger as $\IO$ increases. Let $P[j,\phi,M]=:(P_x,P_y)$ witnesses that $\kappa$ cannot be further decreased, that is, $P_x < c$ and $P_y - s = \kappa(P_x - c)$. 

We claim that every  point $P[j',\phi',M'] =: (P'_x,P'_y)$ lies in $H_{\kappa}$, which gives \labelcref{it:alt-1}. The claim holds when this point is in $H_{\infty}$ by definition of $\kappa$, so we assume $P'_x \geq c$. Since $P_x < c$, the line segment between $(P_x,P_y)$ and $(P'_x,P'_y)$ intersects the line $\{(x,y) \in \qq^2 \mid x = c\}$ and, by \labelcref{it:alt-4}, it intersects it at $y \geq s$. Therefore the line segment, in particular the point $(P'_x,P'_y)$, is on or  above the line connecting $(P_x,P_y)$ and $(c,s)$ --- the line $\{(x,y) \in \qq^2 \mid y - s \geq \kappa(P_x-c)\}$. In other words, $(P'_x,P'_y)$ lies in $H_{\kappa}$, as claimed. 
\end{proof}

We remark that the notion of $(c,s)$-approximate polymorphism of Brown-Cohen and Raghavendra from~\cite{Brown-CohenRaghavendra15} (implied by Definitions 1.6 and 1.9 in their paper) is essentially introduced as in item 
\labelcref{alt-pol:iii} of~\cref{prop:alt-pol} (for $|J|=1$ and $\bfa=\bfb$ and uniform distribution $\OmegaI$).

\smallskip

In the crisp case, we observed that polymorphisms give us a way to combine solutions in $\bfa$ to obtain solutions in $\bfb$. Item \labelcref{alt-pol:iii} with $|J|=1$ can be used to show a valued version of this fact: if $\Omega$ is a polymorphism of $\abcs$, $\Phi$ is a normalized payoff formula over $X$, and $M \in A^{X \times N}$ is such that the expected payoff in $\bfa$ of the $n$-th column when $n \sim \OmegaI$ is at least $c$, then the expected payoff in $\bfb$ of $f \rows(M)$ when $f \sim \OmegaO$ is at least $s$. 
We formalize this in the result below, which can be seen as a valued version of \cref{prop:combining_solutions}. For simplicity, we only formulate one implication and restrict to polymorphisms (rather than  plurimorphisms).

\begin{proposition}[Polymorphisms and payoffs] \label{prop:valued_combining_solutions}
Let $\ab$ be a pair of valued $\Sigma$-structures,  $N$ a finite set, $c,s \in \qq$, $\IO \in \qqnonneg$, and $\Omega$ an $N$-ary $\IO$-polymorphism of $\abcs$. 

Then for every finite set $X$, every payoff $\Sigma$-formula $\Phi$ over $X$, and every $M \in A^{X \times N}$ such that every column is in $\feas(\Phi^{\bfa})$ we have that
$$
\Ex{f \sim \OmegaO} \ \Phi^{\bfb}(f \rows(M))  - s \,\weight(\Phi)
\geq
\IO \left(\Ex{n \sim \OmegaI} \ \Phi^{\bfa}(\col{n}{M}) - c \,\weight(\Phi)\right).
$$
In particular,
          $$
          \Ex{n \sim \OmegaI} \ \Phi^{\bfa}(\col{n}{M}) \geq c \,\weight(\Phi)
          \implies
          \Ex{f \sim \OmegaO} \ \Phi^{\bfb}(f \rows(M))  \geq s \,\weight(\Phi).           $$
\end{proposition}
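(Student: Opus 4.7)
\medskip

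\noindent\textbf{Proof proposal.} The plan is to reduce the statement to the definition of a $\kappa$-polymorphism by decomposing $\Phi$ into its constraints and, for each constraint, producing a suitable relation-matrix pair in $\Mat{\bfa}{N}$. Write $\Phi = \sum_{i \in I} w_i \phi_i(\tuple{x}_i)$ with $w_i \in \qqnonneg$, $\phi_i \in \sig$, and $\tuple{x}_i \in X^{\ar(\phi_i)}$. For each $i \in I$ define a matrix $M_i \in A^{\ar(\phi_i) \times N}$ by
\[
M_i(z,n) = M(\tuple{x}_i(z), n) \quad \text{for } z \in \ar(\phi_i), \ n \in N,
\]
so that $\col{n}{M_i} = \col{n}{M} \circ \tuple{x}_i$ and $\row{z}{M_i} = \row{\tuple{x}_i(z)}{M}$.

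The hypothesis that every column $\col{n}{M}$ lies in $\feas(\Phi^{\bfa})$ is exactly the statement that each $\col{n}{M} \circ \tuple{x}_i$ lies in $\feas(\phi_i^{\bfa})$; therefore $(\phi_i, M_i) \in \Mat{\bfa}{N}$ for every $i \in I$. Applying the defining inequality of a $\kappa$-polymorphism to each such pair yields
\[
\OmegaO[\phi_i, M_i] - s \ \geq\ \kappa\,(\OmegaI[\phi_i, M_i] - c).
\]
Multiplying by $w_i \geq 0$ and summing over $i \in I$, then recalling that $\weight(\Phi) = \sum_{i \in I} w_i$, gives
\[
\sum_{i \in I} w_i \OmegaO[\phi_i, M_i] - s\,\weight(\Phi)
\ \geq\
\kappa \left(\sum_{i \in I} w_i \OmegaI[\phi_i, M_i] - c\,\weight(\Phi)\right).
\]

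It remains to identify the two weighted sums with the required expectations, which is where the identities $\col{n}{M_i} = \col{n}{M}\tuple{x}_i$ and $\row{z}{M_i} = \row{\tuple{x}_i(z)}{M}$ are used. On the input side,
\[
\sum_{i} w_i \OmegaI[\phi_i, M_i]
= \sum_i w_i \Ex{n \sim \OmegaI} \phi_i^{\bfa}(\col{n}{M}\tuple{x}_i)
= \Ex{n \sim \OmegaI} \Phi^{\bfa}(\col{n}{M}),
\]
using linearity of expectation (the inner sums are finite since $\col{n}{M} \in \feas(\Phi^{\bfa})$). On the output side, one observes that for every $f \in A^N \to B$ one has $f\,\rows(M_i) = (f\,\rows(M))\tuple{x}_i$ coordinate-wise, hence
\[
\sum_{i} w_i \OmegaO[\phi_i, M_i]
= \Ex{f \sim \OmegaO} \sum_i w_i \phi_i^{\bfb}\bigl((f\,\rows(M))\tuple{x}_i\bigr)
= \Ex{f \sim \OmegaO} \Phi^{\bfb}(f\,\rows(M)),
\]
where finiteness of the inner sum for $f \in \supp(\OmegaO) \subseteq \polfeas \ab$ follows because $(f\,\rows(M))\tuple{x}_i = f\,\rows(M_i) \in \feas(\phi_i^{\bfb})$ by the polymorphism property applied to $(\phi_i, M_i) \in \Mat{\bfa}{N}$. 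Substituting these two identities into the inequality above yields exactly the desired conclusion; the "in particular" clause follows by dividing by $\kappa$ (trivial if $\kappa > 0$, and if $\kappa = 0$ the implication degenerates into its conclusion).

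The only mildly delicate point is the bookkeeping with the extended-rational-valued payoffs, making sure that infinities do not appear in places where we use linearity of expectation. This is handled uniformly by noting that feasibility of every column of $M$ in $\Phi^{\bfa}$ gives finiteness for the input side, and the support of $\OmegaO$ lying inside $\polfeas \ab$ together with $(\phi_i, M_i) \in \Mat{\bfa}{N}$ gives finiteness for the output side.
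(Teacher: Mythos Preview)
Your argument is correct and follows essentially the same route as the paper's proof: define the per-constraint matrices $M_i$, check $(\phi_i,M_i)\in\Mat{\bfa}{N}$ via feasibility of the columns of $M$, apply the $\IO$-polymorphism inequality to each, and sum with weights to recover the two expectations. The paper also remarks on an alternative proof of the ``in particular'' part via item~\ref{alt-pol:iii} of \cref{prop:alt-pol}, but your direct derivation is fine (though note the ``in particular'' needs no division by $\IO$: the premise makes the right-hand side nonnegative, hence so is the left).
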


\begin{proof}
   Consider $M$ and $\Phi = \sum_{i \in I} w_i\phi_i(\tuple{x}_i)$ as in the statement. 
   For $i \in I$ we define a matrix $M_i'$ as in the proof of \cref{prop:combining_solutions}, that is, $M_i' \in A^{\ar(\phi_i
   ) \times N}$ has $M(\tuple{x}_i(z),n)$ at position $(z,n)$. Recall that $\col{n}{M'_i} = \col{n}{M}\tuple{x}_i$ and $\rows(M'_i) = \rows(M)\tuple{x}_i$. 
   We have
   \begin{align*}
       \Ex{n \sim \OmegaI} \ \Phi^{\bfa}(\col{n}{M})  
         &=  \Ex{n \sim \OmegaI} \sum_{i \in I} w_i \phi_i^{\bfa}(\col{n}{M}\tuple{x}_i)      \\
         &=  \sum_{i \in I} w_i \Ex{n \sim \OmegaI} \phi_i^{\bfa}(\col{n}{M}\tuple{x}_i) \\
         &= \sum_{i \in I} w_i \Ex{n \sim \OmegaI} \phi_i^{\bfa}(\col{n}{M'_i}) \\
         &= \sum_{i \in I} w_i \OmegaI[\phi_i,M_i'] 
   \end{align*}
   and similarly 
      $$  \Ex{f \in \OmegaO} \Phi^{\bfb}(f\rows(M))
        =
         \sum_{i \in I} w_i \OmegaO[\phi_i,M_i']. 
    $$
   The inequality now follows from $\OmegaO[\phi_i,M_i'] - s \geq \IO(\OmegaI[\phi_i,M_i']-c)$.

   We remark that an alternative proof of the second part is to normalize $\Phi$ and apply item \labelcref{it:alt-3} of \cref{prop:alt-pol} to the  probability distribution on $\Mat{\bfa}{N}$ sampled by taking the relation-matrix pair $(\phi_i,M_i')$ with probability $w_i$.
   \end{proof}

Unlike in the crisp case, we do not introduce a concept of valued function
minion. The reason is that we currently do not know for sure what the right
choice of closure properties would be, so that valued function minions would be
exactly collections of plurimorphisms of templates. 
The obvious properties come from item \ref{alt-pol:ii} and the fact that $\IO$-polymorphisms are closed under convex combinations and taking minors (defined naturally, see \cref{sec:valued-homos}). 

\subsection{Canonical payoff formulas}

A natural valued refinement of canonical formulas in~\cref{prop:canonical_formula} is the
following fact. It is useful for characterizing templates and definability
(see \Cref{app:definability}),  but the main theorem requires a more
complex version of canonical formulas, presented in~\cref{prop:canonical_payoff_formulas}. 

\begin{proposition}[Canonical payoff formula] \label{prop:canonical-payoff-baby}
    Let $\ab$ be a pair of valued $\Sigma$-structures, $\mnn M = \polfeas \ab$, $c,s \in \qq$, $N$ a finite set, $\alpha: N \to \qq$, and $\beta: \mnn M^{(N)} \to \qq$. 
    Suppose further that if $\bfa \leq c$, then $\alpha \leq c$ (i.e., $\alpha(n)\leq c$ for all $n \in N$). 
    Then the following are equivalent.         
    \begin{enumerate}[label=\textnormal{(\roman*)}]
        \item \label{item:1:canonical-payoff-baby} For each $\IO \in \qqnonneg$ and each $\Omega \in \iopol^{(N)} \abcs$, 
        $\OmegaO[\beta]-s \geq \IO (\OmegaI[\alpha]-c)$.  
        \item\label{item:2:canonical-payoff-baby} There exists a payoff formula $\Phi$ over the set of variables $A^{N}$ such that
\begin{align*}
    & \forall n\in N  & \Phi^{\bfa}(\proj{N}{n}) -  c \,\weight (\Phi) & \geq \alpha(n) -c \\
    & \forall f \in \mnn M^{(N)}  &\Phi^{\bfb}(f) - s \,\weight (\Phi) & \leq \beta(f) - s \\
    &         &  \feas(\Phi^{\bfb}) &= \mnn M^{(N)}. 
\end{align*}
    \end{enumerate}
\end{proposition}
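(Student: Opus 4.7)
The plan is to parametrize candidate payoff formulas over the variable set $A^N$ by nonnegative rational weights $w_{\phi,M}\in\qqnonneg$, one per pair in the finite set $\Mat{\bfa}{N}$, and set
\[
\Phi \;=\; \sum_{(\phi,M)\in\Mat{\bfa}{N}} w_{\phi,M}\,\phi(\rows(M)).
\]
Listing every pair (possibly with weight $0$) automatically guarantees the third requirement $\feas(\Phi^{\bfb}) = \mnn M^{(N)}$, because $\Phi^{\bfb}(h)$ is finite exactly when $h\circ\rows(M)\in\feas(\phi^{\bfb})$ for every $(\phi,M)\in\Mat{\bfa}{N}$, and this is the definition of $h$ being an $N$-ary polymorphism of $(\feas(\bfa),\feas(\bfb))$.

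For the direction \labelcref{item:2:canonical-payoff-baby}$\Rightarrow$\labelcref{item:1:canonical-payoff-baby}, I use the identity $\proj{N}{n}\circ\rows(M) = \col{n}{M}$ to express $\Phi^{\bfa}(\proj{N}{n})$ and $\Phi^{\bfb}(f)$ as the obvious weighted sums, and then take expectations. The two inequalities of \labelcref{item:2:canonical-payoff-baby} become
\[
\OmegaI[\alpha]-c \;\leq\; \textstyle\sum_{(\phi,M)} w_{\phi,M}\bigl(\OmegaI[\phi,M]-c\bigr), \qquad \OmegaO[\beta]-s \;\geq\; \textstyle\sum_{(\phi,M)} w_{\phi,M}\bigl(\OmegaO[\phi,M]-s\bigr).
\]
Summing the $\IO$-polymorphism inequalities $\OmegaO[\phi,M]-s \geq \IO(\OmegaI[\phi,M]-c)$ against the nonnegative weights $w_{\phi,M}$ and using $\IO\geq 0$ chains these two bounds into the desired $\OmegaO[\beta]-s \geq \IO(\OmegaI[\alpha]-c)$.

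The real work is the reverse implication \labelcref{item:1:canonical-payoff-baby}$\Rightarrow$\labelcref{item:2:canonical-payoff-baby}. The first two inequalities of \labelcref{item:2:canonical-payoff-baby} form a finite system $Fw \leq q$ in the unknown $w\geq 0$, with rows indexed by $N\sqcup\mnn M^{(N)}$ and columns by $\Mat{\bfa}{N}$. Farkas' lemma (\cref{thm:Farkas}) reduces solvability to the dual implication: for every $x\in\qqnonneg^{N}$ and $y\in\qqnonneg^{\mnn M^{(N)}}$,
\[
\forall (\phi,M)\in\Mat{\bfa}{N}:\ \textstyle\sum_f y_f\bigl(\phi^{\bfb}(f\rows(M))-s\bigr) \geq \sum_n x_n\bigl(\phi^{\bfa}(\col{n}{M})-c\bigr)
\]
should entail $\sum_f y_f(\beta(f)-s) \geq \sum_n x_n(\alpha(n)-c)$. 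Writing $t:=\sum_n x_n$ and $u:=\sum_f y_f$, in the principal case $t,u>0$ I would normalize $\OmegaI := x/t$, $\OmegaO := y/u$, $\IO := t/u$; the dual hypothesis then becomes exactly the statement that $\Omega:=(\OmegaI,\OmegaO)\in\iopol^{(N)}\abcs$, and \labelcref{item:1:canonical-payoff-baby} delivers the conclusion after clearing denominators. The case $t=0$, $u>0$ is handled by taking $\IO=0$ and any $\OmegaI$ (the dual hypothesis is exactly the $0$-polymorphism condition), and $t=u=0$ is trivial.

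The main obstacle is the remaining degenerate case $u=0$, $t>0$, in which no $\OmegaO$ can be normalized. Here the dual hypothesis reduces to $\sum_n x_n \phi^{\bfa}(\col{n}{M}) \leq c\sum_n x_n$ for every $(\phi,M)\in\Mat{\bfa}{N}$; specializing to the matrix whose columns are all equal to a fixed feasible tuple $\ba\in\feas(\phi^{\bfa})$ forces $\phi^{\bfa}(\ba)\leq c$, and combined with $\phi^{\bfa}=-\infty\leq c$ off feasibility this yields $\bfa\leq c$. The standing assumption ``$\bfa\leq c\Rightarrow\alpha\leq c$'' is used exactly here to conclude $\sum_n x_n(\alpha(n)-c)\leq 0 = \sum_f y_f(\beta(f)-s)$, closing the dual implication and completing the proof.
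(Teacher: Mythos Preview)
Your proposal is correct and follows essentially the same approach as the paper: parametrize payoff formulas by weights $w_{\phi,M}$ over $\Mat{\bfa}{N}$, reduce \labelcref{item:2:canonical-payoff-baby} to a linear system, and apply Farkas' lemma, handling the degenerate dual case $u=0$, $t>0$ via the constant-column matrices and the hypothesis $\bfa\leq c\Rightarrow\alpha\leq c$. The only cosmetic differences are that the paper establishes the equivalence of \labelcref{item:2:canonical-payoff-baby} with the parametrized form explicitly (its item (iii)) and derives both implications at once from the Farkas equivalence, whereas you give a separate direct argument for \labelcref{item:2:canonical-payoff-baby}$\Rightarrow$\labelcref{item:1:canonical-payoff-baby}; your case split $t>0,u>0$ versus $t=0,u>0$ is merged in the paper into the single case $\IOO>0$ with $\IO=\IOI/\IOO$.
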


\begin{proof}
       We first observe that item \labelcref{item:2:canonical-payoff-baby} is equivalent to the following condition.
   \begin{enumerate}[label=\textnormal{(\roman*)}] \setcounter{enumi}{2}
    \item\label{item:3:canonical-payoff-baby} There exist $w_{\phi,M} \in \qqnonneg$, where $(\phi,M) \in \Mat{\bfa}{N}$,  such that the payoff formula
    \begin{equation*} \label{eq:payoff-baby-rewritten}
         \Phi = \sum_{(\phi,M) \in \Mat{\bfa}{N}} w_{\phi,M} \ \phi(\rows(M))
    \end{equation*}
    satisfies all the inequalities (so we skip the requirement on $\feas(\Phi^{\bfb})$). 
   \end{enumerate}
   Indeed, if \labelcref{item:3:canonical-payoff-baby}, then $\feas(\Phi^{\bfb})$ is, as we noted after defining interpretations of payoff formulas, the interpretation of $\bigwedge_{(\phi,M)} \phi(\rows(M))$ in $\feas(\bfb)$, which is $\mnn M^{(N)}$ by \cref{prop:canonical_formula}. On the other hand, every constraint over the set of variables $A^N$ is of the form $\phi(\rows(M))$ for some matrix $M$. If \labelcref{item:2:canonical-payoff-baby}, then the first type of inequalities ensures that only $\phi(\rows(M))$ with $M \in \Mat{\bfa}{N}$ show up in $\Phi$. By summing up weights and giving weight zero to constraints that do not show up, we obtain $\Phi$ as in \labelcref{item:3:canonical-payoff-baby}.
   
Condition \labelcref{item:3:canonical-payoff-baby} is equivalent, by definitions, to the following system of linear inequalities with %nonnegative rational 
unknowns $w_{\phi,M} \in \qqnonneg$.
      \begin{align*}
          \forall n \in N \quad \quad
          \sum_{(\phi,M)}
          -(\phi^{\bfa}(\col{n}{M})-c) w_{\phi,M}
           &\leq - (\alpha(n) - c)
          \\
          \forall f \in \mnn M^{(N)}\quad \quad
          \sum_{(\phi,M)}
           (\phi^{\bfb}(f \rows(M)) - s) w_{\phi,M}
           &\leq \beta(f) - s.
\end{align*}
Note that, since $\Mat{\bfa}{N}$ only contains matrices whose columns are in $\feas(\phi^\bfa)$, all the coefficients in the above system of inequalities are finite.

Let $F$ be the coefficient matrix of the system and $\tuple{q}$ the right-hand side vector. Note that $F$ can be naturally regarded as a rational matrix of type $(N \cup \mnn M^{(N)}) \times \Mat{\bfa}{N}$ (where the union should formally be disjoint). Schematically, the system $F\tuple{y} \leq \tuple{q}$ is

\NiceMatrixOptions
 {
   custom-line =  % for dashed rules
    {
      command = hdashedline , % for horizontal rules
      letter = I , % for vertical rules
      tikz = dashed ,
      total-width = \pgflinewidth % optional
    }
 }

$$\begin{pNiceArray}{ccc}[first-col, first-row,margin]
\quad \quad & \quad & (\phi,M) & \quad \\
\quad \quad  & \quad & \Vdots & \quad \\
 n \  \quad  & \Cdots & -(\phi ^{\bfa}(\col{n}{M})-c)  & \Cdots \\
\quad \quad  &  \quad & \Vdots & \quad  \\
\arrayrulecolor{gray}\hline 
\quad \quad & \quad & \Vdots & \quad \quad \\
f \ \quad &  \Cdots & \phi^{\bfb}(f\rows(M)) - s  & \Cdots \quad \\
\quad \quad &\quad & \Vdots & \quad 
\end{pNiceArray} \ \tuple{y} \ \leq \ \begin{pNiceArray}[margin]{c}
\Vdots \\
  -(\alpha(n)-c)\\
\Vdots \\
\arrayrulecolor{gray}\hline 
\Vdots \\
  \beta(f)-s\\
\Vdots\\
\end{pNiceArray}$$

By the Farkas' lemma (\cref{thm:Farkas}), the above system of inequalities (and then item \labelcref{item:2:canonical-payoff-baby}) is equivalent to 
\begin{align*} \tag{FE} \label{eq:FarkasEq}
\forall \tuple{x} \in (\qqnonneg)^{N \cup \mnn M^{(N)}} (F^T\tuple{x} \geq 0 \implies  \tuple{q}^T\tuple{x} \geq 0).
\end{align*}
Vectors $\tuple{x}$ correspond to pairs of vectors $(\tuple{x}^{\mathrm{in}} \in (\qqnonneg)^N$, $\tuple{x}^{\mathrm{out}} \in (\qqnonneg)^{\mnn M^{(N)}}$) and these can be written as $(\IOI \OmegaI,\IOO\OmegaO)$ where $\IOI, \IOO \in \qqnonneg$, $\OmegaI \in \Delta N$, and $\OmegaO \in \Delta \mnn M^{(N)}$. Condition \labelcref{eq:FarkasEq} is thus
\arxiv{
\begin{align} \label{eq:theta-in-theta-out}
\forall \IOI, &\IOO \in \qqnonneg \ 
\forall \Omega \textnormal{ $N$-ary weighting of $\mnn M$} \\ 
   &\Big(\forall (\phi,M) \in \Mat{\bfa}{N} \quad \  \IOO(\OmegaO[\phi,M]-s) \geq \IOI(\OmegaI[\phi,M] - c) \Big) \nonumber \\
  &\implies
    \IOO(\OmegaO[\beta]-s) \geq \IOI(\OmegaI[\alpha]-c). \nonumber
\end{align}
}
\conference{
\begin{align} \label{eq:theta-in-theta-out}
& \forall \IOI, \ \IOO \in \qqnonneg \ 
\forall \Omega \textnormal{ $N$-ary weighting of $\mnn M$} \\ 
& \left(\forall (\phi,M) \! \in \!  \Mat{\bfa}{N} \ \IOO(\OmegaO[\phi,M] \! -\! s) \geq \IOI(\OmegaI[\phi,M]\! - \! c) \right) \nonumber \\
  & \ \implies
  \IOO(\OmegaO[\beta]-s) \geq  \IOI(\OmegaI[\alpha]-c) . \nonumber
\end{align}
}
For $\IOO > 0$, the condition is exactly saying that for each $\IO \in \qqnonneg$ and $\Omega \in \iopol^{(N)} \abcs$, we have 
        $\OmegaO[\beta]-s \geq \IO (\OmegaI[\alpha]-c)$, where $\IO = \IOI/\IOO$, which is exactly \labelcref{item:1:canonical-payoff-baby}. It remains to observe that \labelcref{eq:theta-in-theta-out} is void for
 $\IOO = 0$ and $\IOI > 0$. Indeed,  the left-hand side of the implication in \labelcref{eq:theta-in-theta-out} holds only if $\bfa \leq c$ (by considering matrices with all the columns equal). In that case we have $\alpha \leq c$ by assumptions of the proposition, therefore the right-hand side of the implication holds as well.
\end{proof}

A characterization of templates is, similarly as in the crisp setting, a consequence of~\cref{prop:valued_combining_solutions} and the canonical formulas of~\cref{prop:canonical-payoff-baby}.

\begin{proposition}[Characterization of templates] \label{prop:valued-char-of-templates} 
    Let $\ab$ be a pair of valued $\Sigma$-structures and  $c,s \in \qq$. The following are equivalent.
    \begin{enumerate}[label=\textnormal{(\roman*)}]
        \item \label{item:1:ValuedCharTempl} $\abcs$ is a valued promise template.
        \item \label{item:2:ValuedCharTempl} For each payoff formula $\Phi$ over the set of variables $A$
        $$
        \Phi^{\bfa}(\id_A) \geq c \weight(\Phi) \implies \exists h\in B^A \ \Phi^{\bfb}(h) \geq s \weight(\Phi).
        $$
        \item \label{item:3:ValuedCharTempl} There exists a unary polymorphism of $\abcs$.
    \end{enumerate}
\end{proposition}

\begin{proof}
  The implication from \labelcref{item:3:ValuedCharTempl} to
  \labelcref{item:1:ValuedCharTempl} follows from~\cref{prop:valued_combining_solutions} for a singleton set $N$ and
  the implication from \labelcref{item:1:ValuedCharTempl} to \labelcref{item:2:ValuedCharTempl} is trivial.
  
  We now assume \labelcref{item:2:ValuedCharTempl} and aim to prove \labelcref{item:3:ValuedCharTempl} using~\cref{prop:canonical-payoff-baby} for a singleton $N = \{n\}$. Let  $\alpha(n) = c$ and choose $\beta: \polfeas \ab \to \qq$  arbitrarily so that  $\beta(f) < s$ for each $f$. Note that the assumption $\bfa \leq c \implies \alpha \leq c$ is satisfied. 
  We identify $A^N$ with $A$ so that $\Phi$ can be regarded as being over the set of variables $A$.
  
  We claim that there is no formula $\Phi$ as in item \labelcref{item:2:canonical-payoff-baby} of \cref{prop:canonical-payoff-baby}. Indeed, if there was such a formula $\Phi$, then $\Phi^{\bfa}(\proj{N}{n})-cw(\Phi) \geq \alpha(n) - c$, that is, $\Phi^{\bfa}(\id_A) \geq cw(\Phi)$. By the assumption we then get $\Phi^{\bfb}(f) \geq sw(\Phi)$ for some $f \in B^A$, in particular $f \in \feas(\Phi^{\bfb}) = \mnn M^{(N)}$. Such an $f$ cannot satisfy the inequality $\Phi^{\bfb}(f)-sw(\Phi) \leq \beta(f) - s$ by the choice of $\beta$. 

  We conclude that \labelcref{item:1:canonical-payoff-baby} of \cref{prop:canonical-payoff-baby} is not satisfied, in particular, there exists a unary polymorphism of $\abcs$.
\end{proof}

Now we state the mentioned more complex version of canonical formula required for the main theorem.
The difference is that instead of having one instance $\Phi$ as in item \labelcref{item:2:canonical-payoff-baby} of~\cref{prop:canonical-payoff-baby}, we simultaneously create multiple instances $(\Phi_j)_{j \in J}$ and allow suitable scaling and shifts. Moreover, in order to slightly simplify our formulation of the valued version of the minor condition problem, we also shift the $\alpha$ by $c$ and $\beta$ by $s$.  

\begin{proposition}[Improved canonical payoff formulas] \label{prop:canonical_payoff_formulas} 
 Let
 \begin{itemize}
 \item $\ab$ be a pair of $\Sigma$-structures, $\mnn M = \polfeas \ab$, $c,s \in \qq$, 
 \item $(N_j)_{j \in J}$ a family of finite sets (arities) with $J$ finite, 
 \item $(\alpha_j)_{j \in J}$ a family of functions $\alpha_j: N_j \to \qq$, and 
 \item $(\beta_j)_{j \in J}$ a family of functions $\beta_j: \mnn M^{(N_j)} \to \qq$.
 \end{itemize}
The following are equivalent.
\begin{enumerate}[label=\textnormal{(\roman*)}]
    \item \label{item:1:canonical_payoff_formulas} For each $\IO \in \qqnonneg$ and each family $(\Omega_j)_{j \in J}$ of $\IO$-polymorphisms of $\abcs$, 
    $$ 
    \IO \sum_{j \in J} \OmegaI_j[\alpha_j] \geq 0 \implies
    \sum_{j \in J} \OmegaO_j[\beta_j] \geq 0 .
    $$
    \item \label{item:2:canonical_payoff_formulas}
      There exist a family of payoff formulas $(\Phi_j)_{j \in J}$ with $\Phi_j$ over the set of variables $A^{N_j}$, a number $\scale \in \qqnonneg$ (scaling factor), and families of rational numbers $(\shiftI_j,\shiftO_j)_{j \in J}$ (input and output shifts) such that 
      \begin{align*}
          & \forall j \in J \ \forall n\in N_j  
            & \Phi^{\bfa}_j(\proj{N_j}{n}) - c \,\weight (\Phi_j) & \geq \scale \alpha_j(n) + \shiftI_j  \\
          & \forall j \in J \ \forall f \in \mnn M^{(N_j)}  
            & \Phi^{\bfb}_j(f) - s \,\weight (\Phi_j)& \leq \beta_j(f) - \shiftO_j  \\
          & & \sum_{j \in J} \shiftI_j & \geq 0 \\
          & & \sum_{j \in J} \shiftO_j & \geq 0 \\
          &\forall j \in J & \feas(\Phi_j^{\bfb}) &= \mnn M^{(N_j)} .
\end{align*}
\end{enumerate}
Moreover, for a fixed $\ab$, if there is an upper bound on the sizes of the $N_j$ then the payoff formulas $(\Phi_j)_{j \in J}$, the scaling factor, and the shifts can be computed from $\alpha_j,\beta_j,N_j$  (or decided that such formulas do not exist) in polynomial time in the size of the input. 
\end{proposition}

\begin{proof}
  Similarly as in the proof of~\cref{prop:canonical-payoff-baby}, 
   we first observe that item \labelcref{item:2:canonical_payoff_formulas} is equivalent to the following condition.
   \begin{enumerate}[label=\textnormal{(\roman*)}] \setcounter{enumi}{2}
    \item\label{item:3:canonical_payoff_formulas} There exist $w_{j,\phi,M} \in \qqnonneg$, where $j \in J$, $(\phi,M) \in \Mat{\bfa,N_j}$, $\scale \in \qqnonneg$, and $\shiftI_j, \shiftO_j \in \qq$ such that the payoff formulas
    $$
    \Phi_j = \sum_{(\phi,M) \in \Mat{\bfa}{N_j}} w_{j,\phi,M} \ \phi(\rows(M))
    $$
    satisfy all the inequalities (skipping the requirement on $\feas(\Phi_j^{\bfb})$). 
   \end{enumerate}

Condition \labelcref{item:3:canonical_payoff_formulas} is equivalent, by definitions, to the following system of linear inequalities with nonnegative rational unknowns $w_{j,\phi,M}$, $\scale$, $\shiftIP_j$, $\shiftIN_j$, $\shiftOP_j$, $\shiftON_j$ and (finite) rational coefficients. 
\arxiv{
\begin{align*}
          \forall j,n && \quad 
          \sum_{(\phi,M)}
          -(\phi^{\bfa}(\col{n}{M})-c) w_{j,\phi,M}
          \  & + & 
          (\shiftIP_j - \shiftIN_j) 
         \ & + & \! \! \alpha_j(n) \scale & \leq 0
          \\
          \forall j,f && \quad
          \sum_{(\phi,M)}
           (\phi^{\bfb}(f \rows(M)) - s) w_{j,\phi,M}
           \  & + & 
           (\shiftOP_j - \shiftON_j)  &  \ &
           & \ \leq \beta_j(f)\\
         &&  &  & \! \! \sum_{j \in J} -(\shiftIP_j - \shiftIN_j) & \ &  & \ \leq 0 \\
         &&  &  &  \! \! \sum_{j \in J} -(\shiftOP_j - \shiftON_j) & \ &  & \ \leq 0 
\end{align*}
}

\conference{
\begin{align*}
    \forall j,n & 
    \sum_{(\phi,M)}
    -(\phi^{\bfa}(\col{n}{M})-c) w_{j,\phi,M} + (\shiftIP_j - \shiftIN_j) +  \alpha_j(n) \scale
     \leq 0 \\
    \forall j,f &
    \sum_{(\phi,M)}
    (\phi^{\bfb}(f \rows(M)) - s) w_{j,\phi,M} + (\shiftOP_j - \shiftON_j)
    \leq \beta_j(f)\\
    & \ \ \sum_{j \in J} -(\shiftIP_j - \shiftIN_j)
     \leq 0 & \\
    & \ \ \sum_{j \in J} -(\shiftOP_j - \shiftON_j) 
     \leq 0 &
\end{align*}
}

This system $F\tuple{y} \leq \tuple{q}$ is shown in \cref{fig:improved-farkas} for the special case where $|J|=2$. 
Note that, when the cardinality of all the sets $N_j$ is bounded by a constant, the size of the system is polynomial (in fact, linear) in $|J|$. Then, when a (nonnegative) solution to $F\tuple{y} \leq \tuple{q}$ exists, it can be found in polynomial time.

\begin{figure*}
\[
\begin{pNiceArray}{ccc|ccc|cc|cc|cc|cc|c}[small,first-col, first-row,margin]
 \quad &\quad &  (j,\phi,M) & \quad &\quad &\quad &\quad & \! \delta_j^{\ii+} \! & \delta_j^{\ii-} \! \! &\quad &\quad & \! \delta_j^{\oo+} \! \! &  \! \delta_j^{\oo-} \! \! &\quad &\quad & \gamma\\
\quad &\quad &  \vdots & \quad &\quad &\quad &\quad & \vdots &\vdots &\quad &\quad &\vdots &\vdots  &\quad &\quad & \vdots\\[4pt]
 j,n \ \ & \cdots & \! \! -(\phi ^{\bfa}(\col{n}{M})-c) \! \! & \cdots & 0 &\cdots & 0 &1 &-1& 0 & 0 & 0 & 0 & 0 & 0 &\alpha_j(n)\\
 \quad &\quad &  \vdots & \quad &\quad &\quad &\quad & \vdots &\vdots &\quad &\quad &\vdots &\vdots  &\quad &\quad & \vdots\\[3pt]
\arrayrulecolor{gray} \hline 
\quad &\quad & \vdots  & \quad &\quad &\quad &\quad & \vdots &\vdots &\quad &\quad &\vdots &\vdots  &\quad &\quad & \vdots\\[3pt]
 j',n' \ & 0 &\cdots \quad \quad \quad \cdots & 0  & \cdots & \cdots & \cdots  &0 &0& 1 & -1 & 0 & 0 & 0 & 0 &\alpha_{j'}(n')\\
 \quad &\quad & \vdots  & \quad &\quad &\quad &\quad & \vdots &\vdots &\quad &\quad &\vdots &\vdots  &\quad &\quad & \vdots\\[3pt]
\arrayrulecolor{black}\hline 
\quad &\quad &  \vdots & \quad &\quad &\quad &\quad & \vdots &\vdots &\quad &\quad &\vdots &\vdots  &\quad &\quad & \vdots\\[4pt]
j,f \ \quad &  \cdots & \! \! \phi^{\bfb}(f\rows(M)) - s  \! \! & \cdots & 0 &\cdots & 0 &0 &0& 0 & 0 & 1 & -1 & 0 & 0 &0\\
\quad &\quad &  \vdots & \quad &\quad &\quad &\quad & \vdots &\vdots &\quad &\quad &\vdots &\vdots  &\quad &\quad & \vdots \\[3pt]
\arrayrulecolor{gray} \hline 
\quad &\quad & \vdots  & \quad &\quad &\quad &\quad & \vdots &\vdots &\quad &\quad &\vdots &\vdots  &\quad &\quad & \vdots \\[4pt]
 j',f' \ & 0 &\cdots \quad \quad \quad \cdots & 0  & \cdots & \cdots & \cdots  &0 &0& 0 & 0 & 0 & 0 & 1 & -1 &0 \\
 \quad &\quad & \vdots  & \quad &\quad &\quad &\quad & \vdots &\vdots &\quad &\quad &\vdots &\vdots  &\quad &\quad & \vdots\\[3pt]
\arrayrulecolor{black} \hline \\[-2pt]
\quad & 0& \cdots \quad \cdots \quad \cdots & 0 & 0 &\cdots & 0 &-1 &1& -1 & 1 & 0 & 0 & 0 & 0 &0 \\[3pt]
\quad & 0& \cdots \quad \cdots \quad \cdots & 0 & 0 &\cdots & 0 &0 &0& 0 & 0 & -1 & 1 & -1 & 1 &0\\
\end{pNiceArray} \ \tuple{y} \ \leq \ \begin{pNiceArray}[small, margin]{c}
\vdots\\[4pt]
 0\\
\vdots\\[3pt]
\arrayrulecolor{gray} \hline 
\vdots \\[4pt]
0 \\
\vdots \\[3pt]
\arrayrulecolor{black}\hline 
\vdots \\[4pt]
  \beta_j(f)\\
\vdots\\[3pt]
\arrayrulecolor{gray}\hline 
\vdots \\[4pt]
  \beta_{j'}(f')\\
\vdots\\[3pt]
\arrayrulecolor{black}\hline \\[-2pt]
0\\[3pt]
0\\
\end{pNiceArray} \]
\caption{System of inequalities for improved canonical formulas for $J=\{j,j'\}$}
\label{fig:improved-farkas}
\end{figure*}

By  Farkas' lemma (\cref{thm:Farkas}), the above system  (and then item \labelcref{item:2:canonical_payoff_formulas}) is equivalent to 
\begin{align*} \tag{FE*} \label{eq:FarkasEq-improved}
\forall \tuple{x} \in (\qqnonneg)^{K} \  (F^T\tuple{x} \geq 0 \implies  \tuple{q}^T\tuple{x} \geq 0),
\end{align*}
where $K = \bigcup_{j} (N_j \, \cup \, \mnn M^{(N_j)} ) \, \cup \, \{\ii,\oo\}$.

Such vectors $\tuple{x}$ correspond to collections of pairs of vectors $(\tuple{x}_j^{\mathrm{in}} \in (\qqnonneg)^{N_j}$, $\tuple{x}_j^{\mathrm{out}} \in (\qqnonneg)^{\mnn M^{(N_j)}}$), which can be written as $(\IOI_j \OmegaI_j,\IOO_j\OmegaO_j)$ where $\IOI_j, \IOO_j \in \qqnonneg$, $\OmegaI_j \in \Delta N_j$, and $\OmegaO_j \in \Delta \mnn M^{(N_j)}$, together with a pair of rationals $\rho^\ii, \rho^\oo \in \qqnonneg$. 
The inequalities $F^T\tuple{x} \geq 0$ can then be written as
\begin{align*}
    \forall (j,\phi,M) && \quad 
       - \IOI_j (\OmegaI_j[\phi,M] - c) + \IOO_j (\OmegaO_j[\phi,M]-s) & \geq 0 \\
    \forall j && \quad
      \IOI_j - \rho^\ii &= 0 \\
    \forall j && \quad 
      \IOO_j - \rho^\oo &=0 \\
    && \quad
      \sum_j \IOI_j \OmegaI_j[\alpha_j] & \geq 0 
\end{align*}
and $\tuple{q}^T \tuple{x} \geq 0$ as
$$
\sum_j \IOO_j \OmegaO_j[\beta_j] \geq 0.
$$
Condition \labelcref{eq:FarkasEq-improved} is thus equivalent to
\begin{align*} 
&\forall  \rho^\ii, \rho^\oo \in \qqnonneg  \ \ 
\forall (\Omega_j)_{j \in J} \textnormal{ where $\Omega_j$ is an $N_j$-ary weighting of $\mnn M$} \\ 
  &   \Big( \big( \forall j \ \forall (\phi,M) \! \in \! \Mat{\bfa}{N_j} \ \rho^\oo(\OmegaO_j[\phi,M] \! - \! s) \! \geq  \!\rho^\ii(\OmegaI_j[\phi,M] \! - \! c)  \big)  \\
& \quad \ \wedge \ \rho^\ii \sum_{j\in J} \OmegaI_j[\alpha_j] \geq 0 \Big) \\
&  \implies
   \rho^\oo\sum_{j\in J} \OmegaO_j[\beta_j] \geq 0. 
\end{align*}
Noting that the right-hand side of the implication is always satisfied when $\rho^\oo=0$, we can equivalently write (with $\kappa = \rho^\ii/\rho^\oo$) 
\begin{align*} 
\forall \IO \in \qqnonneg & \ 
\forall (\Omega_j)_{j \in J} \textnormal{ where $\Omega_j$ is an $N_j$-ary weighting of $\mnn M$} \\ 
   &  ( \forall j \ \Omega_j \in \iopol^{(N_j)} \ab ) \\
  &\implies
   \left( \kappa \sum_{j\in J} \OmegaI_j[\alpha_j] \geq 0 \ \implies \sum_{j\in J} \OmegaO_j[\beta_j] \geq 0\right), 
\end{align*} 
which is exactly item \labelcref{item:1:canonical_payoff_formulas}.
\qedhere
\end{proof}

\section{Valued minion homomorphisms and reductions} \label{sec:valued-homos}

We define a valued minion as an abstraction of a set of plurimorphisms. As discussed in the last paragraph of~\cref{subsec:valued_polymorphisms}, it is not clear what closure properties we should require, so the definition should be regarded as temporary. For now, we take a liberal approach and only require a closure property that will be useful to us: the closure under minors.

\begin{definition}[Minor of a weighting] Let $\mnn{M}$ be a minion, $N$, $N'$ finite sets, $\Omega=(\OmegaI,\OmegaO)$ an $N$-ary weighting of $\mnn{M}$, and $\pi:N \to N'$. The \textit{minor of $\Omega$ given by $\pi$}  
is the $N'$-ary weighting defined by $\Omega^{(\pi)}=(\pi(\OmegaI),\mnn M^{(\pi)} (\OmegaO))$.
\end{definition}

\begin{definition}[Valued minion] \label{def:valued_minion}
Let $\mnn M$ be a minion. A \emph{valued minion over $\mnn M$} is a collection $\vmnn M = (\vmnn M^{(\caln)})$ indexed by finite families of finite sets $\caln = (N_j)_{j \in J}$ such that 
\begin{enumerate}
    \item elements of $\vmnn M^{(\caln)}$ are families $(\Omega_j)_{j \in J}$ where each $\Omega_j$ is an $N_j$-ary weighting of $\mnn M$;
    \item $\vmnn M$ is closed under taking minors in the following sense:
    if $(\Omega_j)_{j \in J} \in \vmnn M^{(\caln)}$,  then also $(\Omega'_{j'})_{j' \in J'} \in \vmnn M^{(\caln')}$ whenever each $\Omega'_{j'}$ is a minor of some $\Omega_j$ (given by some $\pi: N_j \to N'_{j'}$). 
\end{enumerate}
A valued minion $\vmnn M$ is \emph{nontrivial} if $\vmnn M^{(N)}$ is nonempty for every nonempty $N$, and it is \textit{locally finite} if so is $\mnn M$.
\end{definition}

Observe that a minor of a $\IO$-polymorphism of a valued promise template is again a $\IO$-polymorphism. It follows that the collection of all plurimorphisms of a valued promise template is a valued minion over the minion of polymorphisms of its feasibility template.

Some polymorphisms of the feasibility template have zero probability in any $\kappa$-polymorphism of the valued template; they play no role in this sense. In the definition of valued minion homomorphisms, it is advantageous to restrict ourselves to polymorphisms that play some role,\footnote{%
In particular, this solves an issue raised  in the conference version of this paper~\cite[Example 5.4]{Barto24:lics}: the reductions between two obviously equivalent valued PCSPs could not be explained by valued minion homomorphisms, cf. \cref{ex:crips-vs-valued}.} 
for which we need the following definition.

\begin{definition} 
Let $\vmnn M$ be a valued minion over an abstract minion $\mnn M$.     The \textit{support minion} of $\vmnn M$, denoted $\supp( \vmnn M)$, is the minion given by $\supp( \vmnn M)^{(N)} =  \cup_{\Omega \in \vmnn{M}^{(N)}} \supp(\OmegaO)$ 
    (and the minor maps are restrictions of the minor maps of $\mnn M$). 
\end{definition}

Note that, since $\vmnn M$ is closed under taking minors, a minor of an element of $\supp(\vmnn M)$ is again an element thereof. Thus, the definition makes sense.

\begin{definition}[Valued minion homomorphisms] \label{def:valued_minion_homo}
    Let $\vmnn M$, $\vmnn M'$ be valued minions over  minions $\mnn M$ and $\mnn M'$, respectively. A \emph{valued minion homomorphism} $\vmnn M \to \vmnn M'$ is a probability distribution $\Xi$ on the set of minion homomorphisms $\supp(\vmnn M) \to \supp(\vmnn M')$ such that, for every finite set $J$, every family of finite sets $\caln = (N_j)_{j \in J}$, and every $(\Omega_j)_{j \in J} \in \vmnn M^{(\caln)}$, we have  $(\Xi(\Omega_j))_{j \in J} \in \vmnn M'^{(\caln)}$, where $\Xi(\Omega_j) = (\OmegaI_j, \Xi(\OmegaO_j))$.
\end{definition}

  Note that we slightly abuse the notation in the expression $\Xi(\OmegaO_j)$ since $\OmegaO_j$ is formally a probability distribution on the whole set $\mnn M^{(N_j)}$, not just $\supp(\vmnn{M})^{(N_j)}$. However, the probabilities outside of the support are zero, so this imprecision is harmless. In any case, $\Xi(\OmegaO_j)$ is the probability distribution that is sampled by sampling $\xi \sim \Xi$, sampling $f \sim \OmegaO_j$ (which is in $\supp(\vmnn{M})^{(N_j)}$ with probability 1), and computing $\xi(f)$.

We are ready to state the main theorem of this paper. 

\begin{theorem}[Reductions via valued minion homomorphism] \label{thm:valued-reduction-via-homos}
    Let $\abcs$ and $\abcsp$ be valued promise templates 
    such that the former one has a $\no$ instance.
    If there is a valued minion homomorphism from $\multipol\abcs$ to $\multipol\abcsp$, then $\PCSP\abcsp \leq \PCSP\abcs$.     
\end{theorem}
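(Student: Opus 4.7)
The plan is to follow the same three-step pattern used for crisp PCSPs (cf.~the diagram at the end of~\cref{sec:CrispPCSP}): introduce a valued version of the Minor Condition Problem $\vmc(\vmnn{M}, k)$, show that each of $\PCSP\abcs$ and $\PCSP\abcsp$ is polynomial-time equivalent to $\vmc$ over its own plurimorphism minion for a sufficiently large $k$, and finally observe that a valued minion homomorphism $\vmnn{M} \to \vmnn{M}'$ yields an essentially trivial reduction $\vmc(\vmnn{M}', k') \to \vmc(\vmnn{M}, k)$. The technical engine for the equivalences is the improved canonical payoff formula of~\cref{prop:canonical_payoff_formulas}.

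An instance of $\vmc(\vmnn{M}, k)$ will encode precisely the data appearing in~\cref{prop:canonical_payoff_formulas}: a finite variable set $J$, arities $(N_j)_{j \in J}$ of size at most $k$, functions $\alpha_j \colon N_j \to \qq$ and $\beta_j \colon \mnn{M}^{(N_j)} \to \qq$ (where $\mnn{M}$ is the underlying abstract minion of $\vmnn{M}$), and minor conditions of the form $\pi(j) = j'$ for minor maps $\pi \colon N_j \to N_{j'}$ relating the $\alpha$'s and $\beta$'s accordingly. The YES/NO distinction mirrors item~\labelcref{item:1:canonical_payoff_formulas} of that proposition: YES if the desired gap between $\sum_j \OmegaI_j[\alpha_j]$ and $\sum_j \OmegaO_j[\beta_j]$ can be realized by a family of $\theta$-polymorphisms in $\vmnn{M}$ that respects the minor conditions, NO otherwise. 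The forward reduction $\PCSP\abcs \to \vmc(\multipol\abcs,k)$ is direct: given an input payoff formula $\Phi$ over $X$, output a single-variable VMC instance with $N_1 = X$ and $\alpha_1$, $\beta_1$ built from the coefficients of $\Phi$. The backward direction is the meatier half: given a VMC instance, invoke~\cref{prop:canonical_payoff_formulas} to obtain payoff formulas $(\Phi_j)_{j\in J}$, a scale $\scale$, and shifts $(\shiftI_j,\shiftO_j)_{j\in J}$, then glue them along the minor conditions into a single payoff $\Sigma$-formula whose YES/NO value matches that of the VMC instance; the ``moreover'' clause of~\cref{prop:canonical_payoff_formulas} supplies polynomial-time computability.

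The final step is short: a valued minion homomorphism $\Xi \colon \multipol\abcs \to \multipol\abcsp$ lets one leave the VMC instance essentially unchanged, because by~\cref{def:valued_minion_homo} any hypothetical YES-witness $(\Omega'_j)_{j \in J}$ for $\vmc(\multipol\abcsp, k')$ would, upon keeping each $\OmegaI_j$ fixed and pushing each $\OmegaO_j$ through $\Xi$, produce a plurimorphism family for $\abcs$ with the same expected-value gap, contradicting the presumed NO status of the source $\vmc(\multipol\abcs, k)$ instance. Composing the three reductions then yields $\PCSP\abcsp \le \PCSP\abcs$. I expect the main obstacle to lie in the VMC-to-PCSP direction: one must correctly glue the $(\Phi_j)_{j \in J}$ along the minor conditions while tracking the scaling $\scale$ and the input/output shifts $\shiftI_j,\shiftO_j$ so that the resulting formula's completeness and soundness thresholds align with the $(c,s)$ of $\abcs$. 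The hypothesis that $\PCSP\abcs$ has a \no instance is needed precisely here, to preclude the degenerate branch (corresponding to $\IOO = 0$ in the Farkas dual, cf.~the end of the proof of~\cref{prop:canonical-payoff-baby}) in which $\bfa \leq c$ trivializes the template and collapses the output of the reduction into something vacuous. A secondary point worth verifying is that $\Xi$ preserves the plurimorphism property rather than merely the individual $\theta$-polymorphism property per coordinate, which follows because the definition of a valued minion homomorphism applies a single sampled $\xi \sim \Xi$ uniformly across the whole family $(\Omega_j)_{j \in J}$.
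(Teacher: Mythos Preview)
Your three-step architecture (PCSP $\to$ VMC, VMC $\to$ VMC via $\Xi$, VMC $\to$ PCSP) and your identification of~\cref{prop:canonical_payoff_formulas} as the engine for the VMC-to-PCSP direction are exactly what the paper does. You also correctly locate the role of the ``has a \no instance'' hypothesis.

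However, your description of the VMC problem is confused, and this propagates into the other two steps. In the paper's $\vmc(\mnn M,\vmnn M,k)$ (\cref{def:VMC}), condition~\labelcref{item:1:canonical_payoff_formulas} of~\cref{prop:canonical_payoff_formulas} is \emph{not} the YES/NO distinction; it is the well-definedness constraint~\labelcref{star-condition} that every valid input must satisfy. The YES condition is the existence of an \emph{element-level} assignment $h(x)\in D_x$ satisfying the minor conditions with $\sum_u\alpha_u(h(u))\ge 0$; the NO condition is the nonexistence of a \emph{minion-level} assignment $h(x)\in\mnn M^{(D_x)}$ with $\sum_u\beta_u(h(u))\ge 0$. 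Your phrasing of YES in terms of ``a family of $\theta$-polymorphisms realizing the gap'' conflates the input constraint with the output.

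This matters downstream. First, the PCSP-to-VMC reduction is not single-variable: the paper sets $U=I$ (one variable per constraint of $\Phi$), $V=X$, $D_i=\feas(\phi_i^{\bfa})$, and $\alpha_i,\beta_i$ built from $\phi_i^{\bfa},\phi_i^{\bfb}$ via the canonical matrix; a one-variable instance with $N_1=X$ cannot encode the per-constraint structure. Second, your between-VMCs argument has the direction backwards: $\Xi$ goes $\mnn M\to\mnn M'$, so you cannot push a putative witness $(\Omega'_j)$ over $\mnn M'$ back through $\Xi$ to get something over $\mnn M$. The paper's reduction instead modifies only the output payoffs, setting $\beta_u=\Ex{\xi\sim\Xi}\beta'_u\circ\xi^{(D_u)}$, and argues soundness forward: a not-NO witness $h$ for $\mnn M$ yields, for some $\xi$ in $\supp(\Xi)$, the not-NO witness $\xi\circ h$ for $\mnn M'$. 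Condition~\labelcref{star-condition} for the new $\beta_u$ is verified using the defining property of $\Xi$ as a valued minion homomorphism.
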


The proof uses the same strategy as in the crisp case. We introduce a valued version of the minor condition problem (VMC) and prove that each PCSP is equivalent to a VMC. Moreover, valued minion homomorphisms give us reductions between VMCs, cf. the following figure, in which $\abcs$, $\abcsp$ are valued promise templates, 
 $\mnn M = \polfeas \ab$, $\mnn M'=\polfeas \abp$,
 $\vmnn M = \multipol \abcs$, $\vmnn M' = \multipol \abcsp$,
 and $k$ is sufficiently large.

$$\begin{tikzcd}
  \pcsp \abcsp \arrow[d, leftrightarrow]  & \pcsp \abcs \arrow[d, leftrightarrow] \\
  \vmc(\vmnn M', \mnn M', k) \ar[r, ""] & \vmc(\vmnn M, \mnn M, k) 
\end{tikzcd}$$

\subsection{Valued Minor Conditions} 

\begin{restatable}[Valued Minor Condition Problem]{definition}{DefVMC} \label{def:VMC}
Given a nontrivial, locally finite valued minion $\vmnn M$ over a minion $\mnn M$ and
  an integer $k$, the Valued Minor Condition Problem for $\mnn M$, $\vmnn M$,
  and $k$, denoted by $\vmc(\mnn M,\vmnn M,k)$, is the following problem.

\begin{enumerate}
    \item[\textsf{Input}] 
    \begin{enumerate}[label=\textnormal{\arabic*.}]
     \item disjoint sets $U$ and $V$ (the sets of \emph{variables}),
     \item a set $D_x$ with $|D_x| \leq k$ for every $x \in U \cup V$ (the \emph{domain} of $x$),
     \item a set of formal expressions of the form $\pi(u) = v$, where $u \in U$, $v \in V$, and $\pi: D_u \to D_v$ (the \emph{minor conditions}), 
     \item for each $u \in U$, a pair of functions $\alpha_u:D_u \to \qq$, $\beta_u:\mnn M^{(D_u)} \to \qq$ (the \emph{input and output payoff functions}) which satisfy the following condition. 
       \begin{equation*} \tag{$\star$}\label{star-condition} 
         \forall (\Omega_u)_{u \in U} \in \vmnn M^{(D_u)_{u \in U}} \ \sum_{u \in U} \OmegaI_u[\alpha_u] \geq 0 
         \!\implies\!
         \sum_{u \in U} \OmegaO_u[\beta_u] \geq 0.\end{equation*}
    \end{enumerate}
    \item[\textsf{Output}] 
    \begin{enumerate}
        \item[\yes] if there exists a function $h$ from $U \cup V$ with $h(x) \in D_x$ (for each $x \in U \cup V$) such that, for each minor condition $\pi(u) = v$, we have $\pi(h(u)) = h(v)$, and $\sum_{u \in U} \alpha_u(h(u)) \geq 0$.
        \arxiv{
        \item[\no] if there does not exist a function $h$ from $U \cup V$ with $h(x) \in \mnn M^{(D_x)}$  such that, for each minor condition $\pi(u) = v$, we have $\mnn M^{(\pi)}(h(u)) = h(v)$, and $\sum_{u \in U} \beta_u(h(u)) \geq 0$.
        }
        \conference{
        \item[\no] if there does not exist a function $h$ from $U \cup V$ with $h(x) \in \mnn M^{(D_x)}$  such that, for each minor condition $\pi(u) = v$, we have $\mnn M^{(\pi)}(h(u)) = h(v)$, and\\$\sum_{u \in U} \beta_u(h(u)) \geq 0$.
        }
    \end{enumerate}
\end{enumerate}
\end{restatable}

Note that unlike in the crisp case, the VMC is not a PCSP over a valued promise template, at least not in an obvious way, because of the promise~\labelcref{star-condition}.

Similarly to the crisp case, an instance cannot simultaneously be a $\yes$ and $\no$ instance. Indeed, let $h$ witness that an instance of VMC is a $\yes$ instance and let $\Omega \in \vmnn M^{([1])}$ (which exists since $\vmnn M$ is nontrivial). We define $(\Omega'_u)_{u \in U} \in \vmnn M^{{(D_u)}_{u \in U}}$ by $\Omega'_u = \Omega^{(\gamma_{h(u)})}$, where $\gamma_{h(u)}$ is the mapping $[1] \to D_u$  with $1 \mapsto h(u)$. A straightforward calculation shows that $\sum_{u} \OmegaIp_u[\alpha_u] = \sum_u \alpha_u(h(u))$, which we know is at least 0. From \labelcref{star-condition} we get $\sum_u \OmegaOp_u[\beta_u] \geq 0$. The expression is equal to $\ex{f \sim \Omega} \sum_u \beta_u(f^{(\gamma_{h(u)})})$, therefore there exists $f \in \mnn{M}^{([1])}$ such that $h'(x):=f^{(\gamma_{h(x)})}$ ($x \in U \cup V$)  witnesses that the instance is not a $\no$ instance.

Finally, note that $\multipol \abcs$ for a valued promise template $\abcs$ is locally finite and nontrivial by \cref{prop:valued-char-of-templates}.

\medskip

The proof of \cref{thm:valued-reduction-via-homos} is based on three reductions: from PCSP to VMC, from VMC to PCSP, and between VMCs. The last one is the simplest.

\begin{proposition}[Between VMCs]{}\label{prop:BetweenVMCs} 
  Let $\vmnn M$, $\vmnn M'$ be nontrivial, locally finite valued minions over minions $\mnn M$ and $\mnn M'$, respectively, such that there exists a valued minion homomorphism $\vmnn M \to \vmnn M'$. Then $\vmc (\mnn M', \vmnn M', k) \leq \vmc (\mnn M, \vmnn M, k)$ for any positive integer $k$. 
\end{proposition}

\begin{proof}
Consider an instance of $\vmc(\mnn{M}',\vmnn{M}',k)$ with output functions denoted by $\beta'_u$ for each $u\in U$. Let $C= \max\{\beta'_u(f) \mid u\in U, f \in \mnn{M}^{(D_u)}\}$.
We produce an instance of
$\vmc(\mnn{M},\vmnn{M},k)$ that is unchanged except for changing the output payoff functions as follows. 
For each $u \in U$ we  
  define $\beta_u(f)=\Ex{\xi\sim\Xi} \beta'_u(\xi^{(D_u)}(f))$ for $f \in \supp(\vmnn M)^{(D_u)}$, and $\beta_u(f)=-(|CU| +1)$ for $f \in \mnn M^{(D_u)} \setminus \supp(\vmnn M)^{(D_u)}$.

We need to check that \labelcref{star-condition} is preserved. Let $(\Omega_u)_{u \in U} \in \vmnn M^{(D_u)_{u \in U}}$ and suppose that $\sum_{u \in U} \OmegaI_u[\alpha_u] \geq 0$. Since $\Xi$ is a valued minion homomorphism, $(\Xi(\Omega_u))_{u \in U} \in \vmnn M'^{(D_u)_{u \in U}}$ and hence by \labelcref{star-condition}, $\sum_{u \in U} \Xi(\OmegaO_u)[\beta'_u] \geq 0.$ 
Since $\beta_u(f) = \Ex{\xi\sim\Xi} \beta'_u (\xi^{(D_u)}(f))$ for every $f$ in the support of $\OmegaO_u$, we obtain
$$  \sum_{u \in U} \OmegaO_u[\beta_u] =  
     \sum_{u \in U}\Ex{f \sim \OmegaO_u} \Ex{\xi \sim \Xi}  \beta'_u(\xi^{(D_u)}(f)) = 
    \sum_{u \in U} \Ex{f' \sim \Xi(\OmegaO_u)} \beta'_u(f') =
    \sum_{u \in U} \Xi(\OmegaO_u)[\beta'_u] \geq 0, 
$$     
which completes the proof of \labelcref{star-condition}.

Since the completeness of the reduction is trivial, it remains to verify the soundness. Let
  $h$ be such that $\sum_{u\in U}\beta_u(h(u)) \geq 0$. By the choice of $C$, this implies that $h(u) \in \supp(\vmnn M)$ for each $u \in U$.  Then $\ex{\xi \sim \Xi} \sum_{u \in U}(\beta'_u \circ \xi^{(D_u)})(h(u)) \geq 0$, therefore there exists a
  minion homomorphism $\xi:\supp(\vmnn M) \to \supp(\vmnn M')$ such that $\sum_{u\in U}\beta'_u((\xi^{(D_u)} \circ h)(u)) \geq 0$,
  so $\xi \circ h$ is a witness that the original instance was not a $\no$ instance
  of the $\vmc$ problem.
\end{proof}

\subsection{From PCSP to VMC} \label{subsec:from_pcsp_to_vmc}

\begin{proposition}[From PCSP to VMC] \label{prop:PCSPleqVMC} 
    Let $\abcs$ be a valued promise template, $\mnn M = \polfeas \ab$, and $\vmnn M = \pol \abcs$. If $k$ is a sufficiently large integer, then
    $\PCSP \abcs \leq \vmc( \mnn M, \vmnn M, k)$.
\end{proposition}

\def\FromPCSPtoVMCreduction{
From a payoff formula $\Phi = \sum_{i \in I} w_i \phi_i(\tuple{x}_i)$ over $X$ (where $w_i \in \qqnonneg$), 
we create an instance of $\vmc(\mnn M, \vmnn M, k)$ as follows. The sets $U, V, D_x$ 
and minor conditions are created by applying the reduction in the proof of~\cref{prop:PCSPleqMC} to the crisp template $(\feas(\bfa),\feas(\bfb))$ and conjunctive formula $\bigwedge_{i \in I} \phi_i(\tuple{x}_i)$. That is, for a large enough $k$ we define these objects as follows.
  \begin{enumerate}[label=\textnormal{\arabic*.}]
      \item $U = I$, $V = X$.
      \item $D_i = \feas(\phi_i^{\bfa})$, $D_x = A_{\sort(x)}$ for each $i \in I$, $x \in X$.
      \item For each $i \in I$  and $z \in \ar(\phi_i)$, we introduce the constraint $\pi_{i,z}(i) = \tuple{x}_i(z)$, where $\pi_{i,z}$ is the domain-codomain  restriction of $\proj{\ar(\phi_i)}{z}$ to $\feas(\phi_i^{\bfa})$ and $A_{\sort(z)}$.
  \end{enumerate}    
The input and output payoff functions are defined as follows.

\begin{enumerate}[label=\textnormal{\arabic*.}] \setcounter{enumi}{3}
\item %
$\alpha_i(\tuple{a}) = w_i(\phi_i^{\bfa}(\tuple{a})-c)$,\\
$\beta_i(f) = w_i (\phi_i^{\bfb}(f \ \rows(\GM{\feas(\phi_i^{\bfa})}))-s)$\\
for each $i\in I$, $\tuple{a} \in \feas(\phi_i^{\bfa})$, and $f \in \mnn M^{(D_i)}$. 
\end{enumerate}

For this to be a valid
  instance, we require $k \geq |\feas(\phi^{\bfa})|$ and $k \geq |A_{\sort(x)}|$ for every  $\phi\in\sig$ and $x \in X$. 
  
  We need to verify condition~\labelcref{star-condition}. In fact, a stronger condition holds:
 \begin{equation*} \tag{$\star\star$} \label{stronger-star-condition}
     \forall \IO \in \qqnonneg \ 
     \forall i\in I \ 
     \forall \Omega \in \iopol^{(D_i)} \ab \   
     \OmegaO[\beta_i] \geq \IO \OmegaI[\alpha_i] 
    \end{equation*}
Notice that condition \labelcref{stronger-star-condition} is indeed stronger than \labelcref{star-condition}: assuming \labelcref{stronger-star-condition}, 
$(\Omega_i)_{i \in I} \in \vmnn M^{(D_i)_{i \in I}}$, and $\sum_{i \in I} \OmegaI_i[\alpha_i] \geq 0$, we obtain $\sum_{i \in I} \OmegaO_i[\beta_i] \geq \sum_{i \in I} \IO \OmegaI_i[\alpha_i] \geq 0$, as required. 
}

\begin{proof}
\FromPCSPtoVMCreduction

To check condition \labelcref{stronger-star-condition}, take
 $\IO$, $i$, and $\Omega$ as in that condition and denote $M = \GM{\feas(\phi_i^{\bfa})}$. We get
 \arxiv{
 \begin{align*}
 \OmegaO[\beta_i] &= \Ex{f \sim \OmegaO} w_i(\phi_i^{\bfb}(f \rows(M)) - s)
= w_i (\Ex{f \sim \OmegaO} \phi_i^{\bfb}(f \rows(M)) - s)
\\ &= w_i (\OmegaO[\phi_i,M]-s)
\geq w_i \IO (\OmegaI[\phi_i,M]-c])
= w_i \IO(\Ex{\tuple{a} \sim \OmegaI} \phi_i^{\bfa}(\col{\tuple{a}}{M}) - c)
\\ &= w_i \IO(\Ex{\tuple{a} \sim \OmegaI} \phi_i^{\bfa}(\tuple{a}) - c) 
= \IO \Ex{\tuple{a} \sim \OmegaI} w_i(\phi_i^{\bfa}(\tuple{a})-c)
=\IO \OmegaI[\alpha_i]
 \end{align*}
 }
 \conference{
 \begin{align*}
 \OmegaO[\beta_i] &= \Ex{f \sim \OmegaO} w_i(\phi_i^{\bfb}(f \rows(M)) - s)
\\ & = w_i (\Ex{f \sim \OmegaO} \phi_i^{\bfb}(f \rows(M)) - s)
= w_i (\OmegaO[\phi_i,M]-s)
\\ & \geq w_i \IO (\OmegaI[\phi_i,M]-c])
= w_i \IO(\Ex{\tuple{a} \sim \OmegaI} \phi_i^{\bfa}(\col{\tuple{a}}{M}) - c)
\\ &= w_i \IO(\Ex{\tuple{a} \sim \OmegaI} \phi_i^{\bfa}(\tuple{a}) - c) 
= \IO \Ex{\tuple{a} \sim \OmegaI} w_i(\phi_i^{\bfa}(\tuple{a})-c)
=\IO \OmegaI[\alpha_i]
 \end{align*}
 }

To show completeness of this reduction, let $h$ be such that
  $\Phi^{\bfa}(h) \geq cw(\Phi)$. Then, the following function $h'$ witnesses that the resulting $\vmc$ instance is a $\yes$ instance.
$$h'(x) = h(x) \mbox{ for $x \in V = X$} \mbox{\quad\quad and \quad\quad} h'(i) = h \tuple{x}_i \mbox{ for $i \in U = I$}.$$
Clearly $h'(x) \in D_x$ and, since $\Phi^{\bfa}(h) \geq -\infty$, we also have $h'(i) \in D_{i}$ for each $i\in I$.
For a minor condition $\pi_{i,z}(i)=\tuple{x}_i(z)$ where $z \in \ar(\phi_i)$, we have
$$
\pi_{i,z}(h'(i)) =
\proj{\ar(\phi_i)}{z}(h'(i)) = (h'(i))(z) = h \tuple{x}_i (z) =
h'(\tuple{x}_i(z)).
$$
Moreover, the input payoff functions satisfy
$$
\sum_{i\in I}\alpha_i(h'(i)) = \sum_{i\in I}w_i(\phi^\bfa_i(h\tuple{x}_i)-c) \geq c w(\Phi) -c w(\Phi) = 0.
$$

For soundness, let $h'$ be a witness that the $\vmc$ instance is not a $\no$
instance. We claim that the assignment $h: X \to B$ defined by $h(x) =
(h'(x))(\incl_x)$, where $\incl_x: A_{\sort(x)} \to A$ denotes the inclusion,
witnesses that the original $\PCSP$ instance is not a $\no$ instance, i.e., that $\sum_{i\in I} \phi^\bfb_i(h\tuple{x}_i) \geq s w(\Phi)$. 
For $i\in I$ and $f := h'(i)$, we shall show that $h\tuple{x}_i = f \ \rows(\GM{\feas(\phi_i^{\bfa})})$ by comparing coordinates. Let $z \in \ar(\phi_i)$. From the minor condition $\pi_{i,z}(i)=\tuple{x}_i(z)$ we obtain $f^{(\pi_{i,z})} = h'(\tuple{x}_i(z))$. Applying this fact, \cref{lem:canonicalMatrix}, and definitions, we obtain
\arxiv{
\begin{align*}
h\tuple{x}_i(z) &=
h(\tuple{x}_i(z)) =
(h'(\tuple{x}_i(z)))(\incl_{\tuple{x}_i(z)}) =
f^{(\pi_{i,z})}(\incl_{\tuple{x}_i(z)}) = 
f(\incl_{\tuple{x}_i(z)} \pi_{i,z}) \\ &=
f(\row{z}{\GM{\feas(\phi_i^{\bfa})}}) =
(f \ \rows( \GM{\feas(\phi_i^{\bfa})})) (z).
\end{align*}
}
\conference{
\begin{align*}
h\tuple{x}_i(z) &=
h(\tuple{x}_i(z)) =
(h'(\tuple{x}_i(z)))(\incl_{\tuple{x}_i(z)}) =
f^{(\pi_{i,z})}(\incl_{\tuple{x}_i(z)}) \\ &= 
f(\incl_{\tuple{x}_i(z)} \pi_{i,z}) =
f(\row{z}{\GM{\feas(\phi_i^{\bfa})}}) \\ &=
(f \ \rows( \GM{\feas(\phi_i^{\bfa})})) (z).
\end{align*}
}
Then, we have that
\arxiv{$$ \sum_{i\in I}w_i\phi^\bfb(h\tuple{x}_i) = \sum_{i\in I} w_i \phi^\bfb(h'(i) \ \rows( \GM{\feas(\phi_i^{\bfa})}) )= \sum_{i\in I} \beta_i(h'(i)) + s w(\Phi) \geq s w(\Phi)$$} 
\conference{
\begin{align*}
 \sum_{i\in I}w_i\phi^\bfb(h\tuple{x}_i) &  = \sum_{i\in I} w_i \phi^\bfb(h'(i) \ \rows( \GM{\feas(\phi_i^{\bfa})}) ) \\
 & = \sum_{i\in I} \beta_i(h'(i)) + s w(\Phi) \geq s w(\Phi)   
\end{align*}
}
which completes the proof.
\end{proof}

It would be possible to make a version of VMC based on \labelcref{stronger-star-condition}, $\IO$-polymorphisms for various $\IO$, and appropriately defined valued minions and homomorphism (and this is in fact what is done for the constant factor approximation setting in \cref{app:kazda} for a fixed $\IO$).
The chosen version, albeit more complicated because of the concept of plurimorphisms, requires less information about $\IO$-polymorphisms and gives a stronger reduction result.

\subsection{From VMC TO PCSP}

The idea of the reduction from $\vmc$ to $\pcsp$ is similar as in the proof of
\cref{prop:MCleqPCSP} but we use improved canonical payoff formulas
(\cref{prop:canonical_payoff_formulas}) instead of canonical formulas (\cref{prop:canonical_formula}).
A technical issue is that condition ($\star$) only guarantees condition \labelcref{item:1:canonical_payoff_formulas} in~\cref{prop:canonical_payoff_formulas} for $\IO > 0$. This causes a slight complication.

\begin{proposition}[From VMC to PCSP] \label{prop:VMCleqPCSP} 
    Let $\abcs$ be a valued promise template
    such that $\pcsp \abcs$ has a $\no$ instance, 
    $\mnn M = \polfeas \ab$, and $\vmnn M = \pol \abcs$.  For any positive integer $k$, 
    $\vmc( \mnn M, \vmnn M, k) \leq \PCSP \abcs$. 
\end{proposition}

\begin{proof}
  Consider an instance of $\vmc$ as in~\cref{def:VMC}.

  We try to find a collection of payoff formulas $(\Phi_u)_{u \in U}$ and rationals $\scale \geq 0$, $\shiftI_u$, $\shiftO_u$ such that all the properties in item \labelcref{item:2:canonical_payoff_formulas}  of~\cref{prop:canonical_payoff_formulas} are satisfied (where $J=U$ and $N_u = D_u$). By that proposition, one can find  such a collection or decide that it does not exist, in polynomial time.

  Assume first that we found such formulas.
  We define $\Phi_v$ for $v \in V$ as
  $\Phi_v = \sum_{(\phi,M) \in \Mat{\bfa}{D_v}} 0 \cdot \phi(\rows(M))$ and do the same
  as in the proof of~\cref{prop:MCleqPCSP}: First,
  we make the variable sets of all the $\Phi_x$, $x \in X := U \cup V$ disjoint,  
  take their sum,  and thus obtain a payoff formula
  $\Phi$ over the set of variables $Y = \{(\tuple{a},x) \mid x \in X, \tuple{a} \in A^{D_x}\}$. Second, we identify for each minor condition $\pi(u)=v$ and each $\tuple{a} \in A^{D_v}$ the variables $(\tuple{a}\pi,u)$ and $(\tuple{a},v)$, summing up weights accordingly, and get a payoff formula
  $\Psi$. Recall from the proof of~\cref{prop:MCleqPCSP} that
  assignments for $\Psi$ to $A$ correspond exactly to those collections $(f_x: A^{D_x} \to A)_{x \in X}$ such that
    \begin{itemize}
        \item[(M)]  $f_u^{(\pi)} = f_v$ for each minor condition $\pi(u)=v$.
    \end{itemize}
    and analogously for $B$.

  Properties in item \labelcref{item:2:canonical_payoff_formulas} of \cref{prop:canonical_payoff_formulas} guarantee
  the correctness of this reduction. Indeed, if a function $h$ witnesses that
  the $\vmc$ instance is a $\yes$ instance, then the assignment $h'$ corresponding to $(\proj{D_x}{h(x)})_{x \in X}$ satisfies the minor conditions by (by (M); see the proof of \cref{prop:MCleqPCSP}) and 
  \begin{align*}
  \Psi^{\bfa} (h') 
  &= \sum_{u \in U} \Phi_u^{\bfa}(\proj{D_u}{h(u)})
  \geq  \sum_{u \in U} (\gamma\alpha_u(h(u)) + \shiftI_u + cw(\Phi_u)) \\
  &\geq \sum_{u \in U} \gamma \alpha_u(h(u)) + c \weight(\Phi) \geq c \weight(\Phi) = c \weight(\Psi).
  \end{align*}
  On the other hand, if $\Psi^{\bfb}(h') \geq s \weight(\Psi)$, then the corresponding $(h_x)_{x \in X}$ consists of polymorphism that satisfy the minor conditions (by (M); see again the proof of \cref{prop:MCleqPCSP}) and
  \begin{align*}
  \sum_{u\in U} \beta_u(h_u) 
  &\geq \sum_{u \in U} (\Phi^{\bfb}_u(h_u) - s \weight(\Phi_u) + \shiftO_u)
  \\ &= \Psi^{\bfb}(h') - s \weight(\Psi) + \sum_{u \in U} \shiftO_u
  \geq 0.
  \end{align*}

  Assume now that there are no formulas satisfying item \labelcref{item:2:canonical_payoff_formulas} in~\cref{prop:canonical_payoff_formulas}. This implies that item \labelcref{item:1:canonical_payoff_formulas} is not satisfied; let $\IO$ and $\Omega$ witness it. From ($\star$) we get $\IO=0$ and $\sum_{u \in U} \OmegaI_u[\alpha_u]<0$. 
  Consider any $(\Omega'_u)_{u \in U}$ obtained by changing each $\OmegaI_u$ to some arbitrary probability distribution.
  Clearly, each $\Omega_u'$ is still a 0-polymorphism. Therefore, using
  ($\star$) again, we get $\sum_{u \in U} ({\Omega_u'})^{\mathrm{in}}[\alpha_u]<0$. But then our $\vmc$ instance cannot be a
  $\yes$ instance, so we can simply return any $\no$ instance to the $\pcsp$, which exists by the assumption.
\end{proof}

\section{Examples of homomorphisms} \label{sec:examples}

Examples of valued minion homomorphisms of course include minion homomorphisms for crisp templates. More precisely, if $\abzz$ and $\abzzp$ are templates such that all symbols in all four structures are interpreted as $(-\infty,0)$-valued relations and $\xi$ is a minion homomorphism from $\polfeas \ab$ to $\polfeas \abp$, then the probability distribution $\Xi$ that assigns probability one to $\xi$ is a valued minion homomorphism from $\multipol \abzz$ to $\multipol \abzzp$. Notice that, in this case, the support of  $\multipol\abzz$ coincides with $\polfeas \ab$.

A crisp template can be in fact represented in many equivalent ways. This is discussed in the following example.

\begin{example} \label{ex:crips-vs-valued}

If a valued promise template $\abcs$ has perfect completeness and soundness, that is, $\bfa \leq c$ and $\bfb \leq s$, then $\pcsp \abcs$ is clearly equivalent to $\pcsp \abp$, where $A=A'$, $B=B'$ and relational symbols are interpreted as
$$
\phi^{\bfa'} = \{ \tuple{a} \in A^{\ar(\phi)} \mid \phi^{\bfa}(\tuple{a}) = c \}, \quad 
\phi^{\bfb'} = \{ \tuple{b} \in B^{\ar(\phi)} \mid \phi^{\bfb}(\tuple{b}) = s \}.
$$
Many formally different valued templates $\abcs$ determine the same crisp template $\abp$, for instance, 
$\lin(1,1)$ from~\cref{ex:valued-lin} and $\lin$ from \Cref{ex:lin} considered $(-\infty,0)$-valued with $c=s=0$ (as above). Observe that all such templates $\abcs$ have the same $\supp(\multipol \abcs)$ (equal to $\pol \abp$) and the same plurimorphisms when restricted to the support. Therefore, the reduction between such two versions of the same problem is covered by our reduction theorem as witnessed by the valued minion homomorphism $\Xi$ that assigns probability one to the identity on the support.

We remark that this would not be the case if in \cref{def:valued_minion_homo} the minions $\mnn M$, $\mnn M'$ were not restricted to the supports $\supp(\vmnn M)$, $\supp(\vmnn M')$. For instance, there is even no minion homomorphism from  $\polfeas (\lin(1,1))$ (which consists of all Boolean functions) to $\polfeas (\lin)$ (which consists of parity functions depending on odd number of coordinates), since e.g. every minion homomorphism maps commutative binary operations (which the first minion has) to commutative binary operations (which the second minion does not have).

This issue will be absent in all the other situations in this section. For convenience we will define the probability distribution $\Xi$ on all the minion homomorphisms $\mnn M \to \mnn M'$. The $\Xi$ from the official definition is obtained by restricting the minion homomorphisms to the supports. 
\end{example}

Recall that the reduction theorem fully explains hardness for crisp CSPs (but not for crisp PCSPs~\cite{BartoBKO21}).
In the remainder of this section we discuss two types of situations in which a reduction is explained by the reduction theorem in the valued setting.

\subsection{Gadget reductions}

We start with a very simple example of a gadget reduction.

\begin{example}[$\lin(c,s) \leq \klin{5}(c,s)$, $c \geq s$] Rename relational symbols for $\lin(c,s)$ (see \cref{ex:lin}) to $\phi'_0$, $\phi'_1$ to distinguish them from relational symbols for $\klin{5}(c,s)$. 
The reduction is: replace every constraint $\phi'_i(x_1,x_2,x_3)$ in the input payoff formula by $\phi_i(x_1,x_2,x_3,x_3,x_3)$. 
\end{example}

In the example, the gadget for $\phi'_i(x_1,x_2,x_3)$ is the payoff formula $\phi_i(x_1,x_2,x_3,x_3,x_3)$. More generally, one can use more complex gadgets, possibly including additional variables. Such reductions are covered by the reduction theorem via particularly strong valued minion homomorphisms: assigning probability one to the inclusion, see \cref{app:definability}.

Sometimes a reduction can be obtained by replacing constraints by gadgets and  merging some of the additional variables, cf.~\cite{Bellare_et_al}. The following is an example
that can be explained by the reduction theorem. 

\begin{example}[$\lin(c,s) \leq \klin{4}(c,s)$, $c \geq s$] 
We replace every constraint $\phi_i(x_1,x_2,x_3)$ by $\phi_i(x_1,x_2,x_3,z)$, where $z$ is a fresh variable common to all the constraints. The reduction works since if an assignment for the new instance assigns 0 to $z$, then forgetting $z$ gives an assignment for the original instance with the same payoff; and if $z$ is assigned $1$, then we additionally flip the values $0 \leftrightarrow 1$. 
\end{example}

We verify that this reduction is explained by a valued minion homomorphism.
Let $\Xi$ assign probability one to a single $\xi$ defined as follows. For $f:\{0,1\}^N\to\{0,1\}$ 
    we define $\xi(f) = f$ if $f(0,0, \dots, 0)=0$ and $\xi(f) = t \circ f$ otherwise, where $t: \{0,1\} \to \{0,1\}$ is the transposition.
We claim that 
     $\Xi$ is a valued minion homomorphism from $\multipol(\klin{4}(c,s))$ to $\multipol(\lin(c,s))$; it even  maps $\IO$-polymorphisms to $\IO$-polymorphisms for each $\IO$. 

\begin{proof}
         Consider any relation-matrix pair $(\phi'_i,M')$ for $\lin(c,s)$ and the relation-matrix pair $(\phi_i,M)$ for $\klin{4}(c,s)$ where $M$ is $M'$ padded with a constant zero row.  
     The point $(\OmegaI[\phi_i,M],\OmegaO[\phi_i,M])$ 
     coincides with $(\OmegaI[\phi'_i,M'],\Xi(\OmegaO)[\phi'_i,M'])$, so $\Xi$ indeed maps $\IO$-polymorphisms to $\IO$-polymorphisms. To see this, observe first that  $\OmegaI[\phi_i,M]=\OmegaI[\phi'_i,M']$ since adding trailing zeroes does not affect the parity of a tuple. For $\OmegaO$, consider $f:\{0,1\}^N\to\{0,1\}$. If $f(0,\ldots,0)=0$, we have that $\phi'_i(\xi(f) \rows(M')) = \phi'_i(f \rows(M')) = \phi_i(f \rows(M))$. On the other hand,  if $f(0,\ldots,0)=1$, we have $\phi_i'(\xi(f) \rows(M')) = 1 - \phi_i'(f \rows(M')) = \phi_i(f \rows(M))$. Noticing that $\Xi(\OmegaO)[\phi'_i,M']=\Ex{f\sim\OmegaO}\phi'_i(\xi(f)\rows(M'))$ completes the proof.
\end{proof}

A systematic investigation of gadget-like reductions is a task that we leave for future work.

\subsection{Homomorphisms to Gap Label Cover}

Recall from~\cref{ex:gap_label_cover} that for every $1 \geq \epsilon > 0$, there exist $D,E$ such that $\glc_{D,E}(1,\epsilon)$ is NP-hard. This result is a starting point for many inapproximability results, including those in an influential paper by H{\aa}stad~\cite{DBLP:journals/jacm/Hastad01}.

The following proposition gives a sufficient condition for a reduction from the Gap Label Cover, in particular, it isolates the core of H{\aa}stad's results. The reduction and its correctness is simple (similar to the reduction from MC to PCSP in~\cref{prop:MCleqPCSP}) and not needed in this paper, so we leave it to the reader.

The statement uses instances over the set of variables $A^D \cup A^E$. Note that assignments $A^D \cup A^E \to A'$ exactly correspond to pairs of functions $(f_D: A^D \to A', f_E:A^E \to A')$ and we write such assignments in this way. 

\def\RedToGLCassumptions{
  \begin{itemize}
      \item mappings $\Lambda_D: \mnn{M}^{(D)} \to \Delta D$ and $\Lambda_E: \mnn{M}^{(E)} \to \Delta E$, 
      \item for every $\pi: D \to E$ a normalized payoff formula $\Phi_{\pi}$ over the set of variables $A^{D} \cup A^{E}$, and 
      a linear nondecreasing function $\gamma_{\pi}: \rr \to \rr$ with $\gamma_{\pi}(s) \geq \epsilon$
  \end{itemize}
  such that for every $\pi: D \to E$
  \begin{enumerate}[label=\textnormal{\arabic*.}]
      \item \label{red_from_label_cover_i} $\Phi^{\bfa}_{\pi}(\proj{D}{d},\proj{E}{\pi(d)}) \geq c$ for every $d \in D$, and
      \item \label{red_from_label_cover_ii} $\gamma_{\pi}(\Phi^{\bfb}_{\pi}(f_D,f_E)) \leq \ex{\substack{d \sim \Lambda_D(f_D)\\e\sim \Lambda_E(f_E)}} \pi(d,e)$ for every $f_D \in \mnn{M}^{(D)}$, $f_E \in \mnn{M}^{(E)}$.
  \end{enumerate}
}

\begin{proposition}[Reductions from GLC] \label{prop:red_from_label_cover}
  Let $\abcs$ be a valued promise template, $\mnn{M} = \polfeas \ab$, $D$ and $E$ finite disjoint sets, and $\epsilon \in \rr$. Suppose that there exist
  \RedToGLCassumptions
  Then $\glc_{D,E}(1, \epsilon) \leq \PCSP \abcs$.
\end{proposition}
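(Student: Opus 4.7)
The plan is to mimic the standard dictator-gadget reduction used in H\aa{}stad-style hardness proofs. Given an instance of $\glc_{D,E}(1,\epsilon)$ with vertex set $U \cup V$, edge weights $w_{uv}$, and projections $\pi_{uv}:D\to E$, I will produce an instance of $\PCSP\abcs$ whose variable set is $Y = (U \times A^D) \cup (V \times A^E)$ (with sorts inherited from $A^D$ and $A^E$) and whose payoff formula is
\[
\Psi \;=\; \sum_{\{u,v\}} w_{uv}\,\Phi_{\pi_{uv}}[u,v],
\]
where $\Phi_{\pi_{uv}}[u,v]$ denotes the copy of $\Phi_{\pi_{uv}}$ obtained by renaming each variable $\tuple{a}\in A^D$ to $(u,\tuple{a})$ and each variable $\tuple{a}\in A^E$ to $(v,\tuple{a})$. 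Because each $\Phi_\pi$ is normalized, $\weight(\Psi) = \sum w_{uv}$, and the reduction is obviously polynomial time.

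For completeness, suppose $h_D:U\to D$ and $h_E:V\to E$ satisfy all GLC constraints. Define the ``dictator'' assignment $g(u,\tuple{a}) = \tuple{a}(h_D(u))$ and $g(v,\tuple{a}) = \tuple{a}(h_E(v))$, so that in block form $g_u = \proj{D}{h_D(u)}$ and $g_v = \proj{E}{h_E(v)} = \proj{E}{\pi_{uv}(h_D(u))}$. Applying condition~1 of the hypothesis edge by edge gives $\Phi^{\bfa}_{\pi_{uv}}(g_u,g_v)\geq c$, and summing yields $\Psi^{\bfa}(g)\geq c\,\weight(\Psi)$. For soundness, I take an assignment $g$ with $\Psi^{\bfb}(g) \geq s\,\weight(\Psi)$ and decompose it into block functions $g_u: A^D \to B$ and $g_v: A^E \to B$; finiteness of the total payoff forces each individual constraint $\Phi^{\bfb}_{\pi_{uv}}(g_u,g_v)$ to be finite, and in turn forces $g_u \in \mnn M^{(D)}$ and $g_v \in \mnn M^{(E)}$ (so that $\Lambda_D(g_u)$ and $\Lambda_E(g_v)$ are defined), provided the gadget $\Phi_\pi$ encodes enough feasibility --- a small technical point that will need to be pinned down. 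I then define a randomized GLC labeling by sampling $h_D(u)\sim \Lambda_D(g_u)$ and $h_E(v)\sim \Lambda_E(g_v)$ independently, and by condition~2 the expected fraction of satisfied GLC constraints is at least
\[
\frac{1}{\weight(\Psi)}\sum_{\{u,v\}} w_{uv}\,\gamma_{\pi_{uv}}\!\bigl(\Phi^{\bfb}_{\pi_{uv}}(g_u,g_v)\bigr).
\]
A standard derandomization (sequential conditioning on the vertices of $U \cup V$) will then produce a deterministic labeling of the same quality.

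The main obstacle is to argue that this average is at least $\epsilon$. Writing each linear $\gamma_\pi$ as $\gamma_\pi(x) = \gamma_\pi(s) + a_\pi(x-s)$ with $a_\pi\geq 0$, the average splits as
\[
\frac{1}{\weight(\Psi)}\sum_{\{u,v\}} w_{uv}\gamma_{\pi_{uv}}(s) \;+\; \frac{1}{\weight(\Psi)}\sum_{\{u,v\}} w_{uv}a_{\pi_{uv}}\bigl(\Phi^{\bfb}_{\pi_{uv}}(g_u,g_v)-s\bigr).
\]
The first sum is $\geq \epsilon$ since $\gamma_\pi(s)\geq \epsilon$, and the soundness hypothesis $\sum w_{uv}\bigl(\Phi^{\bfb}_{\pi_{uv}}(g_u,g_v)-s\bigr) \geq 0$ makes the second sum nonnegative at least in the common case that the slopes $a_\pi$ are uniform across $\pi$ (which is exactly what happens in the intended applications, such as H\aa{}stad's 3LIN test). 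For the fully general case I anticipate that a slight additional uniformity assumption on the family $\{\gamma_\pi\}$, or a sharper edge-by-edge accounting, closes the gap; clarifying this is the main technical point to pin down, but it does not affect the overall shape of the argument.
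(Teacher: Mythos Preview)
The paper does not actually prove this proposition: it says the reduction ``is simple (similar to the reduction from MC to PCSP in \cref{prop:MCleqPCSP}) and not needed in this paper, so we leave it to the reader.'' Your dictator-gadget construction is precisely the intended reduction, and both the completeness argument and the randomized-rounding soundness strategy are right. (No derandomization is needed for the decision version: an expected GLC value $\geq\epsilon$ already certifies the existence of a labeling with value $\geq\epsilon$.)

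Both gaps you flag can be closed without any added hypotheses. For feasibility, pad $\Psi$ with a zero-weight copy of the canonical formula over $A^{D}$ at every $u\in U$ and over $A^{E}$ at every $v\in V$, exactly as the paper does for the $V$-variables in the VMC-to-PCSP reduction; by the $0\cdot(-\infty)=-\infty$ convention this forces $g_u\in\mnn M^{(D)}$ and $g_v\in\mnn M^{(E)}$ for any $g$ with $\Psi^{\bfb}(g)>-\infty$, without changing weights or finite payoffs. For the non-uniform slopes $a_\pi$, do not ask for uniformity; instead \emph{scale the gadget at edge $\{u,v\}$ by the slope}: set
\[
\Psi \;=\; \sum_{\{u,v\}} w_{uv}\,a_{\pi_{uv}}\,\Phi_{\pi_{uv}}[u,v]
\]
(plus the zero-weight canonical formulas). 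Completeness is unchanged since each block still contributes $\geq c$. For soundness, $\Psi^{\bfb}(g)\geq s\,\weight(\Psi)$ now says exactly $\sum_{\{u,v\}} w_{uv}a_{\pi_{uv}}\bigl(\Phi^{\bfb}_{\pi_{uv}}(g_u,g_v)-s\bigr)\geq 0$, which is the ``second sum'' in your decomposition; together with $\gamma_\pi(s)\geq\epsilon$ this gives $\sum w_{uv}\gamma_{\pi_{uv}}\bigl(\Phi^{\bfb}_{\pi_{uv}}(g_u,g_v)\bigr)\geq\epsilon\sum w_{uv}$ with no uniformity assumption. Edges with $a_\pi=0$ contribute no constraints, but for those condition~2 already gives $\mathbb{E}[\pi(d,e)]\geq\gamma_\pi(s)\geq\epsilon$ unconditionally, so they are harmless.
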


\begin{example}[$\lin(1-\delta,1/2+\delta)$]
  The first inapproximability result in~\cite{DBLP:journals/jacm/Hastad01} follows from the fact that for each $1/4 \geq \delta>0$, there exists $\epsilon$ (namely $16\delta^3$) such that the template
  $\lin(1-\delta,1/2+\delta)$ satisfies the conditions of~\cref{prop:red_from_label_cover} for every $D$ and $E$.

  The mapping $\Lambda_D$ (and similarly $\Lambda_E$) is a composition of two mappings. The first one is ``folding'' from $\mnn M^{(D)}$ to the set $\mnn{F}$ of \emph{folded} functions, i.e., those satisfying $f(\tuple{a}) = 1 - f(1-\tuple{a})$ (where $(1-\tuple{a})(z)=1-\tuple{a}(z)$ at each coordinate $z$). The second one assigns to $f \in \mnn{F}$ a probability distribution on $D$ based on the size of the Fourier coefficients of $f$. The definition of $\Phi_{\pi}$ is according to the ``long code test'' in~\cite{DBLP:journals/jacm/Hastad01}.
  Details are worked out in~\cref{app:hastad}.
\end{example}

We will show in~\cref{thm:homo_to_glc_sufficient_condition} that the reduction in~\cref{prop:red_from_label_cover} is explained by a valued minion homomorphism. But first we spell out and somewhat simplify the condition for homomorphisms into the plurimorphisms of $\glc_{D,E}(1,\epsilon)$. The simplification is that we only need to consider polymorphisms, not plurimorphisms; it follows from the proof that this is always the case when considering homomorphisms to PCSPs with perfect completeness.

Note that $N$-ary functions in $\polfeas(\glc_{D,E}(1,\epsilon))$ correspond exactly to pairs $(p_D: D^N \to D,p_E: E^N \to E)$, and every pair correspond to such a function since all tuples are feasible.

\begin{proposition}[Simplified homomorphisms to GLC] \label{prop:homo_to_glc_simplified} %SimpleHomoGLC
  Let $\abcs$ be a valued promise template, $\mnn{M} = \polfeas \ab$, $D$, $E$ finite sets, $\epsilon \in \rr$, and $\Xi$ a probability distribution on the set of minion homomorphisms $\mnn{M} \to \polfeas(\glc_{D,E}(1,\epsilon))$. The following are equivalent.
  \begin{enumerate}[label=\textnormal{(\roman*)}]
      \item \label{item:1:homo_to_glc_simplified} $\Xi$ is a valued minion homomorphism from $\multipol \abcs$ to $\multipol (\glc_{D,E}(1, \epsilon))$.
      \item \label{item:2:homo_to_glc_simplified} For every $N \in \FinSet$, $\Omega \in \pol^{(N)} \abcs$, $\pi: D \to E$, $\tuple{d} \in D^N$, and $\tuple{e} \in E^N$
      \arxiv{
      $$
      \forall n \in \supp(\OmegaI) \  \pi(\tuple d(n)) = \tuple e(n)
      \quad \implies \quad 
      \Ex{(p_D,p_E) \sim \Xi(\OmegaO)} \ \pi(p_D(\tuple{d}),p_E(\tuple{e})) \geq \epsilon.
      $$
      }
      \conference{
      $$
      \forall n \in \supp(\OmegaI) \  \pi(\tuple d(n)) = \tuple e(n)
      \ \Rightarrow  \quad \quad
      \Ex{\mathclap{ \quad (p_D,p_E) \sim \Xi(\OmegaO)}}  \ \ \ \pi(p_D(\tuple{d}),p_E(\tuple{e})) \geq \epsilon.
      $$
      }
  \end{enumerate}
\end{proposition}

\begin{proof}
The direction from \labelcref{item:1:homo_to_glc_simplified}~to~\labelcref{item:2:homo_to_glc_simplified} is straightforward: 
 we use item \labelcref{alt-pol:iii} in \cref{prop:alt-pol} for the polymorphism $\Omega \in \pol^{(N)}\abcs$ (pedantically, the plurimorphism consisting of a single polymorphism)  and the probability distribution giving probability one to the relation-matrix pair $(\pi,M)$, where $M$ is the matrix whose rows are $\tuple{d}$ and $\tuple{e}$.

For the opposite direction, we note that if $(\Omega_j)_{j\in J}$ is a plurimorphism of $\abcs$ and $\mu$ is a probability distribution as in item \labelcref{alt-pol:iii} of \cref{prop:alt-pol} which satisfies $\Ex{(j,\pi,M)\sim\mu}\OmegaI_j[\pi,M]\geq 1,$ then, by perfect completeness of $\glc_{D,E}(1,\epsilon)$, the left-hand side of the implication in \labelcref{item:2:homo_to_glc_simplified} must be satisfied for every $(j,\pi,M)\in\supp(\mu)$ with $\rows(M)=(\tuple{d},\tuple{e})$. The assumption thus gives us $\Xi(\OmegaO_j)[\pi,M] \geq \epsilon$ and this inequality remains true when we take expectation over $(j,\pi,M) \sim \mu$, obtaining $\Xi( (\Omega_j)_{j\in J}) \in \multipol(\glc_{D,E}(1, \epsilon))$, as required.
\end{proof}

\begin{theorem}[Reductions from GLC via homomorphisms] \label{thm:homo_to_glc_sufficient_condition} %SufCondForGLC 
   Under the assumptions of~\cref{prop:red_from_label_cover}, there exists
   a valued minion homomorphism from $\multipol \abcs$ to $\multipol (\glc_{D,E}(1,\epsilon))$.
\end{theorem}

\def\RedToGLCconstruction{
    We start by defining a probability distribution $\Xi$ on the set of minion homomorphisms $\mnn{M} \to \polfeas(\glc_{D,E}(1,\epsilon))$. 
    A minion homomorphism $\xi$ is sampled from $\Xi$ as follows. 
    \begin{itemize}
        \item Pick $\lambda_D: \mnn{M}^{(D)} \to D$ by sampling $\lambda_D(f) \in D$ according to $\Lambda_D(f)$, independently for each $f \in \mnn{M}^{(D)}$.
        \item Pick $\lambda_E: \mnn{M}^{(E)} \to E$ similarly using $\Lambda_E(f)$.
        \item Define $\xi$ by $\xi^{(N)}(f) = (\xi^{(N)}_D(f),\xi^{(N)}_E(f))$ for every $N \in \FinSet$ and $f \in \mnn{M}^{(N)}$, where for $\tuple{d} \in D^N$, $\tuple{e} \in E^N$ we define
        $$
        \xi^{(N)}_D(f)(\tuple{d}) = \lambda_D(f^{(\tuple{d})}), \quad 
        \xi^{(N)}_E(f)(\tuple{e}) = \lambda_E(f^{(\tuple{e})}).
        $$
    \end{itemize}
}
\begin{proof}
Let $\Lambda_D, \Lambda_E$, $\Phi_{\pi}$, $\gamma_\pi$ be as in~\cref{prop:red_from_label_cover}.
\RedToGLCconstruction

   Note that $\xi$ indeed preserves minors: for any $\pi:N\to N'$, $f\in \mnn{M}^{(N)}$ and $\tuple{d} \in D^{N'}$ we have
    \arxiv{$$(\xi^{(N)}_D(f))^{(\pi)}(\tuple{d}) = \xi^{(N)}_D(f)(\tuple{d}\pi) =\lambda_D(f^{(\tuple{d}\pi)})= \lambda_D(f^{(\pi)^{(\tuple{d})}}) =\xi^{(N')}_D(f^{(\pi)})(\tuple{d})$$}\conference{\begin{align*}
        (\xi^{(N)}_D(f))^{(\pi)}(\tuple{d}) & = \xi^{(N)}_D(f)(\tuple{d}\pi) =\lambda_D(f^{(\tuple{d}\pi)}) \\ &= \lambda_D(f^{(\pi)^{(\tuple{d})}}) =\xi^{(N')}_D(f^{(\pi)})(\tuple{d})
    \end{align*}}and similarly for $\xi_E$, therefore $(\xi^{(N)}(f))^{(\pi)} = \xi^{(N')}(f^{(\pi)})$.

   We remark that the construction of $\xi$ from $\lambda = (\lambda_D,\lambda_E)$ is not ad hoc: every minion homomorphism $\xi$ can be constructed in this way. We refer to~\cite[Lemma 4.4]{BartoBKO21} for details in the one-sorted case.

   To see that $\Xi$ is a valued minion homomorphism, we verify the condition in item \labelcref{item:2:homo_to_glc_simplified} of~\cref{prop:homo_to_glc_simplified}. Fix $N \in \FinSet$, $\Omega \in \pol^{(N)} \abcs$, $\pi: D \to E$, $\tuple{d} \in D^N$, and $\tuple{e} \in E^N$. We assume that $\pi(\tuple d(n)) = \tuple e(n)$ for every $n \in \supp(\OmegaI)$ and aim to
     prove the inequality
    $$
    \epsilon \leq \Ex{(p_D,p_E) \sim \Xi(\OmegaO)} \pi(p_D(\tuple{d}),p_E(\tuple{e})).
    $$
    Since $\Xi(\OmegaO)$ is $\Xi$ applied to the probability distribution $\Omega$, and $\xi^{(N)}(f) = (\xi^{(N)}_D(f),\xi^{(N)}_E(f))$, the expression on the right hand side is equal to 
    $$
    \Ex{f \sim \OmegaO} \Ex{\xi \in \Xi} \pi(\xi^{(N)}_D(f)(\tuple{d}),\xi^{(N)}_E(f)(\tuple{e})).
    $$
    By definition of $\xi$, we have $\xi^{(N)}_D(f)(\tuple{d}) = \lambda_D(f^{(\tuple{d})})$ and similarly for $E$. By definition of $\lambda_D$ and $\lambda_E$ we can then simplify the expression to
    \begin{equation} \label{eq:ss}
    \Ex{f \sim \OmegaO} \Ex{\substack{d' \sim \Lambda_D(f^{(\tuple{d})}) \\ e' \sim \Lambda_E(f^{(\tuple{e})})}} \pi(d',e').
    \end{equation}

  In order to verify that this value is at least $\epsilon$, we apply item \labelcref{alt-pol:iii} in~\cref{prop:alt-pol} to a probability distribution $\mu$ on $\Mat{\bfa}{N}$ that we define using $\Phi := \Phi_{\pi} = \sum_{i \in I} w_i \phi_i(\tuple{x}_i)$ as follows. 
    \begin{itemize}
        \item Pick $i \in I$ with probability $w_i$ (recall that $\Phi$ is normalized). 
        \item Define $(\phi,M) = (\phi_i,M_i)$ where for $z \in \ar(\phi)$ we set $\row{z}{M_i} = \tuple{x}_i(z)\circ \tuple{d}$ if $\tuple{x}_i(z) \in A^D$ and $\row{z}{M_i} = \tuple{x}_i(z) \circ \tuple{e}$ if $\tuple{x}_i(z) \in A^E$.
    \end{itemize}
    We now verify that $\ex{(\phi,M) \sim \mu} \OmegaI[\phi,M] \geq c$. Consider
    any $i \in I$ and $n \in \supp(\OmegaI)$. Note that, for any $z \in
    \ar(\phi_i)$, the $z$-th component of the $n$-th column of $M_i$ is $\tuple{x}_i(z)(\tuple{d}(n))$ if $\tuple{x}_i(z) \in A^D$ and $\tuple{x}_i(z)(\tuple{e}(n))$ if $\tuple{x}_i(z) \in A^E$, which is equal
    to the $z$-th component of $(\proj{D}{\tuple{d}(n)},\proj{E}{\tuple{e}(n)}) \circ \tuple x_i$. Moreover, by the assumptions on $\tuple{d},\tuple{e},\pi$ we have $\pi(\tuple{d}(n))=\tuple{e}(n)$. We obtain
    \begin{align*}
    \Ex{(\phi,M) \sim \mu} \ \OmegaI[\phi,M] &=
    \Ex{n  \sim \OmegaI} \ \Ex{(\phi,M) \sim \mu} \ \phi^{\bfa}(\col{n}{M}) \\
    &= \Ex{n  \sim \OmegaI} \ \sum_{i \in I} \ w_i \phi^{\bfa}(\col{n}{M_i}) \\
    &= \Ex{n \in \OmegaI} \ \sum_{i \in I} \ w_i \phi^{\bfa}((\proj{D}{\tuple{d}(n)},\proj{E}{\tuple{e}(n)}) \circ \tuple x_i) \\
    &= \Ex{n \sim \OmegaI} \ \Phi^{\bfa}(\proj{D}{\tuple{d}(n)},\proj{E}{\pi(\tuple{d}(n))}),
    \end{align*}
    which is indeed at least $c$ by the assumptions (namely, property \labelcref{red_from_label_cover_i} in~\cref{prop:red_from_label_cover}).

    By item \labelcref{alt-pol:iii} in~\cref{prop:alt-pol} we  obtain $s \leq \ex{(\phi,M) \sim \mu} \ \OmegaO[\phi,M]$.
    By definitions $f \circ \rows(M_i) = (f^{(\tuple{d})},f^{(\tuple{e})}) \circ
    \tuple{x}_i$ for any $f \in \mnn{M}$ and $i \in I$. Indeed, assuming
    $\tuple{x}_i(z) \in A^D$, the $z$-th component of the left-hand side is $f(\tuple{x}_i(z) \circ \tuple{d})$ by definition of $M_i$, which is equal to  $f^{(\tuple{d})} (\tuple{x}_i(z))$ by definition of minors; and similarly if $\tuple{x}_i(z) \in A^E$. Therefore, we get    
    \begin{align*}
    s &\leq \Ex{(\phi,M) \sim \mu} \ \OmegaO[\phi,M]  \\
    &= \Ex{f \sim \OmegaO} \ \Ex{(\phi,M) \sim \mu} \  \phi^{\bfb}(f \circ \rows (M)) \\
    &= \Ex{f \sim \OmegaO} \ \sum_{i \in I} w_i \phi^{\bfb}((f^{(\tuple{d})},f^{(\tuple{e})}) \circ \tuple{x}_i) \\
    &= \Ex{f \sim \OmegaO} \ \Phi^{\bfb}(f^{(\tuple{d})},f^{(\tuple{e})})
    \end{align*}
    By applying the nondecreasing linear function $\gamma := \gamma_{\pi}$, using linearity of expectation, and property \labelcref{red_from_label_cover_ii} in~\cref{prop:red_from_label_cover} we get
    \begin{align*}
        \gamma(s) &\leq  
        \gamma \left(\ex{f \sim \OmegaO} \ \Phi^{\bfb}(f^{(\tuple{d})},f^{(\tuple{e})})\right) \\
        &=\Ex{f \sim \OmegaO} \ \gamma(\Phi^{\bfb}(f^{(\tuple{d})},f^{(\tuple{e})})) \\
        &\leq \Ex{f \sim \OmegaO} \ \Ex{\substack{d' \sim \Lambda_D(f^{(\tuple d)})\\e' \sim \Lambda_E(f^{(\tuple e)})}}\ \pi(d',e'),
    \end{align*}
    therefore \labelcref{eq:ss} is indeed greater than or equal to $\gamma(s) \geq \epsilon$. 
 \end{proof}

\section{Conclusion}

Our main result, \cref{thm:valued-reduction-via-homos}, shows that computational complexity is determined by symmetries in the vast framework of valued PCSPs. We see this result as a step towards the general goal of providing uniform descriptions of algorithms, tractability boundaries, and reductions. 

Crisp non-promise CSPs already include many important combinatorial problems. Valued PCSPs generalize this framework in two directions: towards approximation (promises) and optimization (values). A further vast enlargement in the combinatorial direction would be provided by incorporating interesting classes of infinite structures. In fact, \cite{ViolaZ21,BodirskySL24:lics} already contribute to this project in the constant factor setting.

There are however many basic questions and theory-building tasks left open already for finite-domain valued PCSPs: 
 to characterize gadget reductions (or versions of definability) in terms of polymorphisms or plurimorphisms;  
to characterize plurimorphism valued minions of templates;   
to clarify whether plurimorphisms are necessary to determine computational complexity or enough information is provided already by polymorphisms;
to develop methods for proving nonexistence of homomorphisms; 
to revisit the valued CSP dichotomy  without fixed threshold~\cite{Kolmogorov17:sicomp} and Raghavendra's result on unique games hardness of approximation for all MaxCSPs~\cite{Raghavendra08:everycsp}; 
among others.
An interesting special case for a full complexity classification is the valued non-promise CSPs with fixed threshold. 

The most exciting (and likely challenging) research goal to us is to improve the reduction theorem so that it explains the PCP theorem~\cite{Dinur07:jacm} (hardness of Gap Label Cover) or even the Unique Games Conjecture~\cite{Khot02stoc} (hardness of Unique Games), or some special cases  such as the 2-to-2 Conjecture (now theorem \cite{Khot23:annals}). Crucially, both Gap Label Cover and Unique Games are within our framework, and we can now thus at least specify the aim: to weaken the concept of valued minion homomorphism so that, e.g., plurimorphisms of $\glc_{D,E}(1, \epsilon)$ homomorphically map to the projection minion. A reason for cautious optimism is the recent ``Baby PCP'' paper~\cite{BartoK22gap} that contributes to this effort in the crisp setting.

{
\small
\bibliographystyle{plainurl}
\bibliography{bib}
}

\appendix

\section{Definability} \label{app:definability}

The starting point of the algebraic approach to the CSP was the discovery~\cite{Jeavons97:closure}  that so-called primitive positive (pp-) definability of a CSP template in another template, which entails a simple gadget reduction between the corresponding CSPs, can be characterized in terms of polymorphisms.

This fact can be generalized to crisp PCSPs~~\cite{pippenger2002galois,BG21}.
The purpose of this appendix is to show how this first step generalizes to valued PCSPs. We again start by discussing the crisp setting and then move on to the valued one.

\subsection{Crisp case}

Informally, a pp-definition of a relation in a structure $\bfa$ is a definition that uses relations in $\bfa$, the equality relation, conjunction, and existential quantification. For example, if the signature of $\bfa$ contains a binary symbol $\phi$ and a ternary symbol $\phi'$, we may pp-define a ternary relation $\psi$ by the formula
$$
\psi(x,y,z) \quad \equiv \quad 
\exists u,v \ \phi(y,x) \wedge \phi(y,v) \wedge \phi'(v,z,u) \wedge (u=v) \wedge (x=z).
$$
The equalities can be eliminated by merging variables unless they are between free variables. In the above example, we can define the same relation by the formula
$$
(x=z) \wedge \left(\exists u \ \phi(y,x) \wedge \phi(y,u) \wedge \phi'(u,x,u)\right).
$$
A convenient way to formally deal with existential quantification and equalities between free variables is to define a pp-definition of a relation $\psi$ as a conjunctive formula $\Phi$ over a set of variables $X$ together with a mapping $\iota: \ar(\psi) \to X$  which describes how the coordinates of $\psi$ are matched to the free variables of $\Phi$. In the above example, let $\ar(\psi)=\{1,2,3\}$ and take 
$$
X = \{x,y,u\}, \quad
\Phi = \phi(y,x) \wedge \phi(y,u) \wedge \phi'(u,x,u) , \quad
\iota(1) = x, \iota(2)=y, \iota(3)=x.
$$
The relation defined in $\bfa$ by $\Phi$ and $\iota$ is  
$$
\psi = \{\tuple{a}\iota \mid \exists \tuple{a} \in A^X \ \Phi^{\bfa}(\tuple{a})\}.
$$

We say that a CSP template $(\bfa',\bfa')$ is pp-definable from a CSP template $(\bfa,\bfa)$, where $A=A'$, if each relation in $\bfa'$ is pp-definable from $\bfa$. Crucially, this happens if, and only if, every polymorphism of $(\bfa,\bfa)$ is a also a polymorphism of $(\bfa',\bfa')$. Equivalently, we have a particularly simple minion homomorphism $\pol (\bfa,\bfa) \to \pol (\bfa',\bfa')$, the inclusion. We remark that the polynomial reduction from $\PCSP(\bfa',\bfa')$ to $\PCSP(\bfa,\bfa)$ can then be directly done by replacing conjuncts of the input formula by their pp-definitions (gadgets) $\Phi, \iota$. 

\cref{prop:pp-def-crisp} proves a generalization of this fact to promise templates. The appropriate generalization of pp-definability is as follows: a PCSP template $(\bfa',\bfb')$ is pp-definable from $(\bfa,\bfb)$ if for each symbol $\psi$ in the signature of $(\bfa',\bfb')$ there is $\Phi$ and $\iota$ such that the relation defined in $\bfa$ by $\Phi,\iota$ contains $\psi^{\bfa'}$, and the relation defined in $\bfb$ by $\Phi,\iota$ is contained in $\psi^{\bfb'}$.

The formula $\Phi$, which we need to produce for each $\psi$ assuming the inclusion of polymorphism minions, is a canonical formula from \cref{prop:canonical_formula}. We restate the proposition for convenience.

\begin{proposition}[Canonical formula, restated] \label{prop:canonical-formula-restated} 
For every pair $\ab$ of finite $\Sigma$-structures 
and $N$ a finite set, there exists a $\Sigma$-formula $\Phi$
    over the set of variables $A^N$ that satisfies
  \begin{itemize}
      \item $\Phi^{\bfa}(\proj{N}{n})$ for every $n \in N$, and
      \item $\Phi^{\bfb} = \pol^{(N)} \ab$. 
\end{itemize}
\end{proposition}

\begin{proposition}[Polymorphisms and pp-definitions] \label{prop:pp-def-crisp}
    Let $\ab$ and $\abp$ be promise templates in signatures $\Sigma, \Sigma'$ such that $A=A'$ and $B=B'$. The following are equivalent.
        \begin{enumerate}[label=\textnormal{(\roman*)}, ref=(\roman*)]
        \item
        $\abp$ is pp-definable from $\ab$, that is, for every $\psi \in \Sigma'$ of arity $Z=\ar(\psi)$ there exist a $\Sigma$-formula $\Phi$ over a  set of variables $X$ and a mapping $\iota \colon Z \to X$ such that
    \begin{itemize}
        \item $\forall \tuple{a}' \in A^{Z} \quad 
          \psi^{\bfa'}(\tuple{a}') \ \Rightarrow \ 
          \left(\exists \tuple{a} \in A^{X} \ \tuple{a}'=\tuple{a}\iota \,\wedge\,  \Phi^{\bfa}(\tuple{a})\right) $, and 
        \item $\forall \tuple{b}' \in B^{Z} \quad 
          \psi^{\bfb'}(\tuple{b}') \ \Leftarrow \ 
          \left(\exists \tuple{b} \in B^{X} \ \tuple{b}' = \tuple{b}\iota \,\wedge\, \Phi^{\bfb}(\tuple{b})  \right)$.
    \end{itemize}
        
        \item
        $\pol \ab \subseteq \pol \abp$.
    \end{enumerate}

\end{proposition}

\begin{proof}
For the forward direction, assume that $\abp$ is pp-definable from $\ab$, take $f \in \pol^{(N)} \ab$, and consider $(\psi,M') \in \Mat{\bfa'}{N}$ with arity $Z = \ar(\psi)$. We need to show that $\psi^{\bfb'}(f \,\rows(M'))$.

Let $X$, $\iota$, $\Phi$ be the witnesses of pp-definability of $\psi$ as in the statement.
All columns of $M'$ are in $\psi^{\bfa'}$, therefore by the first item there exists a matrix $M \in A^{X \times N}$ such that, for all $n \in N$, $\col{n}{M}$ is in $\Phi^{\bfa}$ and $\col{n}{M'} = \col{n}{M} \iota$; written equivalently, $\rows(M') = \rows(M) \iota$.
Since $f$ is a polymorphism of $\ab$, we get from~\Cref{prop:combining_solutions} that $\tuple{b} := f \, \rows(M)$ is in $\Phi^{\bfb}$. It follows from the second item that $\tuple{b}\iota = f \, \rows(M) \iota = f\, \rows(M')$ is in $\psi^{\bf B'}$, as required.

For the backward direction, assume that $\pol \ab \subseteq \pol \abp$ and consider $\psi \in \Sigma'$ of arity $Z$. We need to find $\Phi$ and $\iota$ satisfying the two items.

We set $N = \psi^{\bfa'}$, take the canonical formula $\Phi$ over the set of variables $X := A^N$ from~\cref{prop:canonical-formula-restated}, and define $\iota = \rows \ \GM{\psi^{\bfa'}}$ (note that indeed $\iota$ is a mapping from $Z = \ar(\psi)$ to $X = A^{\psi^{\bfa'}}$ as it should be). To verify the first item, let $\tuple{a}' \in A^Z$ be in $\psi^{\bfa'}$ and take $\tuple{a} = \proj{N}{\tuple{a}'}$. By the first item in~\cref{prop:canonical-formula-restated} this tuple is in $\Phi^{\bfa}$. We have $\tuple{a} \iota = \proj{N}{\tuple{a}'} \, \rows (\GM{\psi^{\bfa'}}) = \tuple{a}'$, as required. For the second item, let $\tuple{b} \in B^X$. From the second item in~\cref{prop:canonical-formula-restated} we learn that $\tuple{b}$ is a polymorphism of $\ab$, so it is also a polymorphism of $\abp$ by the assumption. 
Therefore $\tuple{b} \, \rows(\GM{\psi^{\bfa'}}) = \tuple{b} \iota$ is in $\psi^{\bfb'}$, which was to be demonstrated. 
\end{proof}

\subsection{Valued case}

A generalization of pp-definitions to the valued setting is obtained by replacing conjunction by sum and existential quantification by maximization (and dealing with equalities, weights, and thresholds). An analogue of the crisp running example is the ternary payoff function defined by
$$
(x=z) \wedge \left(\max_u \ 2\phi(y,x) + 5\phi(y,u) + 6\phi'(u,x,u)\right)
$$
formally captured by a payoff formula $\Phi$ over $X=\{x,y,u\}$ and with the same $\iota: \ar(\psi) \to X$ as above. 
\begin{align*}
\Phi &= 2\phi(y,x) + 5\phi(y,u) + 6\phi'(u,x,u)  \\
\psi(\tuple{a}') &= \max_{\tuple{a} \in A^X, \ \tuple{a}\iota = \tuple{a}'} \Phi^{\bfa}(\tuple{a})
\end{align*}

Similarly to the crisp case, such definitions will be produced using canonical payoff formulas from~\cref{prop:canonical-payoff-baby}. We state a slightly generalized version with different thresholds on the right hand side of the inequalities. The proof is essentially the same -- just replace $c/s$ by $c'/s'$ on the right hand side of the system of inequalities $F\tuple{y} \leq \tuple{q}$.

\begin{proposition}[Canonical payoff formula with changed thresholds] \label{prop:canonical-payoff-baby-thresholds}
    Let $\ab$ be a pair of valued $\Sigma$-structures, $\mnn M = \polfeas \ab$, $c,c',s,s' \in \qq$, $N$ a finite set, $\alpha: N \to \qq$, and $\beta: \mnn M^{(N)} \to \qq$. 
    Suppose further that if $\bfa \leq c$, then $\alpha \leq c'$ (i.e., $\alpha(n)\leq c'$ for all $n \in N$). 
    Then the following are equivalent.         
    \begin{enumerate}[label=\textnormal{(\roman*)}]
        \item \label{item:1:canonical-payoff-baby-thresholds} For each $\IO \in \qqnonneg$ and each $\Omega \in \iopol^{(N)} \abcs$, 
        $\OmegaO[\beta]-s' \geq \IO (\OmegaI[\alpha]-c')$.  
        \item\label{item:2:canonical-payoff-baby-thresholds} There exists a payoff formula $\Phi$ over the set of variables $A^{N}$ such that
\begin{align*}
    & \forall n\in N  & \Phi^{\bfa}(\proj{N}{n}) -  c \,\weight (\Phi) & \geq \alpha(n) -c' \\
    & \forall f \in \mnn M^{(N)}  &\Phi^{\bfb}(f) - s \,\weight (\Phi) & \leq \beta(f) - s' \\
    &         &  \feas(\Phi^{\bfb}) &= \mnn M^{(N)}. 
\end{align*}
    \end{enumerate}
\end{proposition}

\begin{proposition} \label{prop:valued-definable}
    Let $\abcs$, $\abcsp$ be valued promise templates in signatures $\Sigma, \Sigma'$ such that $A=A'$ and $B=B'$ and such that if $\bfa \leq c$, then $\bfa' \leq c'$. The following are equivalent.
        \begin{enumerate}[label=\textnormal{(\roman*)}, ref=(\roman*)]
        \item\label{item:vdi} $\abcsp$ is valued pp-definable 
        from $\abcs$, that is,
     for every $\psi \in \Sigma'$ of arity $Z=\ar(\psi)$ there exists a payoff $\Sigma$-formula $\Phi$ over a set of variables $X$ and a mapping $\iota \colon Z \to X$  such that
    \begin{itemize}
        \item $\forall \tuple{a}' \in A^{Z} \quad 
        \psi^{\bfa'}(\tuple{a}') - c' \leq 
        \max_{\tuple{a} \in A^{X}, \, \tuple{a}\iota = \tuple{a}'} \Phi^{\bfa}(\tuple{a}) - c\,\weight(\Phi)$, and 
        \item $\forall \tuple{b}' \in B^{Z} \quad 
        \psi^{\bfb'}(\tuple{b}') - s'  \geq
        \max_{\tuple{b} \in B^X, \, \tuple{b}\iota = \tuple{b}'} \Phi^{\bfb}(\tuple{b}) - s\,\weight(\Phi)$.
    \end{itemize}
        \item\label{item:vdii} $\polfeas \ab \subseteq \polfeas \abp$ and, for every $\IO \in \qqnonneg$, each $\IO$-polymorphism of $\ab$ is a $\IO$-polymorphism of $\abp$ (where the undefined output probabilities are implicitly zero).
        In particular, $\Xi$ with probability one on the inclusion is a valued minion homomorphism from $\multipol \abcs$ to $\multipol \abcsp$.
 \end{enumerate}  
\end{proposition}

 \begin{proof}
    For the forward direction, we first observe that the inequalities imply that $(\feas(\bfa'),\feas(\bfb'))$ is pp-definable from $(\feas(\bfa),\feas(\bfb))$. Consequently, $\polfeas \ab \subseteq \polfeas \abp$.
    Let now $\Omega$ be an $N$-ary $\IO$-polymorphism of $\abcs$ and $(\psi,M') \in \Mat{\bfa'}{N}$. We need to show that $\OmegaO[\psi,M'] - s' \geq \IO (\OmegaI[\psi,M']-c')$.

Let $Z=\ar(\phi)$, $X$, $\iota$, $\Phi$ be as in the statement.
By the first item there exists a matrix $M \in A^{X \times N}$ such that, for all $n \in N$,  $\col{n}{M'} = \col{n}{M} \iota$ (equivalently $\rows(M') = \rows(M) \iota$) and 
$$
\psi^{\bfa'}(\col{n}{M'}) - c' \leq \Phi^{\bfa}(\col{n}{M}) - c\, \weight(\Phi),
$$
in particular, every column of $M$ is in $\feas(\bfa)$. Taking the expected value with $n \sim \OmegaI$, we obtain
$$
\OmegaI[\psi,M'] - c' \leq \Ex{n \sim \OmegaI} \Phi^{\bfa}(\col{n}{M}) - c \,\weight(\Phi).
$$

    For every $f \in \polfeas^{(N)} \ab$ we apply the second item to $\tuple{b} =  f \,\rows(M)$ (note that $\tuple{b}\iota = f\,\rows(M')$) and get $\psi^{\bfb'}(f \,\rows(M')) - s' \geq \Phi^{\bfb}(f \,\rows(M)) - s \,\weight(\Phi)$. Applying additionally \cref{prop:valued_combining_solutions}
   gives us
   
   \begin{align*}
   \OmegaO[\psi,M'] - s' 
   &= \Ex{f \sim \OmegaO} \psi^{\bfb'}(f \,\rows(M')) - s'
      \geq \Ex{f \sim \OmegaO} \Phi^{\bfb}(f \,\rows(M)) - s\, \weight(\Phi) 
   \\ & \geq \IO \left(\Ex{n \sim \OmegaI} \Phi^{\bfa}(\col{n}{M}) - c \,\weight(\Phi)\right)
     \geq \IO (\OmegaI[\psi,M'] - c').
   \end{align*}

For the backward direction, assume the inclusions in item \ref{item:vdii} and consider $\psi \in \Sigma'$ of arity $Z$. We need to find $\Phi$ and $\iota$ satisfying the two requirements for valued pp-definability in item~\ref{item:vdi}.

We define 
$$
\mnn M = \polfeas\ab, \quad N = \feas(\psi^{\bfa'}), \quad M = \GM{\feas(\psi^{\bfa'})} 
$$
and $\alpha \colon N \to \qq$, $\beta \colon \mnn M^{(N)} \to \qq$ by 
$$
\alpha(\ba')=\psi^{\bfa'}(\ba'), \quad
\beta(f)=\psi^{\bfb'}(f \, \rows(M)).
$$
Note that $\beta$ is correctly defined since every $f \in \mnn M$ is also in $\polfeas \abp$, so it gives a feasible tuple when applied to the rows of $M$. 
Also observe that item~\ref{item:1:canonical-payoff-baby-thresholds} of \cref{prop:canonical-payoff-baby-thresholds} is satisfied. Indeed, any $\Omega \in \iopol^{(N)} \abcs$ is also a $\IO$-polymorphism of $\abcsp$ by the assumption. Moreover, $\OmegaO[\beta] = \OmegaO[\psi,M]$ and $\OmegaI[\alpha]=\OmegaI[\psi,M]$, so the inequality follows.

Now we take the  formula $\Phi$ from item~\ref{item:2:canonical-payoff-baby-thresholds} in \cref{prop:canonical-payoff-baby-thresholds} over the set of variables $X := A^N$, and define $\iota = \rows(M)$. 
We need to verify the two inequalities required for pp-definability.
For the first one, let $\tuple{a}' \in A^Z$.
If $\tuple{a}'$ is not in $\feas(\psi^{\bfa'})$, then the inequality is clearly satisfied. Otherwise, we take $\tuple{a} = \proj{N}{\tuple{a}'}$ (so $\tuple{a}\iota = \tuple{a}'$) and apply the first property in \cref{prop:canonical-payoff-baby-thresholds}. We obtain $\Phi^{\bfa}(\tuple{a}) - c \,\weight(\Phi) \geq \alpha(\tuple{a}')-c' = \psi^{\bfa'}(\tuple{a}') - c'$, as required. 
For the second inequality, let $\tuple{b} \in B^X$. We  assume that $\tuple{b}$ is in $\feas(\Phi^{\bfb})$ (otherwise the inequality is trivial), which is equal to $\mnn{M}^{(N)}$ by the third property in~\cref{prop:canonical-payoff-baby-thresholds}. 
From the second property in~\cref{prop:canonical-payoff-baby-thresholds} we then conclude
$\Phi^{\bfb}(\tuple{b}) - s\,\weight(\Phi) \leq \beta(\tuple{b}) - s' = \psi^{\bfb'}(\tuple{b} \iota) - s'$, finishing the proof. 
 \end{proof}

\section{Linear equations} \label{app:hastad}

In this section we show that $\lin(1-\delta,1/2+\delta)$ satisfies the assumptions of~\cref{prop:red_from_label_cover,thm:homo_to_glc_sufficient_condition} for a suitable $\epsilon$ and all $D$ and $E$. The proof very much follows parts of \cite{DBLP:journals/jacm/Hastad01} with some small changes. 

It will be convenient to change the domain from $\{0,1\}$ to $\{-1,1\}$ and scale and shift the payoff functions, adjusting the completeness and soundness parameters, as follows.
We fix the template $(\bfa,\bfa,c,s)$, where $A = \{-1,1\}$, the signature consists of two ternary symbols $\phi_{-1}, \phi_{1}$ interpreted as $\phi^{\bfa}_i(a_1,a_2,a_3) = ia_1a_2a_3$ for $i=-1,1$, $c=1-2\delta$, $s=2\delta$. Note that indeed this template is just the template $\lin(1-\delta,1/2+\delta)$ re-scaled using the function $r \mapsto 2r-1$, with renamed elements $0 \mapsto 1$, $1 \mapsto -1$. We set the Gap Label Cover soundness parameter to $\epsilon = 16\delta^3$.
We fix $\mnn{M} = \polfeas \aaa$, so $\mnn{M}^{(N)}$ is the set of all $N$-ary \emph{Boolean} functions -- i.e., functions $f:A^N \to A$. 

The definition and analysis of $\Lambda$ and $\Phi$ from~\cref{prop:red_from_label_cover} require two ingredients that we discuss in the next two subsections.

\subsection{Folding}

A Boolean function $f \in \mnn{M}^{(N)}$ is \emph{folded} (also called self-dual) if it preserves the disequality relation, i.e.,  $f(-\tuple{a})=-f(\tuple{a})$ for every $\tuple{a} \in A^N$. We denote the function minion of folded Boolean functions by $\mnn F$. In order to ``fold'' a function we fix for each finite set $N$ a subset $A^N_{*}$ of $A^N$ that contains exactly one (arbitrarily chosen) element from $\{\tuple{a},-\tuple{a}\}$ for each $\tuple{a} \in A^N$. Let $\seltup{\tuple{a}}$ denote the selected element. We define a mapping $ \fold{ \ \cdot \ } :\mnn{M}^{(N)} \to \mnn{F}^{(N)}$ by $\fold{f}(\tuple{a}) = f(\tuple{a})$ if $\tuple{a} \in A^N_{*}$ and $\fold{f}(\tuple{a})=-f(-\tuple{a})$ otherwise; so only values of $f$ on tuples from $A^N_{*}$ are used to define $\fold{f}$.
Note that $\fold{f}$ is always folded and for folded functions we have $f=\fold{f}$. For instance, $\fold{\proj{N}{n}} = \proj{N}{n}$.

\subsection{Fourier Analysis of Boolean Functions.}

We will mostly follow the standard notation and refer the reader to~\cite{ODonnell14:book} for more details.

For a finite set $N$, let $\langle \cdot \, , \cdot\rangle$  denote the dot product in the vector space $\rr^{A^N}$ (which contains $\mnn{M}^{(N)}$) scaled as follows
$$
\langle f,g\rangle := \frac{1}{2^{|N|}} \sum_{\ba \in A^N} f(\ba)g(\ba).
$$ 
This vector space has an orthonormal basis consisting of Boolean functions $\rchi_I$, $I \subseteq N$, defined by
$$
\rchi_I(\ba):=\prod_{i\in I}\ba(i) \text{ for $I \neq \emptyset$, }
\rchi_{\emptyset}(\ba) := 1.
$$
Every function $f$ in $\rr^{A^N}$, in particular every $f \in \mnn M^{(N)}$, can thus be written in terms of the Fourier coefficients with respect to this basis:
$$
f=\sum_{I \subseteq N} \hat{f}_I \rchi_I, \quad
\mbox{ where } \hat{f}_I = \langle f, \rchi_I \rangle.
$$
This sum is referred to as the \emph{Fourier expansion} of $f$.

Here is a list of simple facts about the Fourier coefficients and basis vectors $\chi_I$. 
We use $I \symdif J$ to denote the symmetric difference of $I, J \subseteq N$ and, for $\pi: N \to N'$ and $I \subseteq N$, we define $\oddim{\pi}(I) = \{n' \in N' \mid |\pi^{-1}(n') \cap I| \mbox{ is odd}\}$.
\begin{enumerate}[label=(F\arabic*)]
    \item \label{it:fact1} $\sum_{I \subseteq D} \hat{f}^2_I =1$ for every $f \in \mnn{M}^{(N)}$ (from Parseval's identity since $f$ is $\{-1,1\}$-valued; see \cite[page 809]{DBLP:journals/jacm/Hastad01}).
    \item \label{it:fact2} $\hat{f}_{\emptyset} = 0$ for any folded $f \in \mnn{F}^{(N)}$  \cite[Lemma 2.32]{DBLP:journals/jacm/Hastad01}.
    \item \label{it:fact3} $\chi_I(\tuple{a}\tuple{b}) = \chi_I(\tuple{a}) \chi_I(\tuple{b})$ for every $I \subseteq N$ and $\tuple{a},\tuple{b} \in A^N$ \cite[Lemma 2.27]{DBLP:journals/jacm/Hastad01}.
    \item \label{it:fact4} $\chi_I(\tuple{a}) \chi_J(\tuple{a}) = \chi_{I \symdif J} (\tuple{a})$ for every $I,J \subseteq N$ and $\tuple{a} \in A^N$ \cite[Lemma 2.28]{DBLP:journals/jacm/Hastad01}.
    \item \label{it:fact5} $\ex{\tuple{a}} \ \chi_I(\tuple{a}) = 0$ for every $I \subseteq N$ unless $I = \emptyset$ in which case the expected value is 1. Here $\tuple{a}$ is selected uniformly from $A^N$ \cite[Lemma 2.29]{DBLP:journals/jacm/Hastad01}.
    \item \label{it:fact6}  $\chi_I(\tuple{a}\pi) = \chi_{\oddim{\pi}(I)}(\tuple{a})$ for every $\pi: N \to N'$, $I \subseteq N$, and $\tuple{a} \in A^{N'}$ \cite[Lemma 2.30]{DBLP:journals/jacm/Hastad01}.
\end{enumerate}

\subsection{Proof}

We first define the objects required in~\cref{prop:red_from_label_cover}. Recall that these are 
  \begin{itemize}
      \item mappings $\Lambda_D: \mnn{M}^{(D)} \to \Delta D$ and $\Lambda_E: \mnn{M}^{(E)} \to \Delta E$, and 
      \item for every $\pi: D \to E$ a normalized payoff formula $\Phi_{\pi}$ over the set of variables $A^{D} \cup A^{E}$, and 
      a linear nondecreasing function $\gamma_{\pi}: \rr \to \rr$ with $\gamma_{\pi}(s) \geq \epsilon$.
  \end{itemize}
For $f \in \mnn{M}^{(D)}$, the distribution $\Lambda_D(f)$ is sampled as follows.
\begin{itemize}
    \item Set $p = \fold{f}$.
    \item Select a random $I \subseteq D$ with probability $\hat{p}^2_I$. 
    \item Select a random $d \in I$ with the uniform probability.
\end{itemize}
Note that the $\hat{p}^2_I$ indeed define a probability distribution by \labelcref{it:fact1} and that the empty set is never picked in the second step by \labelcref{it:fact2} so the third step makes sense (and the probability that $\Lambda_D(f)=d$ is equal to $\sum_{d \in I \subseteq D} \hat{f}^2_I/|I|$). The mapping $\Lambda_E$ is defined analogously. 

For $\pi: D \to E$ the normalized payoff formula $\Phi = \Phi_{\pi}$, regarded as a probability distribution on constraints, is sampled as follows.
\begin{itemize}
    \item Choose $\tuple{a} \in A^E$ with the uniform probability.
    \item Choose $\tuple{b} \in A^D$ with the uniform probability.
    \item Choose $\tuple{\nu} \in A^D$ by setting, independently for each $d$, $\tuple{\nu}(d)=-1$ with probability $\delta$ and $\tuple{\nu}(d)=1$ with probability $1-\delta$ (this is the \emph{noise} vector).
    \item Return the constraint $\phi_i(\seltup{\tuple{a}},\seltup{\tuple{b}}, \seltup{((\tuple{a} \circ \pi)\tuple{b}\tuple{\nu})})$ (here juxtaposition denotes component-wise multiplication), where $i=1$ if the number of times $\tuple{c} \neq \seltup{\tuple{c}}$ for the three arguments $\tuple{c}$ of $\phi$ is even, and $i=-1$ otherwise; 
    in other words, $i$ is chosen so that 
    \begin{align} \label{eq:unfold}
    \phi_i^{\bfa}(h_E(\seltup{\tuple{a}}),&h_D(\seltup{\tuple{b}}),h_D(\seltup{((\tuple{a}\circ\pi)\tuple{b}\tuple{\nu})}))  \nonumber \\
     & =
     \phi_1^{\bfa}(\fold{h_E}(\tuple{a}),\fold{h_D}(\tuple{b}),\fold{h_D}((\tuple{a}\circ\pi)\tuple{b}\tuple{\nu}))
    \end{align} 
 for every assignment $(h_D,h_E): A^D \cup A^E \to A$.
\end{itemize}
Finally, let $\gamma=\gamma_{\pi}$ be the linear function tangent to the quadratic function $r \mapsto 4\delta r^2$ at $r = s=2\delta$, i.e., $\gamma(r) = 16\delta^2(r-\delta)$. Notice that $\gamma$ is increasing and $\gamma(s) = 16\delta^3 = \epsilon$.

We now need to verify the two conditions in~\cref{prop:red_from_label_cover}, namely the following. 
  \begin{enumerate}[label=\textnormal{\arabic*.}]
      \item $\Phi^{\bfa}(\proj{D}{d},\proj{E}{\pi(d)}) \geq 1-2\delta$ for every $d \in D$, and
      \item $\gamma(\Phi^{\bfa}(f_D,f_E)) \leq \ex{\substack{d \sim \Lambda_D(f_D)\\e\sim \Lambda_E(f_E)}} \pi(d,e)$ for every $f_D \in \mnn{M}^{(D)}$, $f_E \in \mnn{M}^{(E)}$.
  \end{enumerate}
The first one is easy:  $\Phi^{\bfa}(\proj{D}{d},\proj{E}{\pi(d)})$ is the expected value of $\phi^{\bfa}((\proj{D}{d},\proj{E}{\pi(d)}) \circ \tuple{x})$ when this constraint $\phi(\tuple{x})$ is selected according to the above distribution. By \labelcref{eq:unfold} and definitions, this number is equal to the expected value of
\arxiv{
\begin{align*}
\phi_1^{\bfa}(\proj{E}{\pi(d)}(\tuple{a}),\proj{D}{d}(\tuple{b}),\proj{D}{d}((\tuple{a}\circ\pi)\tuple{b}\tuple{\nu}))
&= \tuple{a}(\pi(d))\tuple{b}(d)(\tuple{a}\circ\pi)(d)\tuple{b}(d)\tuple{\nu}(d) \\
&= (\tuple{a}(\pi(d)))^2 (\tuple{b}(d))^2\tuple{\nu}(d)
= \tuple{\nu}(d),
\end{align*}
}\conference{
\begin{align*}
\phi_1^{\bfa}&(\proj{E}{\pi(d)}(\tuple{a}),\proj{D}{d}(\tuple{b}),\proj{D}{d}((\tuple{a}\circ\pi)\tuple{b}\tuple{\nu})) \\
& = \tuple{a}(\pi(d))\tuple{b}(d)(\tuple{a}\circ\pi)(d)\tuple{b}(d)\tuple{\nu}(d) \\
& = (\tuple{a}(\pi(d)))^2 (\tuple{b}(d))^2\tuple{\nu}(d)
= \tuple{\nu}(d),
\end{align*}
}which is $-1 \cdot \delta + 1 \cdot (1-\delta) = 1 - 2\delta$.

It remains to verify the second condition. Denote $p = \fold{f_D}$ and $q = \fold{f_E}$. Observe that $\Phi^{\bfa}(f_D,f_E)$, the expected value of $\phi^{\bfa}((f_D,f_E) \circ \tuple{x})$, is equal to the expected value of $\phi^{\bfa}((p,q) \circ \tuple{x})$ (since all variables in every constraint are from $A^D_* \cup A^E_*$ wherein $p,q$ coincide with $f_D,f_E$) which is, by \labelcref{eq:unfold},
$$
 \Phi^{\bfa}(f_D,f_E) = \Ex{\tuple{a},\tuple{b},\tuple{\nu}} \ q(\tuple{a}) p(\tuple{b}) p((\tuple{a} \circ \pi)\tuple{b}\tuple{\nu}).
$$
We replace $p$ and $q$ by their Fourier expansion and rewrite the latter expression using \labelcref{it:fact3}, linearity of expectation, and independence of the choices of $\tuple{a}$, $\tuple{b}$, $\tuple{\nu}$, as follows.
\begin{align*}
    \Ex{\tuple{a},\tuple{b},\tuple{\nu}} \ 
          &\left(\sum_{I \subseteq E} \hat{q}_I \rchi_I(\tuple{a})\right)
          \left(\sum_{J \subseteq D} \hat{p}_J \rchi_J(\tuple{b})\right)
          \left(\sum_{K \subseteq D} \hat{p}_K \rchi_K((\tuple{a} \circ \pi)\tuple{b}\tuple{\nu})\right) 
                \\
    &= \Ex{\tuple{a},\tuple{b},\tuple{\nu}}
       \sum_{I,J,K} 
          \hat{q}_I \hat{p}_J \hat{p}_K 
          \rchi_I(\tuple{a}) 
          \rchi_J(\tuple{b}) 
          \rchi_K(\tuple{a} \circ \pi) \rchi_K(\tuple{b}) \rchi_K(\tuple{\nu}) 
                \\
    &= \sum_{I,J,K}
       \hat{q}_I \hat{p}_J \hat{p}_K 
       \Ex{\tuple{a},\tuple{b},\tuple{\nu}}
          \rchi_I(\tuple{a}) 
          \rchi_J(\tuple{b}) 
          \rchi_K(\tuple{a} \circ \pi) \rchi_K(\tuple{b}) \rchi_K(\tuple{\nu}) 
               \\
   &= \sum_{I,J,K}               
       \hat{q}_I \hat{p}_J \hat{p}_K 
       \left(\Ex{\tuple{a}} \rchi_I(\tuple{a}) \rchi_K(\tuple{a} \circ \pi)\right)
       \left(\Ex{\tuple{b}}  \rchi_J(\tuple{b}) \rchi_K(\tuple{b}) \right)
       \left(\Ex{\tuple{\nu}}    \rchi_K(\tuple{\nu}) \right)
\end{align*}
The last parenthesis is, by definition of $\chi$, the independence of the choices of $\tuple{\nu}(k)$, and already observed $\ex{\tuple{\nu}} \ \tuple{\nu}(k)=1-2\delta$ 
$$
\Ex{\tuple{\nu}} \ \rchi_K(\tuple{\nu}) 
= \Ex{\tuple{\nu}} \ \prod_{k \in K} \tuple{\nu}(k)
= \prod_{k \in K} \Ex{\tuple{\nu}} \tuple{\nu}(k)
= (1-2\delta)^{|K|}.
$$
The second one is $\ex{} \chi_{J \symdif K}(\tuple{b})$ by \labelcref{it:fact4}, which is zero unless $J=K$ in which case it is one by \labelcref{it:fact5}. Finally, the first parenthesis is 
$\ex{} \rchi_I(\tuple{a}) \rchi_{\oddim{\pi}(K)}(\tuple{a})
= \ex{} \rchi_{I \symdif \oddim{\pi}(K)}(\tuple{a})$ by \labelcref{it:fact6} and \labelcref{it:fact4}, which is zero unless $I = \oddim{\pi}(K)$ in which case it is one by \labelcref{it:fact5}.
Altogether, we obtain
$$
\Phi^{\bfa}(f_D,f_E) = \sum_{K \subseteq D} {\hat{p}_K}^2 \hat{q}_{\oddim{\pi}(K)}  (1-2\delta)^{|K|}.
$$

The right hand side of item \labelcref{red_from_label_cover_ii} in \cref{prop:red_from_label_cover}, which is equal to the probability that $\pi(d)=e$ when $d$ and $e$ are selected according to $\Lambda_D(f_D)$ and $\Lambda_E(f_E)$ (respectively), can be crudely estimated as follows. For any $K \subseteq D$, the probability that $K$ and $\oddim{\pi}(K)$ are selected in the second step of sampling $\Lambda_D(f_D)$ and $\Lambda_E(f_E)$ is $\hat{p}^2_K \hat{q}^2_{\oddim{\pi}(K)}$. For each $e \in \oddim{\pi}(K)$ there exists $d \in K$  such that $\pi(d)=e$, so we succeed with probability at least $\hat{p}^2_K \hat{q}^2_{\oddim{\pi}(K)} |K|^{-1}$. Overall, the probability that $\pi(d)=e$ is thus at least the sum of these expressions over nonempty $K$ (as $\hat{p}_{\emptyset}=0$ by \labelcref{it:fact2}) and so is the expected value.

We can now finish the proof using \labelcref{it:fact1}, the Cauchy-Schwarz inequality, an auxiliary inequality $|K|^{-1/2} \geq 2\delta^{1/2}(1-2\delta)^{|K|}$ (inequality (16) in \cite{DBLP:journals/jacm/Hastad01}), and definition of $\gamma$ as follows.
\begin{align*} 
\Ex{\substack{d \sim \Lambda_D(f_D)\\e\sim \Lambda_E(f_E)}}  \ \pi(d,e) 
&\geq \sum_{K} {\hat{p}_K}^2 {\hat{q}_{\oddim{\pi}(K)}}^2|K|^{-1} \\
&= \left(\sum_{K} \left(\hat{p}_K \hat{q}_{\oddim{\pi}(K)}|K|^{-1/2}\right)^2\right)
   \left(\sum_K \hat{p}_K^2\right) \\
&\geq \left(\sum_K \hat{p}^2_K \hat{q}_{\oddim{\pi}(K)} |K|^{-1/2} \right)^2  \\
&\geq \left(\sum_K \hat{p}^2_K \hat{q}_{\oddim{\pi}(K)} 2 \delta^{1/2} (1-2\delta)^{|K|} \right)^2  \\
&= 4 \delta \left(\sum_K \hat{p}^2_K \hat{q}_{\oddim{\pi}(K)}  (1-2\delta)^{|K|} \right)^2 \\
&= 4 \delta (\Phi^{\bfa}(f_D,f_E))^2 \\
& \geq \gamma (\Phi^{\bfa}(f_D,f_E)).
\end{align*}

\section{Constant factor valued PCSPs} \label{app:kazda}

In this section we discuss the constant factor approximation version of the theory from \cite{Kazda21}, which was among the starting points of this work. The difference is that instead of fixing $c$ (completeness) and $s$ (soundness), we fix a constant factor $\kappa$, we let $c$ be a part of the input, and define the soundness as $s = \kappa c$. We also simplify the accepting inequality $\Phi^{\bfa}(h) \geq c w(\Phi)$ to $\Phi^{\bfa}(h) \geq c$ because $c$ is a part of the input anyway, and adjust the rejecting inequality accordingly.

The proofs are quite similar (and somewhat simpler) to the proofs in the previous section, therefore they are quite brief. 
Moreover, we abuse the terminology and name the concepts in the same way as in the main part of the paper. If the need arises in the future to simultaneously use both, then perhaps the adjective ``constant factor'' could be added, e.g., one could define a \emph{constant factor valued promise template}.

\begin{definition}[Valued PCSP]
A \textit{valued promise template} is a triple $\abt$ where
 \begin{itemize}
     \item $\bfa$, $\bfb$ are valued relational structures in the same signature $\Sigma$, and
     \item $\IO \in \qqpos$ is the \emph{constant factor}
 \end{itemize}
    such that $\max \Phi^{\bfb} \geq \IO \max \Phi^{\bfa}$ for every payoff $\Sigma$-formula $\Phi$.

Given a valued PCSP template $\abt$, the \emph{Promise Constraint Satisfaction
  Problem over $\abt$}, denoted by $\PCSP\abt$, is the following problem. 
\begin{enumerate}
    \item[\textsf{Input}]: A finite $\tau$-sorted set $X$, a payoff $\Sigma$-formula $\Phi$ over $X$, and $c \in \qq$ (the \emph{completeness}),
    \item[\textsf{Output}]: $\yes$ if $\exists h \ \Phi^{\bfa}(h) \geq c$; $\no$ if $\forall h \ \Phi^{\bfb}(h) < c\IO$.
\end{enumerate}
\end{definition}

We define polymorphisms in this context in exactly the same way as we previously defined $\IO$-polymorphism where $s = \kappa c$ (and $c$ is arbitrary). In other words,
$\Omega$ is a polymorphism if for each relation-matrix pair $(\phi,M)$, the point $(\OmegaI[\phi,M],\OmegaO[\phi,M]) \in \qq^2$  lies on or above the line with slope $\IO$ going through the origin $(0,0)$. The definition thus simplifies as follows.

\begin{definition}[Polymorphisms]
   Let $\ab$ be a pair of $\Sigma$-structures, $\mnn M = \polfeas \ab$, $c \in \qq$, and $\IO \in \qqpos$.  
   An $N$-ary weighting $\Omega$ of $\mnn M$ is
   a \emph{polymorphism} of $\abt$ if 
   $$
   \forall (\phi,M) \in \Mat{\bfa}{N} \ \ \OmegaO[\phi,M] \geq \IO \OmegaI[\phi,M].
   $$
   We denote by $\pol^{(N)} \abt$ the sets of all $N$-ary polymorphisms and $\pol \abt = (\pol^{(N)} \abt)_{N \in \FinSet}$. 
\end{definition}

An important technical tool is again the canonical payoff formulas. The version for constant factor is somewhat similar to the simpler canonical formula in \cref{prop:canonical-payoff-baby}. The main difference is that we allow a shift, which we can afford because the completeness parameter is now a part of the input to a PCSP.

\begin{proposition}[Canonical payoff formula] \label{prop:canonical-payoff-kazda}
    Let $\ab$ be a pair of $\Sigma$-structures, $\mnn M = \polfeas \ab$, $\IO \in \qqpos$, $N$ a finite set, $\alpha: N \to \qq$, and $\beta: \mnn M^{(N)} \to \qq$. The following are equivalent.
    \begin{enumerate}[label=\textnormal{(\roman*)}]
        \item \label{item:1:canonical-kazda} For each $\Omega \in \pol^{(N)} \ab$, 
        $\OmegaO[\beta] \geq \IO \OmegaI[\alpha]$. 
        \item \label{item:2:canonical-kazda} There exists a payoff formula $\Phi$ over the set of variables $A^{N}$ and $\delta \in \qq$ (the \emph{shift}) such that 
      \begin{align*}
          \forall n\in N \quad\quad \Phi^{\bfa}(\proj{N}{n}) &\geq  \alpha(n) +  \delta \\
          \forall f \in \mnn M^{(N)} \quad\quad \Phi^{\bfb}(f)  &\leq \beta(f) + \IO \delta \\
                     \feas(\Phi^{\bfb}) &= \mnn M^{(N)}. 
\end{align*}
    \end{enumerate}
\end{proposition}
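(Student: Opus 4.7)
The plan is to adapt the Farkas-lemma argument from the proof of Proposition~\ref{prop:canonical-payoff-baby}. The two differences from that proof are that the shift $\delta$ is now a free rational (not sign-restricted) and the constant factor $\kappa$ is strictly positive; both work in our favor, and in particular the strict positivity of $\kappa$ will let us dispatch the degenerate case without needing an auxiliary hypothesis like the one imposed on $\alpha$ in Proposition~\ref{prop:canonical-payoff-baby}.

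First I would rewrite (ii) in the equivalent form: there exist nonnegative coefficients $w_{\phi,M}$ for $(\phi,M) \in \Mat{\bfa}{N}$ and $\delta \in \qq$ such that $\Phi = \sum_{(\phi,M) \in \Mat{\bfa}{N}} w_{\phi,M}\,\phi(\rows(M))$ satisfies the two displayed inequalities. The feasibility condition $\feas(\Phi^{\bfb}) = \mnn M^{(N)}$ then follows automatically from the crisp canonical formula (Proposition~\ref{prop:canonical_formula}) applied to $(\feas(\bfa),\feas(\bfb))$, using the convention $0 \cdot -\infty = -\infty$; the restriction of the sum to $(\phi,M) \in \Mat{\bfa}{N}$ is forced by the first inequality, since otherwise $\Phi^{\bfa}(\proj{N}{n}) = -\infty$ for some $n$.

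Next I would split $\delta = \delta^+ - \delta^-$ with $\delta^+,\delta^- \in \qqnonneg$ and encode the above as a system $F\tuple{y} \leq \tuple{q}$ of rational linear inequalities in the nonnegative unknowns $\tuple{y} = (w_{\phi,M}, \delta^+, \delta^-)$. All entries of $F$ are finite because $(\phi,M) \in \Mat{\bfa}{N}$ guarantees $\col{n}{M} \in \feas(\phi^{\bfa})$ and $f \in \mnn M^{(N)} = \polfeas^{(N)}\ab$ guarantees $f\rows(M) \in \feas(\phi^{\bfb})$. Applying Farkas' lemma (Theorem~\ref{thm:Farkas}) rewrites solvability as an implication quantified over nonnegative dual vectors $\tuple{x} \in \qqnonneg^{N \cup \mnn M^{(N)}}$, which I parameterize as $(\rho^{\ii}\OmegaI, \rho^{\oo}\OmegaO)$ with $\rho^{\ii}, \rho^{\oo} \in \qqnonneg$, $\OmegaI \in \Delta N$, $\OmegaO \in \Delta \mnn M^{(N)}$. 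The dual inequalities fall into three families: one per $(\phi,M)$ giving $\rho^{\oo}\OmegaO[\phi,M] \geq \rho^{\ii}\OmegaI[\phi,M]$, plus the two coming from $\delta^+$ and $\delta^-$ which together pin down the equality $\rho^{\ii} = \IO\rho^{\oo}$. The dual conclusion is $\rho^{\oo}\OmegaO[\beta] \geq \rho^{\ii}\OmegaI[\alpha]$.

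Finally I would case-split on $\rho^{\oo}$. When $\rho^{\oo} > 0$, dividing through and using $\rho^{\ii}/\rho^{\oo} = \IO$ identifies the premise as exactly the statement that $\Omega$ is an $N$-ary polymorphism of $\abt$ and the conclusion as $\OmegaO[\beta] \geq \IO\OmegaI[\alpha]$, which is precisely condition (i). When $\rho^{\oo} = 0$, the equality $\rho^{\ii} = \IO\rho^{\oo}$ together with $\IO > 0$ forces $\rho^{\ii} = 0$, and the Farkas implication becomes trivially $0 \geq 0$. I expect the main obstacle to be purely notational---keeping signs consistent across the decomposition $\delta = \delta^+ - \delta^-$ and the transposition step of Farkas' lemma---rather than conceptual; compared with Proposition~\ref{prop:canonical-payoff-baby}, the strict positivity of $\IO$ is exactly what removes the need for the delicate separate analysis of the degenerate case there.
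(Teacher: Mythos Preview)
Your proposal is correct and follows essentially the same route as the paper: rewrite (ii) as a linear system in the nonnegative unknowns $w_{\phi,M},\delta^+,\delta^-$, apply Farkas' lemma, parameterize the dual vector as $(\rho^{\ii}\OmegaI,\rho^{\oo}\OmegaO)$, and observe that the $\delta^{\pm}$-rows of $F^T\tuple{x}\geq 0$ force $\rho^{\ii}=\IO\rho^{\oo}$, which collapses the Farkas alternative to condition (i). One small quibble: in the degenerate case $\rho^{\oo}=0$ it is the equality $\rho^{\ii}=\IO\rho^{\oo}$ coming from the free shift that immediately gives $\rho^{\ii}=0$, so the strict positivity of $\IO$ is not actually what does the work there (though it does no harm to mention it).
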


\begin{proof}
   As in the previous canonical proofs we start by observing that  
       item \labelcref{item:2:canonical-kazda} is equivalent to the following condition.
   \begin{enumerate}[label=\textnormal{(\roman*)}] \setcounter{enumi}{2}
    \item\label{item:3:canonical-kazda} There exist $w_{\phi,M} \in \qqnonneg$ and $\delta \in \qq$ such that the payoff formula
    \begin{equation*} 
         \Phi = \sum_{(\phi,M) \in \Mat{\bfa}{N}} w_{\phi,M} \ \phi(\rows(M))
    \end{equation*}
    satisfies all the inequalities.  
   \end{enumerate}
   This condition is equivalent to the following system of linear inequalities with nonnegative rational unknowns $w_{\phi,M}, \delta^+, \delta^-$ and rational coefficients.
      \begin{align*}
          \forall n \in N \quad \quad
          \sum_{(\phi,M)}
          -\phi^{\bfa}(\col{n}{M}) w_{\phi,M} \ + \ \delta^+ - \delta^-
           &\leq - \alpha(n) 
          \\
          \forall f \in \mnn M^{(N)}\quad \quad
          \sum_{(\phi,M)}
           \phi^{\bfb}(f \rows(M))  w_{\phi,M} \ - \ \IO \delta^+ + \IO\delta^- 
           &\leq \beta(f) .
\end{align*}
This system $F \tuple{y} \leq \tuple{q}$ is by \cref{thm:Farkas} equivalent to
\begin{align*} 
\forall \tuple{x} \in (\qqnonneg)^{N \cup \mnn M^{(N)}} (F^T\tuple{x} \geq 0 \implies  \tuple{q}^T\tuple{x} \geq 0).
\end{align*}
Writing $\tuple{x}$ as 
 $(\IOI \OmegaI,\IOO\OmegaO)$, where $\IOI, \IOO \in \qqnonneg$, $\OmegaI \in \Delta N$, and $\OmegaO \in \Delta \mnn M^{(N)}$, we obtain an equivalent condition
\begin{align*} 
\forall \IOI, &\IOO \in \qqnonneg \ 
\forall \Omega \textnormal{ $N$-ary weighting of $\mnn M$} \\ 
   &\Big(\big(\forall (\phi,M) \in \Mat{\bfa}{N} \quad \ \IOO \OmegaO[\phi,M] \geq \IOI \OmegaI[\phi,M] \big)\\
  & \ \wedge \ (\IOI=\IO \IOO) \Big) \implies
    \IOO \OmegaO[\beta] \geq \IOI \OmegaI[\alpha], 
\end{align*}
which is equivalent to \labelcref{item:1:canonical-kazda}.
\end{proof}

Constant factor versions of \cref{prop:valued_combining_solutions} and \cref{prop:valued-char-of-templates} are as follows.

\begin{proposition}[Polymorphisms and payoffs] \label{prop:valued_combining_solutions_kazda}
Let $\ab$ be a pair of valued $\Sigma$-structures,  $N$ a finite set, $\IO \in \qqpos$, and $\Omega$ an $N$-ary polymorphism of $\abt$. 

Then for every finite set $X$, every payoff $\Sigma$-formula $\Phi$ over $X$, and every $M \in A^{X \times N}$ such that every column is in $\feas(\Phi^{\bfa})$ we have that
          $$
          \Ex{f \sim \OmegaO} \ \Phi^{\bfb}(f \rows(M))  \geq \IO \Ex{n \sim \OmegaI} \ \Phi^{\bfa}(\col{n}{M}).           
          $$
\end{proposition}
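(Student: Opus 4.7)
The plan is to follow the same strategy as in the proof of \cref{prop:valued_combining_solutions}, but using the simpler defining inequality for polymorphisms in the constant factor setting, namely $\OmegaO[\phi,M] \geq \IO\,\OmegaI[\phi,M]$ for every relation-matrix pair $(\phi,M)$.

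First I would write $\Phi = \sum_{i \in I} w_i\,\phi_i(\tuple{x}_i)$ and, for each $i \in I$, build the auxiliary matrix $M'_i \in A^{\ar(\phi_i) \times N}$ defined by $M'_i(z,n) = M(\tuple{x}_i(z),n)$. The key elementary identities are $\col{n}{M'_i} = \col{n}{M}\,\tuple{x}_i$ and $\rows(M'_i) = \rows(M)\,\tuple{x}_i$. Since every column $\col{n}{M}$ lies in $\feas(\Phi^{\bfa})$, we have $\phi_i^{\bfa}(\col{n}{M}\,\tuple{x}_i) \in \qq$, so $\col{n}{M'_i} \in \feas(\phi_i^{\bfa})$; hence $(\phi_i, M'_i) \in \Mat{\bfa}{N}$ is a bona fide relation-matrix pair to which the polymorphism inequality applies.

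Next, by unfolding the definitions and using linearity of expectation (which is legitimate since all the summands are finite rationals), one obtains
\[
\Ex{n \sim \OmegaI} \Phi^{\bfa}(\col{n}{M}) \;=\; \sum_{i \in I} w_i\,\OmegaI[\phi_i, M'_i]
\]
and symmetrically
\[
\Ex{f \sim \OmegaO} \Phi^{\bfb}(f\,\rows(M)) \;=\; \sum_{i \in I} w_i\,\OmegaO[\phi_i, M'_i].
\]

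Finally, apply the defining inequality of an $N$-ary polymorphism of $\abt$ to each pair $(\phi_i, M'_i)$, giving $\OmegaO[\phi_i, M'_i] \geq \IO\,\OmegaI[\phi_i, M'_i]$. Multiplying by $w_i \geq 0$ and summing over $i \in I$ yields the desired inequality. There is no real obstacle here; the only minor care needed is to verify that all expectations involved are finite (which is exactly what the assumption that each column of $M$ is in $\feas(\Phi^{\bfa})$ ensures), so that swapping sums and expectations is valid. An alternative, more conceptual, one-line proof would apply item \labelcref{alt-pol:iii} of \cref{prop:alt-pol} (suitably adapted to the constant factor setting) to the probability distribution on $\Mat{\bfa}{N}$ that selects $(\phi_i, M'_i)$ with probability $w_i / \weight(\Phi)$, handling the corner case $\weight(\Phi) = 0$ separately.
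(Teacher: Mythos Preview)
Your proposal is correct and essentially the same as the paper's proof: both construct the auxiliary matrices $M'_i$ (as in \cref{prop:valued_combining_solutions}) and reduce to the defining inequality $\OmegaO[\phi_i,M'_i] \geq \IO\,\OmegaI[\phi_i,M'_i]$. The only cosmetic difference is that the paper packages the weighted sum as an expectation over a probability distribution $\mu$ on relation-matrix pairs (the variant you mention at the end), whereas your main write-up keeps the explicit sum $\sum_i w_i(\cdots)$; these are the same computation.
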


\begin{proof}
   We define a probability distribution $\mu$ on relation-matrix pairs as in the proof of \cref{prop:valued_combining_solutions} and the same calculations give us
   \begin{align*}
   \Ex{f \in \OmegaO} \Phi^{\bfb}(f\rows(M)) &= \Ex{(\phi,M') \sim \mu} \OmegaO[\phi,M'] \\ &\geq \IO \Ex{(\phi,M') \sim \mu} \OmegaI[\phi,M'] \\ &= \IO \Ex{n \sim \OmegaI} \ \Phi^{\bfa}(\col{n}{M}).\qedhere
   \end{align*} 
\end{proof}

\begin{proposition}[Characterization of templates]
    Let $\ab$ be a pair of $\Sigma$-structures and  $\IO \in \qq$. The following are equivalent.
    \begin{enumerate}[label=\textnormal{(\roman*)}]
        \item \label{item:1:ValuedCharTemplKazda} $\abt$ is a valued promise template.
        \item \label{item:2:ValuedCharTemplKazda} For each payoff formula $\Phi$ over the set of variables $A$, 
        $\max \Phi^{\bfb} \geq \IO \Phi^{\bfa}(\id_A)$.
        \item \label{item:3:ValuedCharTemplKazda} There exists a unary polymorphism of $\abt$.
    \end{enumerate}
\end{proposition}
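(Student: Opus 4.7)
My plan is to establish the cycle \labelcref{item:1:ValuedCharTemplKazda} $\Rightarrow$ \labelcref{item:2:ValuedCharTemplKazda} $\Rightarrow$ \labelcref{item:3:ValuedCharTemplKazda} $\Rightarrow$ \labelcref{item:1:ValuedCharTemplKazda}, in the same spirit as the proof of~\cref{prop:valued-char-of-templates}, but using the constant-factor versions of the auxiliary results (\cref{prop:canonical-payoff-kazda} and \cref{prop:valued_combining_solutions_kazda}) established earlier in this appendix. The implication \labelcref{item:1:ValuedCharTemplKazda} $\Rightarrow$ \labelcref{item:2:ValuedCharTemplKazda} is immediate: any payoff formula $\Phi$ over the set of variables $A$ satisfies $\max \Phi^{\bfb} \geq \IO \max \Phi^{\bfa} \geq \IO \Phi^{\bfa}(\id_A)$ by the template property.

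For \labelcref{item:3:ValuedCharTemplKazda} $\Rightarrow$ \labelcref{item:1:ValuedCharTemplKazda}, fix a unary polymorphism $\Omega$ of $\abt$ (so $N = \{n\}$ is a singleton) and an arbitrary payoff $\Sigma$-formula $\Phi$ over a finite $\tau$-sorted set $X$. If $\feas(\Phi^{\bfa}) = \emptyset$ then $\max \Phi^{\bfa} = -\infty$ and there is nothing to prove; otherwise select $h^\ast \in \feas(\Phi^{\bfa})$ with $\Phi^{\bfa}(h^\ast) = \max \Phi^{\bfa}$ and let $M \in A^{X \times N}$ be the matrix whose unique column is $h^\ast$. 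Then every column of $M$ lies in $\feas(\Phi^{\bfa})$, so \cref{prop:valued_combining_solutions_kazda} yields
$$\Ex{f \sim \OmegaO} \Phi^{\bfb}(f \circ h^\ast) \geq \IO\, \Phi^{\bfa}(h^\ast) = \IO \max \Phi^{\bfa},$$
and picking an $f$ in the support achieving at least this expectation gives $\max \Phi^{\bfb} \geq \IO \max \Phi^{\bfa}$, as required.

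For the remaining implication \labelcref{item:2:ValuedCharTemplKazda} $\Rightarrow$ \labelcref{item:3:ValuedCharTemplKazda}, I proceed by contraposition using \cref{prop:canonical-payoff-kazda} for the singleton $N = \{n\}$. If no unary polymorphism of $\abt$ exists, then $\pol^{(N)} \abt = \emptyset$, so item \labelcref{item:1:canonical-kazda} of \cref{prop:canonical-payoff-kazda} holds vacuously for every choice of $\alpha, \beta$. Choosing $\alpha(n) = 0$ and $\beta \equiv -1$, the proposition produces a payoff formula $\Phi$ over the variable set $A^N$ (identified with $A$ via $\proj{N}{n} = \id_A$) and a shift $\delta \in \qq$ such that $\Phi^{\bfa}(\id_A) \geq \delta$, $\feas(\Phi^{\bfb}) = \mnn M^{(N)}$, and $\Phi^{\bfb}(f) \leq -1 + \IO \delta$ for every $f \in \mnn M^{(N)}$. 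Whether $\mnn M^{(N)}$ is empty (in which case $\max \Phi^{\bfb} = -\infty$) or not, we obtain $\max \Phi^{\bfb} \leq -1 + \IO \delta < \IO \delta \leq \IO\, \Phi^{\bfa}(\id_A)$, using the hypothesis $\IO \in \qqpos$ in the strict inequality. This violates \labelcref{item:2:ValuedCharTemplKazda}, completing the cycle.

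The only real subtlety, and the step I expect to require the most care to spell out, is the application of \cref{prop:canonical-payoff-kazda} in the last implication: one must pick $\alpha, \beta$ with a strict negative gap so that the conclusion witnesses a failure of \labelcref{item:2:ValuedCharTemplKazda} uniformly, irrespective of whether $\mnn M^{(N)}$ is empty.
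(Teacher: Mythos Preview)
Your proof is correct and follows essentially the same route as the paper: \labelcref{item:1:ValuedCharTemplKazda} $\Rightarrow$ \labelcref{item:2:ValuedCharTemplKazda} is trivial, \labelcref{item:3:ValuedCharTemplKazda} $\Rightarrow$ \labelcref{item:1:ValuedCharTemplKazda} uses \cref{prop:valued_combining_solutions_kazda} with a singleton $N$, and \labelcref{item:2:ValuedCharTemplKazda} $\Rightarrow$ \labelcref{item:3:ValuedCharTemplKazda} applies \cref{prop:canonical-payoff-kazda} with $N=\{n\}$, $\alpha(n)=0$, and $\beta<0$. One small slip: the hypothesis $\IO\in\qqpos$ is needed for the inequality $\IO\delta \leq \IO\,\Phi^{\bfa}(\id_A)$ (which uses $\IO\geq 0$ and $\delta\leq\Phi^{\bfa}(\id_A)$), not for the strict inequality $-1+\IO\delta<\IO\delta$, which holds unconditionally.
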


\begin{proof}
  The implication from \labelcref{item:3:ValuedCharTemplKazda} to
  \labelcref{item:1:ValuedCharTemplKazda} follows from~\cref{prop:valued_combining_solutions_kazda} for a singleton set $N$ and
  the implication from \labelcref{item:1:ValuedCharTemplKazda} to \labelcref{item:2:ValuedCharTemplKazda} is trivial.
  The implication from \labelcref{item:2:ValuedCharTemplKazda} to \labelcref{item:3:ValuedCharTemplKazda} is obtained by applying \cref{prop:canonical-payoff-kazda} with $N = \{n\}$, $\alpha(n)=0$, and $\beta(f) < 0$ for each $f$.
\end{proof}

We now get to valued minions and homomorphisms. Even though it is clearer in this case what closure properties we should require on the sets of polymorphisms~\cite{Kazda21}, 
we do not need these for our results and take the most liberal definition anyway.
\begin{definition}[Valued minion]
Let $\mnn M$ be a minion. A \emph{valued minion over $\mnn M$} is a collection $\vmnn M = (\vmnn M^{(N)})_{N \in \FinSet}$ where each $\vmnn M^{(N)}$ is a set of $N$-ary weightings of $\mnn M$. 
\end{definition}

\begin{definition}[Valued minion homomorphisms] \label{def:valued_minion_homo-kazda}
    Let $\vmnn M$, $\vmnn M'$ be valued minions over  minions $\mnn M$ and $\mnn M'$, respectively. A \emph{valued minion homomorphism} $\vmnn M \to \vmnn M'$ is a probability distribution $\Xi$ on the set of minion homomorphisms $\mnn M \to \mnn M'$ such that for every finite set $N$ and every $(\OmegaI,\OmegaO) \in \vmnn M^{(N)}$, the weighting    $\Xi(\Omega) := (\OmegaI, \Xi(\OmegaO))$ is in $\vmnn M'^{(N)}$.
\end{definition}

\begin{theorem}[Reductions via valued minion homomorphism] \label{thm:main-kazda}
    Let $\abt$, and $\abtp$ be valued promise templates. If there is a valued minion homomorphism from $\pol\abt$ to $\pol\abtp$, then $\PCSP\abtp \leq \PCSP\abt$.     
\end{theorem}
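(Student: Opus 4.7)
The plan is to mirror the three-step strategy used for Theorem~\ref{thm:valued-reduction-via-homos}, but in the simpler constant factor setting where a polymorphism is a single weighted pair rather than a family. First I would introduce a constant factor valued minor condition problem $\vmc_\IO(\mnn M, \vmnn M, k)$, analogous to Definition~\ref{def:VMC}. An instance has disjoint variable sets $U, V$ with domains $D_x$ of size at most $k$, minor conditions $\pi(u) = v$, payoff functions $\alpha_u \colon D_u \to \qq$ and $\beta_u \colon \mnn M^{(D_u)} \to \qq$ for each $u \in U$, and a completeness parameter $c \in \qq$ passed as part of the input. The side condition on $(\alpha_u, \beta_u)$ is modelled directly on condition $(\star\star)$ from the proof of Proposition~\ref{prop:PCSPleqVMC}: for every $u \in U$ and every $\Omega \in \vmnn M^{(D_u)}$, $\OmegaO[\beta_u] \geq \IO\, \OmegaI[\alpha_u]$. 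The $\yes$ case asks for $h$ respecting the minor conditions with $\sum_u \alpha_u(h(u)) \geq c$; the $\no$ case forbids $h$ with values in $\mnn M^{(D_x)}$ preserving minor conditions and satisfying $\sum_u \beta_u(h(u)) \geq \IO c$.

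The theorem then follows from three reductions fitting into the same commutative diagram as in Section~\ref{sec:valued-homos}. For \emph{PCSP to VMC}, given $\Phi = \sum_i w_i \phi_i(\mathbf{x}_i)$ with completeness $c$, I would build a $\vmc_\IO$ instance with $U = I$, $V = X$, $D_i = \feas(\phi_i^{\bfa})$, $D_x = A_{\sort(x)}$, minor conditions $\pi_{i,z}(i) = \mathbf{x}_i(z)$, and the payoff functions $\alpha_i(\mathbf{a}) = w_i \phi_i^{\bfa}(\mathbf{a})$, $\beta_i(f) = w_i \phi_i^{\bfb}(f\,\rows(\GM{\feas(\phi_i^{\bfa})}))$, passing $c$ unchanged; the side condition and the completeness/soundness argument reuse verbatim the matrix calculations in Proposition~\ref{prop:PCSPleqVMC}, without the $-c$ and $-s$ shifts inside $\alpha_i,\beta_i$ which are no longer needed. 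For \emph{VMC to PCSP}, I would apply the constant factor canonical payoff formula (Proposition~\ref{prop:canonical-payoff-kazda}) separately to each $u \in U$ to obtain a payoff formula $\Phi_u$ over $A^{D_u}$ with shift $\delta_u$, take the empty formula for each $v \in V$, then glue and identify variables along the minor conditions exactly as in Propositions~\ref{prop:MCleqPCSP} and~\ref{prop:VMCleqPCSP}, and pass the resulting $\Psi$ to the $\pcsp$ with new completeness $c + \sum_u \delta_u$. For the reduction \emph{between VMCs}, given the valued minion homomorphism $\Xi$, I would leave the instance unchanged except for replacing each $\beta'_u$ by $\beta_u = \ex{\xi \sim \Xi}\, \beta'_u \circ \xi^{(D_u)}$, as in Proposition~\ref{prop:BetweenVMCs}; the $(\star\star)$-style side condition is preserved because $\Xi$ sends polymorphisms to polymorphisms, and a putative refutation $h$ lifts to $\xi \circ h$ for some $\xi$ in the support of $\Xi$.

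The main technical point, as in the main text, is the VMC-to-PCSP step, but here the constant factor setting is actually \emph{cleaner}: because $\IO > 0$ is fixed and part of the template, the Farkas argument in Proposition~\ref{prop:canonical-payoff-kazda} never degenerates into the $\kappa = 0$ branch that forces the ``$\no$ instance exists'' hypothesis in Proposition~\ref{prop:VMCleqPCSP}, so no auxiliary assumption on $\abt$ is needed. The only design choice that demands care is the precise form of the side condition on $(\alpha_u, \beta_u)$: it must be strong enough for Proposition~\ref{prop:canonical-payoff-kazda} to yield the formulas $\Phi_u$ in the VMC-to-PCSP direction, and weak enough to survive post-composition with minion homomorphisms. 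The pointwise $(\star\star)$-style condition above satisfies both, and in addition bypasses the need for plurimorphisms entirely, which is why single polymorphisms suffice in this setting.
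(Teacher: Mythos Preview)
Your proposal is correct and follows essentially the same approach as the paper: the paper likewise introduces a constant factor $\vmc$ with the pointwise side condition $\OmegaO[\beta_u] \geq \IO\,\OmegaI[\alpha_u]$ (labelled $(\ast)$ there), and proves the theorem via the same three reductions with the same constructions, including the un-shifted $\alpha_i,\beta_i$ and the adjusted completeness $c+\sum_u \delta_u$. One cosmetic difference: for $v \in V$ the paper uses the zero-weight canonical formula rather than the empty formula, so that $\feas(\Phi_v^{\bfb}) = \mnn M^{(D_v)}$ is enforced directly; your version still works but requires the minor conditions (or nontriviality of $\mnn M$) to guarantee $h(v) \in \mnn M^{(D_v)}$ in the soundness step.
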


The corresponding VMC problem for constant factor approximation differs from our previous setting in that the completeness is a part of the input (so it also does not make much sense to shift the payoffs) and the condition on $\alpha$ and $\beta$ simplifies.

\begin{definition}[Valued Minor Condition Problem]
Given a  minion $\mnn M$, a 
valued minion $\vmnn M$ over $\mnn M$, and
  an integer $k$, the Valued Minor Condition Problem for $\mnn M$, $\vmnn M$,
  and $k$, denoted by $\vmc(\mnn M,\vmnn M,k)$ is the following problem

\begin{enumerate}
    \item[\textsf{Input}]  \begin{enumerate}[label=\textnormal{\arabic*.}]
     \item disjoint sets $U$ and $V$ (the sets of \emph{variables}),
     \item a set $D_x$ with $|D_x| \leq k$ for every $x \in U \cup V$ (the \emph{domain} of $x$),
     \item a set of formal expressions of the form $\pi(u) = v$, where $u \in U$, $v \in V$, and $\pi: D_u \to D_v$ (the \emph{minor conditions}), 
     \item for each $u \in U$, a pair of functions $\alpha_u:D_u \to \qq$, $\beta_u:\mnn M^{(D_u)} \to \qq$ (the \emph{input and output payoff functions}) which satisfy the following condition. 
    \begin{equation*} \tag{$\ast$}\label{star-condition-kazda}  
         \forall \Omega \in \vmnn M^{(D_u)} \quad 
         \OmegaO[\beta_u] \geq \IO \OmegaI[\alpha_u].
    \end{equation*}
     \item $c \in \qq$ (the \emph{completeness})
    \end{enumerate}
    \item[\textsf{Output}]    \begin{enumerate}[label=\textnormal{(\roman*)}]
        \item[\yes]if there exists a function $h$ from $U \cup V$ with $h(x) \in D_x$ (for each $x \in U \cup V$) such that, for each minor condition $\pi(u) = v$, we have $\pi(h(u)) = h(v)$, and $\sum_{u \in U} \alpha_u(h(u)) \geq c$.
        
        \item[\no]if there does not exist a function $h$ from $U \cup V$ with $h(x) \in \mnn M^{(D_x)}$  such that, for each minor condition $\pi(u) = v$, we have $\mnn M^{(\pi)}(h(u)) = h(v)$, and $\sum_{u \in U} \beta_u(h(u)) \geq c\IO$.
    \end{enumerate}
\end{enumerate}
\end{definition}

Finally, we give the three reductions needed to prove \cref{thm:main-kazda}.

\begin{proposition}[Between VMCs]
  Let $\vmnn M$, $\vmnn M'$ be valued minions over minions $\mnn M$ and $\mnn M'$, respectively, such that there exists a valued minion homomorphism $\vmnn M \to \vmnn M'$. Then $\vmc (\mnn M', \vmnn M', k) \leq \vmc (\mnn M, \vmnn M, k)$ for any positive integer $k$. 
\end{proposition}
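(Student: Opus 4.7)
The plan is to use the same reduction strategy as in the analogous \cref{prop:BetweenVMCs} for the main framework. Let $\Xi$ be a valued minion homomorphism $\vmnn M \to \vmnn M'$. Given an instance of $\vmc(\mnn M', \vmnn M', k)$, specified by $U, V, (D_x)$, minor conditions, payoffs $(\alpha_u, \beta'_u)_{u\in U}$, and completeness $c$, produce the instance of $\vmc(\mnn M, \vmnn M, k)$ with the same $U, V$, domains $D_x$, minor conditions, input payoffs $\alpha_u$, and completeness $c$, and with new output payoffs
\[
\beta_u \;=\; \Ex{\xi \sim \Xi} \,\beta'_u \circ \xi^{(D_u)} \qquad (u \in U).
\]
This transformation is clearly polynomial-time in the instance size (for a fixed $\Xi$ with finite support, or more generally via the natural description of $\Xi$).

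First I would check that the constructed object is a valid $\vmc(\mnn M, \vmnn M, k)$ instance, i.e.\ that it satisfies~\labelcref{star-condition-kazda}. Fix $u \in U$ and $\Omega \in \vmnn M^{(D_u)}$. Since $\Xi$ is a valued minion homomorphism, $\Xi(\Omega) = (\OmegaI, \Xi(\OmegaO)) \in \vmnn M'^{(D_u)}$, and hence by~\labelcref{star-condition-kazda} applied to the original instance, $\Xi(\OmegaO)[\beta'_u] \geq \IO \,\OmegaI[\alpha_u]$. A direct calculation (as in the proof of \cref{prop:BetweenVMCs}) yields
\[
\Xi(\OmegaO)[\beta'_u] \;=\; \Ex{f \sim \OmegaO} \Ex{\xi \sim \Xi} \beta'_u(\xi^{(D_u)}(f)) \;=\; \OmegaO[\beta_u],
\]
so $\OmegaO[\beta_u] \geq \IO \,\OmegaI[\alpha_u]$, as required.

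Completeness is immediate: the $\yes$-side of the instance is unchanged (same variables, domains, minor conditions, input payoffs, and completeness $c$), so any $h$ witnessing a $\yes$ in the original instance witnesses it in the new one. For soundness, suppose the new instance is not a $\no$ instance, witnessed by $h$ with $h(x) \in \mnn M^{(D_x)}$, $\mnn M^{(\pi)}(h(u)) = h(v)$ for each minor condition, and $\sum_{u \in U} \beta_u(h(u)) \geq c \IO$. By definition of $\beta_u$, this gives $\Ex{\xi \sim \Xi} \sum_{u \in U} \beta'_u(\xi^{(D_u)}(h(u))) \geq c\IO$, so some specific $\xi$ in the support of $\Xi$ satisfies $\sum_{u \in U} \beta'_u(\xi^{(D_u)}(h(u))) \geq c \IO$. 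Defining $h'(x) = \xi^{(D_x)}(h(x))$, the fact that $\xi$ is a minion homomorphism gives $\mnn{M}'^{(\pi)}(h'(u)) = \xi^{(D_v)}(\mnn M^{(\pi)}(h(u))) = h'(v)$ for each minor condition, so $h'$ witnesses that the original instance is not a $\no$ instance.

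The only nontrivial step is the verification that the $(\ast)$-condition survives the transformation, but this reduces to the same identity $\Xi(\OmegaO)[\beta'_u] = \OmegaO[\beta_u]$ used in \cref{prop:BetweenVMCs}; the simpler per-variable form of~\labelcref{star-condition-kazda} (as opposed to the summed form in the main setting) actually makes the argument slightly easier here.
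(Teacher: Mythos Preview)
Your proposal is correct and follows exactly the approach the paper takes: the paper's own proof simply states that the resulting instance is unchanged except for redefining $\beta$ as in \cref{prop:BetweenVMCs} and that the argument is almost identical, omitting details. You have filled in precisely those details—the verification of~\labelcref{star-condition-kazda} via the identity $\Xi(\OmegaO)[\beta'_u]=\OmegaO[\beta_u]$, trivial completeness, and soundness by picking a $\xi$ from the support of $\Xi$—and your observation that the per-variable form of~\labelcref{star-condition-kazda} simplifies the check is apt.
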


\begin{proof}
    The resulting instance is unchanged except for changing $\beta$ in the same way as in the proof of \cref{prop:BetweenVMCs}. The proof is also almost the same and we omit the details. 
\end{proof}

\begin{proposition}[From PCSP to VMC]\label{prop:PCSPleqVMC-kazda}
    Let $\abt$ be a valued promise template, $\mnn M = \polfeas \ab$, and $\vmnn M = \pol \abt$. If $k$ is a sufficiently large integer, then
    $\PCSP \abt \leq \vmc( \mnn M, \vmnn M, k)$.
\end{proposition}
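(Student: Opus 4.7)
The plan is to follow the strategy of the analogous reduction \cref{prop:PCSPleqVMC} in the $(c,s)$ setting, with simplifications because the shift structure is absorbed into the instance's completeness parameter. Given an input $(X, \Phi, c)$ to $\pcsp \abt$ with $\Phi = \sum_{i \in I} w_i \phi_i(\tuple{x}_i)$, I would construct a VMC instance by copying verbatim the crisp reduction in \cref{prop:PCSPleqMC} applied to $(\feas(\bfa),\feas(\bfb))$ and $\bigwedge_{i \in I} \phi_i(\tuple{x}_i)$: set $U = I$, $V = X$, $D_i = \feas(\phi_i^\bfa)$, $D_x = A_{\sort(x)}$, and for each $i \in I$, $z \in \ar(\phi_i)$ introduce the minor condition $\pi_{i,z}(i) = \tuple{x}_i(z)$, where $\pi_{i,z}$ is the restriction of $\proj{\ar(\phi_i)}{z}$ to $\feas(\phi_i^\bfa) \to A_{\sort(z)}$. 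The input and output payoff functions, together with the completeness of the VMC instance, are defined without any shifting:
\begin{align*}
\alpha_i(\tuple{a}) & = w_i\, \phi_i^{\bfa}(\tuple{a}), \\
\beta_i(f) & = w_i\, \phi_i^{\bfb}(f \,\rows(\GM{\feas(\phi_i^{\bfa})})), \\
c_{\textnormal{VMC}} & = c.
\end{align*}
For this to be a valid instance we require $k \geq |\feas(\phi^\bfa)|$ and $k \geq |A_{\sort(x)}|$ for all $\phi \in \sig$ and $x \in X$.

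Next I would verify the instance-validity condition \labelcref{star-condition-kazda}. Fix $i \in I$ and $\Omega \in \vmnn M^{(D_i)}$, and let $M = \GM{\feas(\phi_i^{\bfa})}$, so $(\phi_i, M) \in \Mat{\bfa}{D_i}$. Since $\Omega$ is a polymorphism of $\abt$, we have $\OmegaO[\phi_i,M] \geq \IO\, \OmegaI[\phi_i,M]$. Using that $\col{\tuple{a}}{M} = \tuple{a}$ by \cref{lem:canonicalMatrix}, a routine calculation identical to the one in the proof of \cref{prop:PCSPleqVMC} (but without shifts) gives $\OmegaO[\beta_i] = w_i(\OmegaO[\phi_i,M])$ and $\OmegaI[\alpha_i] = w_i(\OmegaI[\phi_i,M])$, whence $\OmegaO[\beta_i] \geq \IO\,\OmegaI[\alpha_i]$.

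For completeness of the reduction, if $h$ satisfies $\Phi^{\bfa}(h) \geq c$ then the function $h'$ defined by $h'(x) = h(x)$ for $x \in V$ and $h'(i) = h\tuple{x}_i$ for $i \in U$ takes values in the correct domains (the condition $h'(i) \in D_i$ uses that $\Phi^\bfa(h)$ is finite, so $\phi_i^{\bfa}(h\tuple{x}_i) \neq -\infty$), satisfies the minor conditions as in \cref{prop:PCSPleqMC}, and gives $\sum_{i \in I}\alpha_i(h'(i)) = \sum_{i \in I} w_i \phi_i^{\bfa}(h\tuple{x}_i) = \Phi^{\bfa}(h) \geq c$. For soundness, if $h'$ is a witness that the VMC instance is not a $\no$ instance, define $h(x) = (h'(x))(\incl_x)$; the same coordinate-by-coordinate argument as in \cref{prop:PCSPleqVMC} shows $h\tuple{x}_i = h'(i)\,\rows(\GM{\feas(\phi_i^{\bfa})})$ for each $i$, hence $\Phi^{\bfb}(h) = \sum_{i\in I} w_i\phi_i^{\bfb}(h\tuple{x}_i) = \sum_{i \in I} \beta_i(h'(i)) \geq c\IO$, so $h$ certifies that the original $\pcsp$ instance is not a $\no$ instance.

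There is no serious obstacle here: both the construction and the calculations are nearly verbatim versions of those in \cref{prop:PCSPleqMC,prop:PCSPleqVMC}, with the important simplification that \labelcref{star-condition-kazda} is derived directly from the defining inequality of a polymorphism in the constant-factor setting, and no shifts or $(c,s)$-translations are needed.
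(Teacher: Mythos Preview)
Your proposal is correct and follows essentially the same approach as the paper's proof: the same construction (copying the crisp reduction applied to the feasibility structures, unshifted payoffs $\alpha_i = w_i\phi_i^{\bfa}$ and $\beta_i = w_i\phi_i^{\bfb}(f\,\rows(\GM{\feas(\phi_i^{\bfa})}))$, and unchanged completeness parameter), the same verification of \labelcref{star-condition-kazda} via the canonical matrix, and the same completeness and soundness arguments. The paper's own proof is in fact even more terse, simply pointing back to \cref{prop:PCSPleqVMC} and noting the absence of shifts.
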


\begin{proof}
From a payoff formula $\Phi = \sum_{i \in I} w_i \phi_i(\tuple{x}_i)$ over $X$ 
we create an instance of $\vmc(\mnn M, \vmnn M, k)$ analogously to the proof of \cref{prop:PCSPleqVMC} but without shifting, that is, as follows. 
  \begin{enumerate}[label=\textnormal{\arabic*.}]
      \item $U = I$, $V = X$.
      \item $D_i = \feas(\phi_i^{\bfa})$, $D_x = A_{\sort(x)}$ for each $i \in I$, $x \in X$.
      \item For each $i \in I$  and $z \in \ar(\phi_i)$, we introduce the constraint $\pi_{i,z}(i) = \tuple{x}_i(z)$, where $\pi_{i,z}$ is the domain-codomain  restriction of $\proj{\ar(\phi_i)}{z}$ to $\feas(\phi_i^{\bfa})$ and $A_{\sort(z)}$.
\item $\alpha_i(\tuple{a}) = w_i\phi_i^{\bfa}(\tuple{a})$,  
$\beta_i(f) = w_i (\phi_i^{\bfb}(f \ \rows(\GM{\feas(\phi_i^{\bfa})}))$ 
for each $i\in I$, $\tuple{a} \in \feas(\phi_i^{\bfa})$, and $f \in \mnn M^{(\feas(\phi_i^{\bfa}))}$. 
\item The completeness parameter is unchanged.
  \end{enumerate}    

Condition \labelcref{star-condition-kazda} is verified in the same way as \labelcref{stronger-star-condition} in the proof of \cref{prop:PCSPleqVMC} (with fixed $\IO$ and without shifting by $c$ and $s$), and the completeness and soundness are almost the same as well.
\end{proof}

\begin{proposition}[From VMC to PCSP] \label{prop:VMCleqPCSP-kazda}
    Let $\abt$ be a valued promise template, $\mnn M = \polfeas \ab$, and $\vmnn M = \pol \abt$.  For any positive integer $k$, 
    $\vmc( \mnn M, \vmnn M, k) \leq \PCSP \abt$.
\end{proposition}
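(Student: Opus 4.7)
The plan is to adapt the reduction from the proof of \cref{prop:VMCleqPCSP} to the constant-factor setting, using the constant-factor canonical payoff formula (\cref{prop:canonical-payoff-kazda}) in place of the improved canonical payoff formulas (\cref{prop:canonical_payoff_formulas}). The advantage in the constant-factor framework is that the completeness parameter is part of the input, so we can absorb the shifts $\delta_u$ coming from the canonical formulas directly into a modified completeness parameter rather than needing the more intricate scaling-and-shifting bookkeeping required in the $(c,s)$-setting.

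Concretely, given an instance $(U,V,(D_x),\{\pi(u)=v\},(\alpha_u,\beta_u),c)$ of $\vmc(\mnn M,\vmnn M,k)$, I would first invoke \cref{prop:canonical-payoff-kazda} for each $u \in U$ with $N=D_u$, $\alpha=\alpha_u$, $\beta=\beta_u$. The hypothesis \labelcref{item:1:canonical-kazda} of that proposition is exactly condition \labelcref{star-condition-kazda} in the VMC instance, so it yields a payoff formula $\Phi_u$ over the variable set $A^{D_u}$ and a shift $\delta_u \in \qq$ satisfying the three inequalities; in particular $\feas(\Phi_u^{\bfb}) = \mnn M^{(D_u)}$. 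For each $v \in V$, I take the trivial zero-weight formula $\Phi_v$ over $A^{D_v}$ as in the proof of \cref{prop:VMCleqPCSP}. Making all variable sets disjoint and summing produces a payoff formula $\Phi$ over $Y = \bigcup_x A^{D_x}$. Finally, identifying, for each minor condition $\pi(u)=v$ and each $\tuple{a} \in A^{D_v}$, the variables $(\tuple{a}\pi,u)$ and $(\tuple{a},v)$ produces the instance $\Psi$ of $\pcsp\abt$; the completeness parameter is set to $c' := c + \sum_{u \in U}\delta_u$. As before, assignments to $\Psi$ valued in $A$ (resp.\ $B$) correspond exactly to families $(f_x:A^{D_x}\to A)_{x \in X}$ (resp.\ to $B$) satisfying the minor identity $f_u^{(\pi)} = f_v$ for every minor condition.

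For completeness, a $\yes$-witness $h$ of the VMC instance lifts to the assignment $(\proj{D_x}{h(x)})_{x \in X}$ for $\Psi$, which satisfies the minor identities because $(\proj{D_u}{h(u)})^{(\pi)} = \proj{D_v}{\pi(h(u))} = \proj{D_v}{h(v)}$, and whose payoff in $\bfa$ is
$\Psi^{\bfa}(h') = \sum_{u\in U} \Phi_u^{\bfa}(\proj{D_u}{h(u)}) \geq \sum_{u\in U}(\alpha_u(h(u)) + \delta_u) \geq c + \sum_u \delta_u = c'$.
For soundness, suppose $\Psi^{\bfb}(h') \geq c'\kappa$; the corresponding $(f_x)_{x \in X}$ has each $f_u \in \mnn M^{(D_u)}$ by $\feas(\Phi_u^{\bfb})=\mnn M^{(D_u)}$ and satisfies the minor identities, so the induced function $h$ with $h(x)=f_x$ is a candidate VMC witness. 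Its output payoff computes as
$\sum_{u \in U} \beta_u(f_u) \geq \sum_{u \in U}(\Phi_u^{\bfb}(f_u) - \kappa\delta_u) = \Psi^{\bfb}(h') - \kappa\sum_u \delta_u \geq c'\kappa - \kappa\sum_u \delta_u = c\kappa$,
so $h$ witnesses that the VMC instance is not a $\no$ instance.

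The reduction is clearly polynomial-time provided the canonical formulas $\Phi_u$ and shifts $\delta_u$ can be produced in polynomial time, which requires a polynomial-time version of \cref{prop:canonical-payoff-kazda} analogous to the ``moreover'' clause in \cref{prop:canonical_payoff_formulas}; since $|D_u| \leq k$ and $k$ is fixed, the associated Farkas-style linear program has polynomial size and is solvable in polynomial time. The main subtlety, and the only place where the argument differs essentially from the $(c,s)$-case, is the absence of the awkward ``no $\no$ instance'' hypothesis: because the canonical formula is always available here (condition \labelcref{star-condition-kazda} is in the VMC input by definition), we do not need the fallback construction that produced a $\no$ instance of the target PCSP when the canonical formulas failed to exist.
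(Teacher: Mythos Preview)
Your proposal is correct and follows essentially the same approach as the paper: invoke \cref{prop:canonical-payoff-kazda} for each $u$ using condition~\labelcref{star-condition-kazda}, build $\Psi$ exactly as in the proof of \cref{prop:VMCleqPCSP}, set the new completeness parameter to $c' = c + \sum_u \delta_u$, and verify completeness and soundness by the same inequality chains you wrote. Your added remarks on polynomial-time computability and on why the ``no \no instance'' hypothesis is unnecessary here are accurate elaborations that the paper leaves implicit.
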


\begin{proof}
  By virtue of \labelcref{star-condition-kazda}, we can find a collection of payoff formulas $(\Phi_u)_{u \in U}$ and rationals $\delta_u \geq 0$ as in item \labelcref{item:2:canonical-kazda} of \cref{prop:canonical-payoff-kazda} (for $\Phi_u$ we use $\alpha=\alpha_u$ and $\beta=\beta_u$). We create $\Psi$ in the same way as in the proof of \cref{prop:VMCleqPCSP} and set the PVCSP completeness parameter to $c' = c + \sum_{u \in U} \delta_u$. 

  If a function $h$ witnesses that
  the $\vmc$ instance is a $\yes$ instance, then the assignment $h'$ corresponding to $(\proj{D_x}{h(x)})_{x \in X}$ satisfies the minor conditions  and 
  \begin{align*}
  \Psi^{\bfa} (h') 
  &= \sum_{u \in U} \Phi_u^{\bfa}(\proj{D_u}{h(u)})
  \geq  \sum_{u \in U} (\alpha_u(h(u)) + \delta_u) \\
  &\geq c + \sum_{u \in U} \delta_u = c'.
  \end{align*}
  On the other hand, if $\Psi^{\bfb}(h') \geq \IO c'$, then the corresponding $(h_x)_{x \in X}$ consists of polymorphism that satisfy the minor conditions and
  \begin{align*}
  \sum_{u\in U} \beta_u(h_u) 
  &\geq \sum_{u \in U} (\Phi^{\bfb}_u(h_u) - \IO \delta_u)
  \\ &= \Psi^{\bfb}(h') - \IO \sum_{u \in U} \delta_u 
  \geq \IO c' - \IO \sum_{u \in U} \delta_u \geq \IO c.
\qedhere  \end{align*}
\end{proof}

\end{document}